\documentclass[format=acmsmall, screen=true,nonacm,pdftex,svgnames]{acmart}

\acmJournal{PACMPL}
\acmYear{2027}
\acmVolume{11}
\acmNumber{POPL}
\acmPrice{}
\startPage{1}

     \usepackage{booktabs}   \usepackage{subcaption} \usepackage{bussproofs}
\usepackage[cal=boondoxo]{mathalfa}
\DeclareMathAlphabet{\mathpzc}{OT1}{pzc}{m}{it}
\usepackage{amsmath}
\usepackage{color}
\usepackage{xspace}

\newcommand{\ignore}[1]{{}}

\newcommand{\syntaxDef}[3]{\rulebox{\syntaxKeyword$#1\mathrel{::=}{#2}$ \ifthenelse{\equal{#3}{}}{}{[#3]}}}

\makeatletter
\newcommand{\shorteq}{\settowidth{\@tempdima}{-}\resizebox{\@tempdima}{\height}{=}}
\makeatother

  \newcommand{\dom}[1]{\ensuremath{\mathtt{dom}(#1)}}

 \usepackage{dafny}
\usepackage{xcolor}
\usepackage{colortbl}

\usepackage[utf8]{inputenc}

\usepackage{amsmath}
\usepackage{stmaryrd}
\usepackage{hyperref}

\usepackage{subcaption}
\usepackage{wrapfig}

\usepackage{xspace}
\usepackage{float}
\usepackage{balance}

\usepackage{textcomp} 

\usepackage{changepage}
\usepackage{array,etoolbox}

\preto\tabular{\setcounter{magicrownumbers}{0}}
\newcounter{magicrownumbers}

\usepackage{cleveref} 

\usepackage{wrapfig}
\usepackage{caption}

\usepackage{longtable}
\usepackage{tabu}

\usepackage{wasysym}
\usepackage{mathpartir}

\usepackage{mathtools} \usepackage{xfrac}

\usepackage[qm,braket]{qcircuit}

\newcommand{\qafnyc}{$\text{Qafny}^{\texttt{c}}$\xspace}

\newcommand{\qinv}{\textcolor{teal}{\cn{invariant}}\xspace}

\newcommand{\qass}[2]{{#1}\;{\leftarrow}\;{#2}}

\newcommand{\sfori}[4]{\texttt{{for} }{#1}\texttt{ \(\in\) }[{#2},{#3})\;#4}
\newcommand{\qif}[2]{\texttt{{if}\,(}{#1}\texttt{)\,\{}{#2}\texttt{\}}}
\newcommand{\sifq}[2]{\texttt{if}~\cn{(}{#1}\cn{)}~{#2}}
\newcommand{\qbool}[4]{\cn{(}#1\,\cn{#2}\,#3\cn{)\,@\,}#4}

\newcommand{\Hassert}[1]{\big{\{}\ #1\ \big{\}}}
\colorlet{kwd}{black!80!green}
\definecolor{spec1}{RGB}{78, 131, 162}
\definecolor{spec0}{RGB}{66, 102, 136}
\definecolor{lespec}{RGB}{30, 80, 180}
\colorlet{spec}{lespec}
\colorlet{auto}{lespec!35!lightgray}
\colorlet{stack}{magenta}

\newcommand{\qafny}{{Qafny}\xspace}

\newcommand{\oqasm}{\textsc{oqasm}\xspace}

\newcommand{\qbricks}{{QBricks}\xspace}

\newcommand{\myparagraph}[1]{\noindent\paragraph{\textbf{#1}}}

\newcommand{\cmode}{\texttt{C}}

\newcommand{\qmode}[1]{\texttt{Q}(#1)}

\newcommand{\slen}[1]{|#1|}

\newcommand{\CC}{\ensuremath{\mathbb{C}}\xspace}

\usepackage{pgf}

\usepackage{tikz} \usetikzlibrary{arrows,shapes.misc,shapes.geometric, shapes.arrows,shapes.callouts,
  shapes.gates.logic.US,
  chains,matrix,positioning,scopes,decorations.pathmorphing,decorations.text,
  decorations.pathreplacing, shadows,automata,
  fit, calc, arrows.meta
}

\tikzset{ machine/.style={
rectangle,
minimum width=25mm,
    minimum height=18mm,
    text width=24mm,
align=center,
very thick,
    draw=black,
color=black,
    fill=white,
}
}

\DeclarePairedDelimiter\abs{\lvert}{\rvert}
\DeclarePairedDelimiter\norm{\lVert}{\rVert}

\makeatletter
\let\oldabs\abs
\def\abs{\@ifstar{\oldabs}{\oldabs*}}
\let\oldnorm\norm
\def\norm{\@ifstar{\oldnorm}{\oldnorm*}}
\makeatother

\newcommand{\ie}{\emph{i.e.,}\xspace}

\usepackage{booktabs}

\usepackage{alltt}
\usepackage{lstcoq}
\definecolor{ltblue}{rgb}{0,0.4,0.4}
\definecolor{dkblue}{rgb}{0,0.1,0.6}
\definecolor{dkgreen}{rgb}{0,0.35,0}
\definecolor{dkviolet}{rgb}{0.3,0,0.5}
\definecolor{dkred}{rgb}{0.5,0,0}
\lstdefinestyle{customc}{
  belowcaptionskip=1\baselineskip,
  breaklines=true,
  breakatwhitespace, numbers=left,
  xleftmargin=\parindent,
  language=C,
  columns=flexible,      
  showstringspaces=false,
  basicstyle=\small\sffamily,
  otherkeywords={uint,uintptr_t,let,assert},
  literate={{<-}{{$\leftarrow\,$}}2
            {->}{{$\rightarrow\,$}}2
            {<=}{{$\leq\,$}}2},
  numberstyle=\tiny\rmfamily,
  keywordstyle=\bfseries\color{green!40!black},
  commentstyle=\itshape\color{purple!40!black},
  identifierstyle=\color{blue!80!black},
}
\usepackage[export]{adjustbox}

\newcommand{\code}[1]{{\small\texttt{#1}}}

\newcommand{\rulelab}[1]{{\small \textsc{#1}}}

\newcommand{\cfor}[4]{\texttt{{for}}\,{#1}\,\texttt{\(\in\)}\,[{#2},{#3})\,{#4}}
\newcommand{\eexp}[1]{\texttt{exp}(2\pi i\,#1)}
\newcommand{\sexp}[3]{\texttt{let}~{#1}\,\texttt{=}\,{#2}~\texttt{in}~{#3}}

\newcommand{\sskip}{\texttt{\{\}}}

\newcommand{\tch}[1]{\texttt{EN}({#1})}
\newcommand{\tcht}{\texttt{EN}}

\newcommand{\ssum}[3]{\Msum_{#1}^{#2}{#3}}

\DeclareMathOperator*{\Motimes}{\text{\raisebox{0.25ex}{\scalebox{0.8}{$\bigotimes$}}}}
\DeclareMathOperator*{\Msum}{\text{\raisebox{0.25ex}{\scalebox{0.8}{$\sum$}}}}
\newcommand{\frz}{\mathpzc{F}}
\newcommand{\mea}{\mathpzc{M}}
\newcommand{\ufrz}{\mathpzc{U}}

\newcommand{\qpule}[4]{#1\vdash\textcolor{spec}{\big{\{}#2\big{\}}}\, #3 \,\textcolor{spec}{\big{\{}#4\big{\}}}}

\newcommand{\cif}[3]{\texttt{if} \; #1 \; \THEN\; #2 \;\ELSE\; #3 \,\cn{fi}}
\newcommand{\triple}[3]{\textcolor{spec}{\big{\{}#1\big{\}}}\, #2 \,\textcolor{spec}{\big{\{}#3\big{\}}}}
\newcommand{\qhltriple}[3]{\textcolor{purple}{\big{\{}#1\big{\}}}\, #2 \,\textcolor{purple}{\big{\{}#3\big{\}}}}
\newcommand{\slambda}[2]{\lambda \,{#1}\textbf{.}\, {#2}}

\newcommand{\ssassign}[3]{{#1} \xleftarrow{#2} {#3}}
\newcommand{\cassign}[2]{{#1} := {#2}}
\newcommand{\qassign}[2]{{#1} \leftarrow {#2}}

\newcommand{\smea}[1]{\texttt{measure}\cn{(}#1\cn{)}}

\newcommand{\sminus}{\texttt{-}}
\newcommand{\splus}{\texttt{+}}

\newcommand{\srange}[2]{[#1,#2)}

\newcommand{\sseq}[2]{{#1}\,\texttt{;}\,{#2}}

\newcommand{\insttwo}[4][ ]{\texttt{#2}^{#1}~{#3}~{#4}}

\newcommand{\iqft}[3][ ]{\insttwo[#1]{QFT}{#2}{#3}}

\newcommand{\ihadh}{\texttt{H}}

\newcommand{\iassign}[2]{{#1}\leftarrow{#2}}

\newcommand{\cn}[1]{\texttt{#1}}
\newcommand{\bil}{\big{ \{ }}
\newcommand{\bir}{\big{ \} }}

\def\keyfont#1{\texttt{#1}}

\newcommand{\THEN}{\keyfont{then}\xspace}
\newcommand{\ELSE}{\keyfont{else}\xspace}

\newcommand{\WHILE}{\keyfont{while}\xspace}

\newcommand{\denote}[1]{\llbracket #1 \rrbracket\xspace}

\newcommand{\qifb}[3]{\texttt{{qif}\,}{[#1]}\texttt{\,}{(\ket{0}\to #2)}\,\Box\,(\ket{1}\to #3)\,\cn{fiq}}

\usepackage{newunicodechar}
\let\Alpha=A
\let\Beta=B
\let\Epsilon=E
\let\Zeta=Z
\let\Eta=H
\let\Iota=I
\let\Kappa=K
\let\Mu=M
\let\Nu=N
\let\Omicron=O
\let\omicron=o
\let\Rho=P
\let\Tau=T
\let\Chi=X

\newunicodechar{Α}{\ensuremath{\Alpha}}
\newunicodechar{α}{\ensuremath{\alpha}}
\newunicodechar{Β}{\ensuremath{\Beta}}
\newunicodechar{β}{\ensuremath{\beta}}
\newunicodechar{Γ}{\ensuremath{\Gamma}}
\newunicodechar{γ}{\ensuremath{\gamma}}
\newunicodechar{Δ}{\ensuremath{\Delta}}
\newunicodechar{δ}{\ensuremath{\delta}}
\newunicodechar{Ε}{\ensuremath{\Epsilon}}
\newunicodechar{ε}{\ensuremath{\epsilon}}
\newunicodechar{ϵ}{\ensuremath{\varepsilon}}
\newunicodechar{Ζ}{\ensuremath{\Zeta}}
\newunicodechar{ζ}{\ensuremath{\zeta}}
\newunicodechar{Η}{\ensuremath{\Eta}}
\newunicodechar{η}{\ensuremath{\eta}}
\newunicodechar{Θ}{\ensuremath{\Theta}}
\newunicodechar{θ}{\ensuremath{\theta}}
\newunicodechar{ϑ}{\ensuremath{\vartheta}}
\newunicodechar{Ι}{\ensuremath{\Iota}}
\newunicodechar{ι}{\ensuremath{\iota}}
\newunicodechar{Κ}{\ensuremath{\Kappa}}
\newunicodechar{κ}{\ensuremath{\kappa}}
\newunicodechar{Λ}{\ensuremath{\Lambda}}
\newunicodechar{λ}{\ensuremath{\lambda}}
\newunicodechar{Μ}{\ensuremath{\Mu}}
\newunicodechar{μ}{\ensuremath{\mu}}
\newunicodechar{Ν}{\ensuremath{\Nu}}
\newunicodechar{ν}{\ensuremath{\nu}}
\newunicodechar{Ξ}{\ensuremath{\Xi}}
\newunicodechar{ξ}{\ensuremath{\xi}}
\newunicodechar{Ο}{\ensuremath{\Omicron}}
\newunicodechar{ο}{\ensuremath{\omicron}}
\newunicodechar{Π}{\ensuremath{\Pi}}
\newunicodechar{π}{\ensuremath{\pi}}
\newunicodechar{ϖ}{\ensuremath{\varpi}}
\newunicodechar{Ρ}{\ensuremath{\Rho}}
\newunicodechar{ρ}{\ensuremath{\rho}}
\newunicodechar{ϱ}{\ensuremath{\varrho}}
\newunicodechar{Σ}{\ensuremath{\Sigma}}
\newunicodechar{σ}{\ensuremath{\sigma}}
\newunicodechar{ς}{\ensuremath{\varsigma}}
\newunicodechar{Τ}{\ensuremath{\Tau}}
\newunicodechar{τ}{\ensuremath{\tau}}
\newunicodechar{Υ}{\ensuremath{\Upsilon}}
\newunicodechar{υ}{\ensuremath{\upsilon}}
\newunicodechar{Φ}{\ensuremath{\Phi}}
\newunicodechar{φ}{\ensuremath{\phi}}
\newunicodechar{ϕ}{\ensuremath{\varphi}}
\newunicodechar{Χ}{\ensuremath{\Chi}}
\newunicodechar{χ}{\ensuremath{\chi}}
\newunicodechar{Ψ}{\ensuremath{\Psi}}
\newunicodechar{ψ}{\ensuremath{\psi}}
\newunicodechar{Ω}{\ensuremath{\Omega}}
\newunicodechar{ω}{\ensuremath{\omega}}

\newunicodechar{ℕ}{\ensuremath{\mathbb{N}}}
\newunicodechar{∅}{\ensuremath{\emptyset}}

\newunicodechar{∙}{\ensuremath{\bullet}}
\newunicodechar{≈}{\ensuremath{\approx}}
\newunicodechar{≅}{\ensuremath{\cong}}
\newunicodechar{≡}{\ensuremath{\equiv}}
\newunicodechar{≤}{\ensuremath{\le}}
\newunicodechar{≥}{\ensuremath{\ge}}
\newunicodechar{≠}{\ensuremath{\neq}}
\newunicodechar{∀}{\ensuremath{\forall}}
\newunicodechar{∃}{\ensuremath{\exists}}
\newunicodechar{±}{\ensuremath{\pm}}
\newunicodechar{∓}{\ensuremath{\pm}}
\newunicodechar{·}{\ensuremath{\cdot}}
\newunicodechar{⋯}{\ensuremath{\cdots}}
\newunicodechar{…}{\ensuremath{\ldots}}
\newunicodechar{∷}{~\mathrel{:\!\!\!:}~}
\newunicodechar{×}{\ensuremath{\times}}
\newunicodechar{∞}{\ensuremath{\infty}}
\newunicodechar{→}{\ensuremath{\to}}
\newunicodechar{←}{\ensuremath{\leftarrow}}
\newunicodechar{⇒}{\ensuremath{\Rightarrow}}
\newunicodechar{↦}{\ensuremath{\mapsto}}
\newunicodechar{↝}{\ensuremath{\leadsto}}
\newunicodechar{∨}{\ensuremath{\vee}}
\newunicodechar{∧}{\ensuremath{\wedge}}
\newunicodechar{⊢}{\ensuremath{\vdash}}
\newunicodechar{⊣}{\ensuremath{\dashv}}
\newunicodechar{∣}{\ensuremath{\mid}}
\newunicodechar{∈}{\ensuremath{\in}}
\newunicodechar{⊆}{\ensuremath{\subseteq}}
\newunicodechar{⊂}{\ensuremath{\subset}}
\newunicodechar{∪}{\ensuremath{\cup}}
\newunicodechar{⋓}{\ensuremath{\Cup}}
\newunicodechar{∉}{\ensuremath{\not\in}}
\newunicodechar{√}{\ensuremath{\sqrt}}

\newunicodechar{⊸}{\ensuremath{\multimap}}
\newunicodechar{⊗}{\ensuremath{\otimes}}
\newunicodechar{⨂}{\ensuremath{\bigotimes}}
\newunicodechar{⊕}{\ensuremath{\oplus}}
\newunicodechar{〈}{\ensuremath{\langle}}
\newunicodechar{⟨}{\ensuremath{\langle}}
\newunicodechar{⟩}{\ensuremath{\rangle}}
\newunicodechar{〉}{\ensuremath{\rangle}}
\newunicodechar{¡}{\ensuremath{\upsidedownbang}}
\newunicodechar{∘}{\ensuremath{\circ}}
\newunicodechar{†}{\ensuremath{\dagger}}
\newunicodechar{⊤}{\ensuremath{\top}}
\newunicodechar{⊥}{\ensuremath{\bot}}

\newunicodechar{〚}{\ensuremath{\llbracket}}
\newunicodechar{〛}{\ensuremath{\rrbracket}}

\usepackage{etoolbox} \newtoggle{comments}
\toggletrue{comments}
\usepackage[normalem]{ulem}

\iftoggle{comments}{
  \newcommand{\fixme}[1]{\textbf{\textcolor{red}{[ Fixme: #1]}}}
  \newcommand{\todo}[1]{\textbf{\textcolor{green}{[ TODO: #1 ]}}}
}{
  \newcommand{\fixme}[1]{}
  \newcommand{\todo}[1]{}
}
 
\usepackage{pifont}\usepackage{multicol,tabularx,capt-of}
\usepackage{multirow}

\begin{document}

\title{Connecting Quantum Proof Systems and Hoare Logic}                      

\def\titlerunning{Connecting Quantum Proof Systems and Hoare Logic}
\def\authorrunning{L. Li, Z. Tagba, A. Potanin}

\author{Liyi Li}
\affiliation{
  \institution{Iowa State University}
  \country{USA}
}
\email{liyili2@iastate.edu}

\author{Zoukarneini Difaizi Tagba}
\affiliation{
  \institution{Iowa State University}
  \country{USA}
}
\email{difaizi@iastate.edu}

\author{Alex Potanin}
\affiliation{
  \institution{Australian National University}
  \country{Australia}
}
\email{alex.potanin@anu.edu.au}

\begin{abstract}
 Verifying quantum programs today requires specialized proof systems built on interactive theorem provers such as Rocq, Lean, Isabelle, or Why3, and even modest quantum algorithms cost substantial human effort to verify. Because quantum programs cannot be effectively checked by runtime testing, formal verification is the only viable path to correctness, making the cost of these bespoke systems the chief barrier to scalable quantum software assurance. We show that verifying a quantum program should be no harder than verifying a classical array program, a paradigm commonly used in parallel programming: the Qafny quantum proof system translates into Hoare logic with array theory completely (every Qafny proof becomes a valid Hoare triple) and soundly (a valid translated triple certifies only true quantum specifications), and we mechanize the completeness direction in Rocq. The embedding determinizes quantum proof obligations and exposes the full machinery of classical automated verifiers (e.g. Dafny and SMT-based separation logic) to quantum code, removing the need for dedicated quantum proof system infrastructure. In particular, the embedding supplies the correctness argument that the existing compilation of Qafny into Dafny lacks.
 \end{abstract}

\maketitle

\begin{CCSXML}
<ccs2012>
<concept>
<concept_id>10003752.10003790.10011740</concept_id>
<concept_desc>Theory of computation~Hoare logic</concept_desc>
<concept_significance>500</concept_significance>
</concept>
<concept>
<concept_id>10003752.10003790.10002990</concept_id>
<concept_desc>Theory of computation~Logic and verification</concept_desc>
<concept_significance>500</concept_significance>
</concept>
<concept>
<concept_id>10011007.10011074.10011099.10011102.10011103</concept_id>
<concept_desc>Software and its engineering~Software verification</concept_desc>
<concept_significance>300</concept_significance>
</concept>
</ccs2012>
\end{CCSXML}

\ccsdesc[500]{Theory of computation~Hoare logic}
\ccsdesc[500]{Theory of computation~Logic and verification}
\ccsdesc[300]{Software and its engineering~Software verification}

\keywords{Quantum Computing, Automated Verification, Hoare Logic, Separation Logic}

\section{Introduction}
\label{sec:intro}

Quantum computers aim to solve highly complex problems by leveraging quantum superposition, which enables substantially faster algorithms, such as Shor's algorithm, than classical algorithms.
Because of the exponential state space of quantum superpositions and the inherently probabilistic, nondeterministic nature of quantum systems, traditional validation techniques based on run-time testing are virtually impossible to develop for large quantum programs.
Many frameworks have been proposed to verify quantum programs \cite{qbricks,VOQC} using interactive theorem provers, such as Isabelle, Rocq, and Why3, while others have proposed some proof automation by creating proof systems \cite{qhoreusage,qhoare,qsepa,qseplocal,10.1145/3290346,li2024qafnyquantumprogramverifier,10.1145/3763157}, but quantum program verification is still time-consuming.

While quantum computation might be more powerful than classical computation, an interesting question is whether verifying quantum programs is more difficult than verifying classical programs. The key is to establish an effective compilation from quantum program verification to a verification problem in a classical framework, thereby relating quantum and classical operations and providing a classical analogy for quantum programs.

In this paper, we propose a set of classical aggregate operations as the target for compiling quantum operations.
Such operations are well-known in higher-order functional programming and are associated with a large and effective verification library.
Based on these operations, we transform the \qafny quantum proof system to classical Hoare logic, and illustrate the same approach for Quantum Hoare Logic (QHL) \cite{10.1145/3456877} to show its generality.
While these classical operations might be computationally expensive, their verification might not be expensive, i.e., it suffices to verify correctness by examining a symbolic representative value of the quantum state after applying a quantum operation. When one applies a quantum operation to a quantum state that can be thought of as a set of classical state values, it cannot manipulate a singleton classical state value without affecting the others. In fact, the manipulation of the operation has a similar effect across different classical state values, which can be summarized by a symbolic state representing the quantum state.

\begin{figure}[h]
  \includegraphics[width=0.9\textwidth]{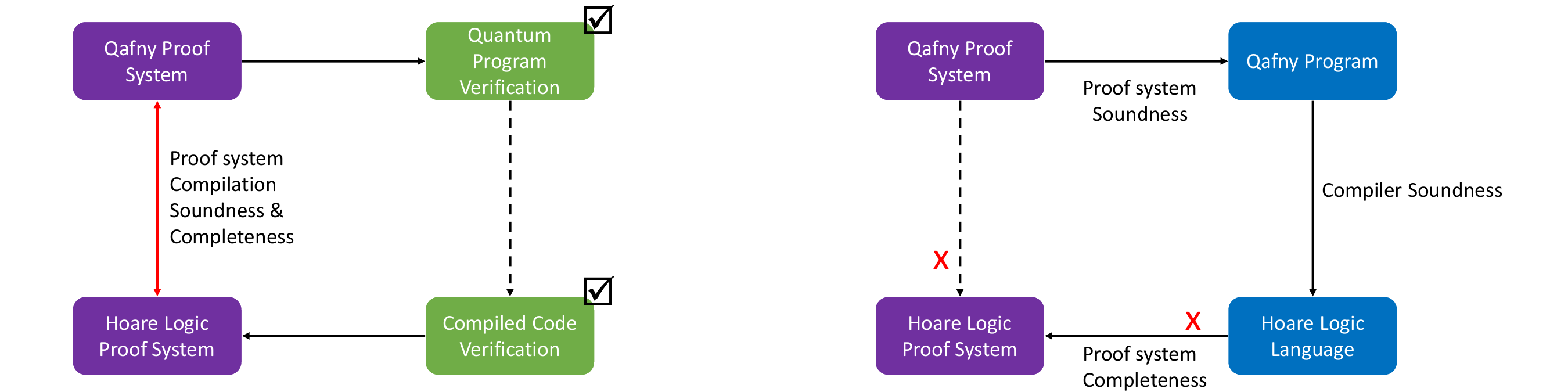}
  \caption{Two different kinds of soundness and completeness theorems (ours on the left).}
\label{fig:arch}
\end{figure}

To connect quantum and classical proof systems, we develop a new methodology for relating them. In verifying the transformation from quantum proof systems to classical Hoare logic, we develop soundness and completeness theorems based on such theorems in the compilation sense (\Cref{fig:arch}), rather than relying on the proof system soundness and relative completeness, which relate proof systems and their underlying language semantics.
The right demonstrates the flow of what would happen by using proof system soundness and relative completeness for our transformation, which fails in place of utilizing the Hoare logic relative completeness to transform a semantic system back to logical assertions, because relative completeness cannot suggest the existence of an effective program verification via the transformation.

Essentially, the main issue arises at predicate transformation.
Consider a simple quantum operation ($\qassign{\overline{q}}{\slambda{x}{\ket{x+3}}}$) to add $3$ on each basis-vector of a quantum state $\varphi$ for $\overline{q}$, where $\varphi$ is a superposition state $\sum_j\alpha_j\ket{j}$. When performing verification, users also need to support a logical post-condition $\overline{q}=\textcolor{spec}{\sum_j\alpha_j\ket{j+3}}$, where $\textcolor{spec}{\sum_j\alpha_j\ket{j+3}}$ is an expression for describing the quantum state $\varphi$ after the application, resulting in a quantum state $\varphi'$.
The post-condition transformation based on the right side of \Cref{fig:arch} is listed below.

{\small
\[
\overline{q}=\textcolor{spec}{\sum_j\alpha_j\ket{j+3}}
\to 
\{\overline{q}:\varphi'\}
\to 
\{\overline{q}:\cn{trans}(\varphi')\}
\to
\overline{q} = \cn{list}(\cn{trans}(\varphi'))
\]
}

The first transformation above transforms the post-condition to a post-state, a map structure mapping from $\overline{q}$ to a quantum state $\varphi'$. Writing down the state can potentially be exponential.
The next step is to transform the state by using some classical state structure ($\cn{trans}$), e.g., lists or arrays.
The last step turns the classical state structure back into a predicate, yielding a possibly exponential listing of different singleton states in the quantum superposition state $\varphi'$.

To connect quantum and classical proof systems properly, as in the left of \Cref{fig:arch}, we develop completeness and soundness theorems for the translation itself, in the style of compiler correctness, to relate the validity of the two systems; i.e., the translation completeness theorem states that everything derivable in the quantum proof system is also derivable in Hoare logic, while the translation soundness theorem states that if the translated program is derivable in Hoare logic, then the quantum proof system also derives it, up to semantic equivalence of the assertions.

\subsection{Contributions and Roadmap}

\qafny \cite{li2024qafnyquantumprogramverifier} proposes a quantum proof system, and \qafnyc \cite{10.1145/3763157} realizes it by compiling \qafny verification conditions into Dafny, reporting 37 verified programs; that compilation, however, comes with no correctness argument.
This paper supplies the missing foundation, developed for \qafny and illustrated for QHL: every quantum program verification via the \qafny proof system can be viewed as a verification in the Hoare logic system, which is our translation completeness theorem. Translation soundness adds that a valid translated Hoare triple certifies only specifications that \qafny itself derives, so the transformation from \qafny to Hoare logic does not produce garbage.
Below are the detailed contributions.

\begin{itemize}
\item We invent an intermediate Hoare logic IR (HLIR) as the target language. We then develop a transformation from a quantum proof system to Hoare logic based on HLIR. We demonstrate the transformation by compiling the \qafny proof system to Hoare logic with array theory (HLIR), demonstrating that quantum program verification can be carried out via a classical Hoare logic proof system.

\item We prove translation completeness and translation soundness theorems for the compilation, establishing a theoretical foundation: every quantum program verification can be done effectively in Hoare logic, indicating that quantum program verification is no harder than classical program verification. 

\item We also examine tensor-based separation conjunctions in \qafny as well as other proof systems \cite{YING2026105417,qseplocal}. We show that these quantum separation conjunctions are pure, in the sense that they can be replaced by classical conjunctions.
\end{itemize}

 \section{Background}
\label{sec:background}

\noindent\textbf{\textit{Quantum Data}.} A quantum datum consists of one or more quantum bits (\emph{qubits}), which can be expressed as a two-dimensional vector $\begin{psmallmatrix} z_1 \\ z_2 \end{psmallmatrix}$ where the \emph{amplitudes} $z_1$ and $z_2$ are complex numbers and $|z_1|^2 + |z_2|^2 = 1$.
We frequently write the qubit vector as $\textcolor{spec}{z_1\ket{0} + z_2\ket{1}}$ (the Dirac notation \cite{Dirac1939ANN}), where $\ket{0} = \begin{psmallmatrix} 1 \\ 0 \end{psmallmatrix}$ and $\ket{1} = \begin{psmallmatrix} 0 \\ 1 \end{psmallmatrix}$ are \emph{computational basis-kets}. When both $z_1$ and $z_2$ are non-zero, we can think of the qubit being ``both 0 and 1 at once,'' a.k.a. in a \emph{superposition} \cite{mike-and-ike}, e.g., $\frac{1}{\sqrt{2}}(\ket{0} + \ket{1})$ represents a superposition of $\ket{0}$ and $\ket{1}$.
Larger quantum data can be formed by composing smaller ones with the \emph{tensor product} ($\otimes$) from linear algebra, e.g., the two-qubit datum $\ket{0} \otimes \ket{1}$ (also written as $\ket{01}$) corresponds to the vector $[~0~1~0~0~]^T$.
However, many multi-qubit data cannot be \emph{separated} and expressed as the tensor product of smaller data; such inseparable data are called \emph{entangled}, e.g.\/ $\frac{1}{\sqrt{2}}(\ket{00} + \ket{11})$, known as a \emph{Bell pair}, which can be rewritten to $\Msum_{d=0}^{1}{\frac{1}{\sqrt{2}}}{\ket{dd}}$, where $dd$ is a bit string consisting of two bits, each of which must be the same datum (i.e., $d=0$ or $d=1$). Each term $\frac{1}{\sqrt{2}}\ket{dd}$ is named a \emph{basis-ket} \cite{mike-and-ike}, consisting of an amplitude ($\frac{1}{\sqrt{2}}$) and a basis vector $\ket{dd}$.
     
\noindent\textbf{\textit{Quantum Computation and Measurement.}} 
Computation on a quantum datum consists of a series of \emph{quantum operations}, each acting on a subset of qubits in the quantum datum. In the standard form, quantum computations are expressed as \emph{circuits}, as in \Cref{fig:background-circuit-examplea}, which depicts a circuit that prepares the Greenberger-Horne-Zeilinger (GHZ) state \cite{Greenberger1989} --- an $n$-qubit entangled datum of the form: $\ket{\text{GHZ}^n} = \frac{1}{\sqrt{2}}(\ket{0}^{\otimes n}+\ket{1}^{\otimes n})$, where $\ket{d}^{\otimes n}=\Motimes_{d=0}^{n\sminus 1}\ket{d}$.
In these circuits, each horizontal wire represents a qubit, and boxes on these wires indicate quantum operations, or \emph{gates}. The circuit in \Cref{fig:background-circuit-examplea} uses $n$ qubits and applies $n$ gates: a \emph{Hadamard} (\coqe{H}) gate and $n-1$ \emph{controlled-not} (\coqe{CNOT}) gates. Applying a gate to a quantum datum \emph{evolves} it.
Its traditional semantics is expressed by multiplying the datum's vector form by the gate's corresponding matrix representation: $n$-qubit gates are $2^n$-by-$2^n$ matrices.
Except for measurement gates, a gate's matrix must be \emph{unitary} and thus preserve appropriate invariants of quantum data's amplitudes. 
A \emph{measurement} operation extracts classical information from a quantum datum. It collapses the datum to a basis state with a probability related to the datum's amplitudes (\emph{measurement probability}), e.g., measuring $\frac{1}{\sqrt{2}}(\ket{0} + \ket{1})$ collapses the datum to $\ket{0}$ with probability $\frac{1}{2}$,  and likewise for $\ket{1}$, returning classical value $0$ or $1$, respectively. 
A more general form of quantum measurement is \emph{partial measurement}, which measures a subset of qubits in a qubit array;
such operations often have simultaneity effects due to entanglement, \ie{} in a Bell pair $\frac{1}{\sqrt{2}}(\ket{00} + \ket{11})$, measuring one qubit guarantees the same outcome for the other --- if the first bit is measured as $0$, the second bit is too.

\noindent\textbf{\textit{High-level Operation Abstraction.}}
Controlled quantum gates, such as controlled-not gates (\coqe{CNOT}), can be thought of as quantum versions of classical conditionals, called quantum conditionals, where we view a quantum datum as an array of basis-kets and apply an array map corresponding to the classical operation to each basis-ket.
This is evident when using Dirac notation. 
For example, in preparing a two-qubit GHZ state (\Cref{fig:background-circuit-examplea}, also a Bell pair) for qubit array $x$, the \coqe{H} gate evolves the datum to
$\frac{1}{\sqrt{2}}\ket{00}+\frac{1}{\sqrt{2}}\ket{10}$ (same as $\frac{1}{\sqrt{2}}(\ket{0} + \ket{1})\otimes \ket{0}$).
The quantum conditional maps the classical conditional $\qif{x[0]}{\qass{x[1]}{x[1] \,\splus\, 1}}$ onto the two basis-kets, where the operation $x[1]\, \splus\, 1$ acts as a modulo 2 addition to flip $x[1]$'s bit.
Here, we do not flip the $x[1]$ position in the first basis-ket ($\frac{1}{\sqrt{2}}\ket{00}$) due to $x[0]=0$, and we flip $x[1]$ in the second basis-ket because of $x[0]=1$.

\begin{wrapfigure}{r}{2.8cm}
            {\qquad
              \footnotesize
              \Qcircuit @C=0.5em @R=0.5em {
                \lstick{\ket{0}} & \gate{H} & \ctrl{1} & \qw &\qw & & \dots & \\
                \lstick{\ket{0}} & \qw & \targ & \ctrl{1} & \qw & &  \dots &  \\
                \lstick{\ket{0}} & \qw & \qw   & \targ & \qw & &  \dots &  \\
                & \vdots &   &  &  & & & \\
                & \vdots &  & \dots & & & \ctrl{1} & \qw  \\
                \lstick{\ket{0}} & \qw & \qw & \qw &\qw &\qw & \targ & \qw
              }
            }
            \caption{GHZ Circuit}
            \label{fig:background-circuit-examplea}
\end{wrapfigure} 

Another abstraction is quantum oracle operations.
Quantum algorithms manipulate input information encoded in ``oracles,'' which are callable black-box circuits. 
Quantum oracles are usually quantum-reversible implementations of classical operations, especially \emph{arithmetic operations}.
The behavior of a quantum arithmetic operation resembles a high-order array map operation in \Cref{fig:onetime-hw} line (7) over a single arithmetic operation applied to every basis-ket of a quantum datum.
\oqasm \cite{oracleoopsla} is a language that permits the definitions of quantum oracles.

\noindent\textbf{\textit{Hoare Logic with Array Theory.}} Hoare logic (also known as Floyd–Hoare logic or Hoare rules) \cite{HoareLogic,Gordon12backgroundreading} is a formal system with a set of logical rules for reasoning rigorously about the correctness of computer programs. The central feature of Hoare logic is the Hoare triple. A triple describes how the execution of a piece of code changes the state of the computation. A Hoare triple is of the form $\triple{P}{C}{Q}$ where $P$ and $Q$ are assertions and $C$ is a C-like program. Here, $P$ is named the pre-condition and $Q$ the post-condition.  When the pre-condition is met, executing the command establishes the post-condition. Assertions are formulae in predicate logic.

{
\[
\triple{P[f(A)/A_1]}{A_1 := f(A)}{P}
\qquad
\triple{P[A\{E_1 \leftarrow E_2\}/A]}{A[E_1] := E_2}{P}
\]
}

The left-hand side shows the key assignment rule in Hoare logic, which establishes the relation between the post-condition $P$ and the pre-condition $P[f(A)/A_1]$ for the assignment operation $A_1 := f(A)$. The variables $A$ and $A_1$ are not restricted to any particular data structure, as they certainly can represent array variables.
To verify array assignments, we can define the operation $A[E_1]:= E_2$, i.e., the index $E_1$ of the array $A$ is assigned to $E_2$. As the proof rule on the right-hand above, starting from the post-condition $P$, the pre-condition $P[A\{E_1 \leftarrow E_2\}/A]$ is a replacement of every array variable $A$ with the updated array $A\{E_1 \leftarrow E_2\}$, i.e., we assume the $\leftarrow$ operation updates the $E_1$ position with $E_2$ in $A$.

We assume some useful predicates and conventions to capture the properties of array indexing.
First, we use $\overline{R}$ to refer to an array of elements having type $R$, $\cn{range}(\overline{m})$ refers to a list of ranges $[0,m_0),...,[0,m_{n\sminus1})$ for some $n$, and $\overline{k} \in \cn{range}(\overline{m'})$ means $k_0\in[0,m_0),...,k_{n\sminus 1}\in[0,m_{n\sminus 1})$.
Second, the predicate $Q_{\cn{len}}(A,\overline{m})$ ensures that $A$, a multi-dimensional array, has lengths in different dimensions as $\overline{m}$, and we refer $Q_{\cn{len}}(A_0,...,A_n,\overline{m})$ to that the arrays $A_0$,..,$A_n$ all have lengths $\overline{m}$.
Third, we refer $\overline{m}\sqsubseteq \overline{m'}$ to that $\overline{m}$ is a prefix of $\overline{m'}$, i.e., if $\overline{m}$ has $n$ terms and $\overline{m'}$ has $n+k$ terms, then $\overline{m'}=m_0,...,m_{n\sminus 1},m'_{n},...,m'_{n\splus k\sminus 1}$.

The following lemma links a pointwise specification of an array-producing statement to the assignment rule: once a statement is known to fill $A'$ with $f(A[\overline{k}])$ at every index, any post-condition $P$ can be pushed backwards through it by substituting the pointwise-updated array for $A'$, exactly as the assignment rule does for a scalar. It is the step that turns each HLIR primitive's loop into a single substitution in \Cref{sec:trans}.

\begin{lemma}[Pointwise Update]\label[lemma]{cor:link}\rm
Given $Q \triangleq Q_{\cn{len}}(A,\overline{m})\wedge Q_{\cn{len}}(A',\overline{m'})$, $\overline{m}\sqsubseteq \overline{m'}$, and
\[\triple{Q}{A' := \varphi(A)}{\forall \overline{k} \in \cn{range}(\overline{m}) . A'[\overline{k}]=f(A[\overline{k}])},\]
we have
\[\triple{P[A'\{\forall \overline{k}\in\cn{range}(\overline{m}). \overline{k} \mapsto f(A[\overline{k}])\}/A'] \wedge Q}{A' := \varphi(A)}{P\wedge Q}.\]
\end{lemma}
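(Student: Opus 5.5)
We argue semantically, through the standard soundness reading of Hoare triples (partial correctness) and the rule of consequence. We also need one frame-like fact about $A' := \varphi(A)$: it modifies only $A'$ and leaves $A$ and every other variable unchanged. That is the only sense in which it behaves like an assignment. Let $A^\star$ abbreviate the array $A'\{\forall \overline{k}\in\cn{range}(\overline{m}).\, \overline{k}\mapsto f(A[\overline{k}])\}$.

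\textbf{Step 1.} Fix a state $s$ with $s \models P[A^\star/A'] \wedge Q$, and let $s'$ be the state after running $A' := \varphi(A)$ from $s$. Since $s\models Q$, the hypothesis triple gives
\[
s'\models \forall \overline{k}\in\cn{range}(\overline{m}).\, A'[\overline{k}] = f(A[\overline{k}]).
\]
Because $A$ is not written, $s'(A)=s(A)$, so the values $f(A[\overline{k}])$ are the same in $s$ and in $s'$.

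\textbf{Step 2.} We want $s'(A') = \denote{A^\star}_s$ as arrays. On indices in $\cn{range}(\overline{m})$ this follows from Step 1. For the remaining indices of $\cn{range}(\overline{m'})$, which exist because $\overline{m}\sqsubseteq\overline{m'}$, equality is not determined by the hypothesis as stated. We therefore read the post-condition of the given triple as the full specification of $A'$: the statement leaves the other entries of $A'$ as they were, or equivalently $P$ only inspects $A'$ on $\cn{range}(\overline{m})$. Under that reading $s'(A')$ and $\denote{A^\star}_s$ agree on all of $\cn{range}(\overline{m'})$. Both also have lengths $\overline{m'}$, using $Q_{\cn{len}}$ in $s$ and the fact that the update operator preserves length. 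By extensionality of array theory, $s'(A') = \denote{A^\star}_s$.

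\textbf{Step 3.} The substitution lemma of first-order logic over arrays gives $s\models P[A^\star/A']$ iff $s[A'\mapsto \denote{A^\star}_s]\models P$. By Step 2 and the frame fact, the state on the right is exactly $s'$, so $s'\models P$.

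\textbf{Step 4.} It remains to show $s'\models Q$. $Q_{\cn{len}}(A,\overline{m})$ holds because $A$ is unchanged. $Q_{\cn{len}}(A',\overline{m'})$ holds because Step 2 identified $s'(A')$ with a length-$\overline{m'}$ array. Combining Steps 3 and 4 yields the conclusion. Equivalently, the triple is derived syntactically by applying the assignment axiom to the ghost assignment $A' := A^\star$ and transferring it with the rule of consequence.

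The main obstacle is Step 2: justifying agreement of $A'$ outside $\cn{range}(\overline{m})$ when $\overline{m}$ is a proper prefix of $\overline{m'}$. We would first try to settle it with the frame reading, namely that HLIR statements $A' := \varphi(A)$ write only the indexed entries. Failing that, we would restrict $P$ to depend on $A'$ only through $\cn{range}(\overline{m})$, which is the setting in which the lemma is later used.
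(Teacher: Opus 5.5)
Your route is the paper's idea made semantic. The paper applies the assignment rule to $A' := \varphi(A)$ and uses the hypothesis to identify the assigned value with the pointwise-updated array, so the substituted $P$ collapses to $P$. You instead show that the post-state value of $A'$ equals the pre-state value of $A^\star$, then invoke the substitution lemma. Yours is the more careful version, since in HLIR $A' := \cn{mapN}(\ldots)$ abbreviates a loop to which the assignment axiom does not literally apply.

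However, the obstacle you single out in Step 2 comes from misreading $\sqsubseteq$. It is a prefix relation on the \emph{list of dimension bounds}. So $\overline{k}\in\cn{range}(\overline{m})$ indexes only the leading $|\overline{m}|$ dimensions of $A'$, and $A'[\overline{k}]$ is the entire sub-array over the trailing dimensions. This is exactly the situation in \cn{copyN} and \cn{upDim}, where $f$ returns $\cn{copy}(\ldots)$ arrays. Hence no index of $\cn{range}(\overline{m'})$ is left undetermined: after the statement, $A'[\overline{k},\overline{l}]=f(A[\overline{k}])[\overline{l}]$ for every full index $(\overline{k},\overline{l})$. Moreover, given $Q_{\cn{len}}(A',\overline{m'})$ in the pre-state, the update $A'\{\forall \overline{k}\in\cn{range}(\overline{m}).\,\overline{k}\mapsto f(A[\overline{k}])\}$ overwrites every outer slot and does not depend on the old contents of $A'$. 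So neither the ``leaves the other entries unchanged'' reading nor the fallback of restricting $P$ is needed for the values, and the fallback would only prove a weaker lemma.

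What the hypothesis genuinely does not provide is \emph{shape}. Its post-condition says nothing about the bounds of $A'$ after the statement, nor that $f(A[\overline{k}])$ has the trailing bounds of $\overline{m'}$. A statement that returns an array with one extra outer slot satisfies the hypothesis, yet it breaks the extensional equality in Step 2 and $Q_{\cn{len}}(A',\overline{m'})$ in Step 4. The paper's two-line proof passes over this silently too.

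Your claim that $s'(A')$ has bounds $\overline{m'}$ does not follow from $Q_{\cn{len}}$ in $s$ and length preservation of the update operator; those bound only $\llbracket A^\star\rrbracket_s$. It follows only from your in-place reading of the statement, and even then only if $f$ returns arrays with the trailing bounds. So that reading is doing real work, but for shape rather than for the entries you worried about. Restricting $P$ cannot replace it, because $Q_{\cn{len}}(A',\overline{m'})$ is in the post-condition regardless of $P$.

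The clean repair is to require $Q$ in the post-condition of the hypothesis triple, which is what every use of the lemma supplies: the reduced triples for \cn{mapN}, \cn{assignN}, \cn{copyN}, and \cn{enumN} all carry a conjunct $Q_{\cn{len}}(\ldots)$. With that and your frame fact, Steps 2--4 go through.

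Finally, drop the closing remark about obtaining the triple from the ghost assignment $A' := A^\star$ by the rule of consequence. Consequence changes assertions, not programs, so that transfer needs exactly the semantic identity of effects established in your Steps 1--2.
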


\begin{proof}
By using the Hoare logic assignment rule, we have $\triple{P[f(A)/A']}{A' := \varphi(A)}{P}$. With the assumption in the corollary, we result in the following post-condition,
\[
\textcolor{spec}{P[A'\{\forall \overline{k}\in\cn{range}(\overline{m}).\overline{k} \mapsto f(A[\overline{k}])\}/A']\wedge \forall \overline{k} \in \cn{range}(\overline{m}) . A'[\overline{k}]=f(A[\overline{k}])}\], 
which is equivalent to $P$.
\end{proof}

The $Q_{\cn{len}}$ predicate is widely used in various proofs in the paper, so we will sometimes assume it without explicit mention.
 \section{Overview}
\label{sec:overview}

Here, we provide an overview of embedding quantum program proofs in Hoare logic.
We utilize a running example based on the Deutsch–Jozsa algorithm in \Cref{fig:onetime-hw}: given a function $f$ that is either constant or balanced, represented as an oracle $f$ in the circuit, the algorithm determines whether $f$ is constant by outputting $v=0$ or another value.
In the algorithm, we first use a length $n$ qubit array $x$ and an extra qubit $y$, where $x\srange{0}{n}$ and $y[0]$ is in a uniform superposition by applying $n+1$ Hadamard gates in lines $2$ and $5$. The superposition state spans all possible values for the two-qubit arrays.

We then apply the oracle operation $f$ in line 7 to the locus $x\srange{0}{n}\sqcupplus y[0]$, i.e., the list of qubit ranges, possibly entangled, that the operation acts on (loci are defined in \Cref{sec:qafnystates}), and $n$ Hadamard gates to $x\srange{0}{n}$ in line 9.
The oracle operation queries the function $f$ and stores the result in the $y[0]$ qubit.
The $n$ \cn{H} gates associate the query results with the amplitudes, determining the probability of each basis-ket in the quantum state.
Here, the state $\ket{0}$ yields an amplitude $1$ if $f$ is a constant function, and the measurement operation in line $11$ outputs $v=0$ with probability $r=1$.

Line 1 specifies the pre-condition: $x$ starts in the $\ket{\overline{0}}$ state and $y[0]$ in the $\ket{1}$ state.
Line 12 specifies the post-condition to be verified: that the measurement result $v=0$ correctly indicates that the function $f$ is constant, with output probability $r=1$. 
The other assertions are the step-wise post-conditions.
The $\tch{n}$ type information is specific to \qafny and is introduced in the next section.

\begin{figure}[t]
\begin{minipage}[t]{.64\textwidth}
  {\small
    \[\hspace{-1em}
      \begin{array}{r l}
        \textcolor{black}{\text{\tiny 1}}
        &  \textcolor{spec}{ \Hassert{x\srange{0}{n} \textcolor{auto}{: \tch{0}}\mapsto \ket{\overline{0}} *  y[0] \textcolor{auto}{: \tch{0}} \mapsto \ket{1}}}
        \\[0.3em]
       \textcolor{black}{\text{\tiny 2}}
        &
          \qass{x\srange{0}{n}}{\cn{H}};
        \\
\textcolor{black}{\text{\tiny 3}}
        &
             \textcolor{spec}{ \Hassert{x\srange{0}{n} \textcolor{auto}{: \tch{1}} \mapsto \ssum{j=0}{2^{n}\sminus 1}{\frac{1}{\sqrt{2^n}}\ket{j}} *  y[0] \textcolor{auto}{: \tch{0}} \mapsto \ket{1}}}
            \\[0.3em]
                \textcolor{black}{\text{\tiny 4}}
        &
          \Rightarrow   \textcolor{spec}{ \Hassert{x\srange{0}{n}\sqcupplus y[0] \textcolor{auto}{: \tch{1}} \mapsto \ssum{j=0}{2^{n}\sminus 1}{\frac{1}{\sqrt{2^n}}\ket{j}}\ket{1}}}
            \\[0.3em]
           \textcolor{black}{\text{\tiny 5}}
        &
          \qass{y[0]}{\cn{H}};
            \\[0.3em]
        \textcolor{black}{\text{\tiny 6}}
        &            
        \textcolor{spec}{ \Hassert{x\srange{0}{n}\sqcupplus y[0] \textcolor{auto}{: \tch{2}} \mapsto \ssum{j=0}{2^{n}\sminus 1}{\ssum{b=0}{1}{\frac{1}{\sqrt{2^{n\splus 1}}}(-1)^{b}\ket{j}}\ket{b}}}}
            \\[0.3em]

        \textcolor{black}{\text{\tiny 7}}
        &
            \qass{x\srange{0}{n}\sqcupplus y[0]}{\lambda \ket{u}\ket{v}.\ket{u}\ket{v\oplus f(u)};}
        \\[0.3em]
        \textcolor{black}{\text{\tiny 8}}
        &
        \textcolor{spec}{ \Hassert{x\srange{0}{n}\sqcupplus y[0] \textcolor{auto}{: \tch{2}}\mapsto \ssum{j=0}{2^{n}\sminus 1}{\ssum{b=0}{1}{\frac{1}{\sqrt{2^{n\splus 1}}}(-1)^{b}\ket{j}}\ket{b\oplus f(j)}}}}
        \\
        \textcolor{black}{\text{\tiny 9}}
        &
          \qass{x\srange{0}{n}}{\cn{H}};
        \\[0.3em]
     \textcolor{black}{\text{\tiny 10}}
        &
        \textcolor{spec}{ \Hassert{x\srange{0}{n}\sqcupplus y[0] \textcolor{auto}{: \tch{3}}\mapsto \ssum{j=0}{2^{n}\sminus 1}{\ssum{b=0}{1}{\ssum{k=0}{2^n\sminus 1}{\frac{1}{2^n\sqrt{2}}(-1)^{b}(-1)^{k\cdot j}\ket{k}}\ket{b\oplus f(j)}}}}}
        \\[0.3em]
        \textcolor{black}{\text{\tiny 11}}
        &
              \cassign{r,v}{\smea{x\srange{0}{n}}}
        \\[0.3em]
    \textcolor{black}{\text{\tiny 12}}
        &

        \textcolor{spec}{\Hassert{y[0] \textcolor{auto}{:\tch{1}}\mapsto{\ssum{b=0}{1}{\frac{1}{\sqrt{2}}(-1)^{b}\ket{b}}} \wedge r=\slen{\ssum{j=0}{2^{n}\sminus 1}{{\frac{1}{2^n}(-1)^{0}}}}^2\wedge v= 0}}
      \end{array}
    \]
  }
  \end{minipage}
\begin{minipage}[t]{0.35\textwidth}
  \vspace{0.5em}
  {\scriptsize\quad
  \Qcircuit @C=0.5em @R=0.5em {
    &                    & & \gate{\cn{H}}  & \qw & \multigate{6}{\texttt{$f$}} & \qw  & \gate{\cn{H}} & \qw & \meter \\
    &                      & & \gate{\cn{H}} & \qw & \ghost{\texttt{$f$}}        & \qw  & \gate{\cn{H}} & \qw & \meter \\
    &    \push{x:\ket{0}\quad}                                & &               &     &  &      &  & &  \\
    &                                    & & \dots         &     &                             &      & \dots & &  \\
    &                                    & &               &     &                             &      & & &  \\
    &                      & & \gate{\cn{H}} & \qw & \ghost{\texttt{$f$}}       & \qw  & \gate{\cn{H}} & \qw & \meter   \\
    &     \push{y:\ket{1}\quad}                 & & \gate{\cn{H}} & \qw & \ghost{\texttt{$f$}}       & \qw  & \qw  & \qw & \qw 
    \gategroup{1}{3}{6}{3}{1em}{\{}
    }
}
  \end{minipage}
    \caption{Running example for Deutsch–Jozsa algorithm when $f$ is constant; circuit on the right; the grayed out part is the \qafny state types; $\textcolor{spec}{\ket{\overline{0}}}$ is an abbreviation of $\textcolor{spec}{\bigotimes^{n}\ket{0}}$; line 12 has several simplification and $r$ is actually equal to $1$ after simplification; $y[0]$ state in lines 6 to 12 are separable from $x\srange{0}{n}$, as $\ssum{b=0}{1}{\frac{1}{\sqrt{2}}(-1)^{b}\ket{b}}$.}
  \label{fig:onetime-hw}
\end{figure}

\myparagraph{Relating Quantum Operations with Classical Ones in a Proof Triple}\label{sec:relationclassical}
Here, we provide an overview of viewing a quantum operation as a behavioral equivalence sequence of classical operations.
The transformation does not introduce computational benefits, but these classical operations are well-known structures that can be effectively verified; thus, the transformation could provide benefits in verifying quantum programs.

We now highlight the key strategy by showing the transformation of the proof step for \Cref{fig:onetime-hw} lines 6-8.
We list the quantum triple with conditions in \qafny as an example.
Such a triple captures the key features in many other quantum proof systems \cite{li2024qafnyquantumprogramverifier,10.1145/3456877}, where we discuss one of them next.

{\small
\[
\begin{array}{l}
\quad\textcolor{spec}{
\big{\{}
x\srange{0}{n}\sqcupplus y[0] \textcolor{auto}{: \tch{2}}\mapsto \ssum{j=0}{2^{n}\sminus 1}{\ssum{b=0}{1}{\frac{1}{\sqrt{2^{n\splus 1}}}(-1)^{b}\ket{j}\ket{b}}}
\big{\}}}
\\[0.5em]
\qquad\qass{x\srange{0}{n}\sqcupplus y[0]}{\lambda \ket{u}\ket{v}.\ket{u}\ket{v\oplus f(u)};}
\\[0.5em]
\quad\textcolor{spec}{
 \Hassert{x\srange{0}{n}\sqcupplus y[0] \textcolor{auto}{: \tch{2}}\mapsto \ssum{j=0}{2^{n}\sminus 1}{\ssum{b=0}{1}{\frac{1}{\sqrt{2^{n\splus 1}}}(-1)^{b}\ket{j}\ket{b\oplus f(j)}}}}
}
\end{array}
\]
}

In handling the oracle operation $\lambda \ket{u}\ket{v}.\ket{u}\ket{v\oplus f(u)}$, we pattern-match it with the ket structure ($\textcolor{spec}{\ket{j}\ket{b}}$) of each basis-ket in the pre-condition state, and infer that the ket structure of each basis-ket in the post-condition state should be $\textcolor{spec}{\ket{j}\ket{b\oplus f(j)}}$.
In relating the quantum state predicates to a classical Hoare logic format, we think of a linear sum operator as a single dimension array concatenation; thus, the nested sum operator ($\textcolor{spec}{\ssum{j=0}{2^{n}\sminus 1}{\ssum{b=0}{1}{\frac{1}{\sqrt{2^{n\splus 1}}}(-1)^{b}\ket{j}\ket{b}}}}$) introduces a two dimensional array where each element property can be described by the expression $\textcolor{spec}{\frac{1}{\sqrt{2^{n\splus 1}}}(-1)^{b}\ket{j}\ket{b}}$. 
In our system, we can think of the number ($n$) of sum operators in a quantum state predicate as being transformed into an $ n$-dimensional array capturing the predicate relation (this is the essence of the $\tch{n}$ type). 
When $n=1$, conceptually, we can think of the state as the array below.

{
\begin{equation}
\textcolor{spec}{
[[\frac{1}{2}(-1)^{0}\ket{0}\ket{0},\frac{1}{2}(-1)^{1}\ket{0}\ket{1}],[\frac{1}{2}(-1)^{0}\ket{1}\ket{0},\frac{1}{2}(-1)^{1}\ket{1}\ket{1}]]
}
\label{eq:example1}
\end{equation}
}

Note that the locus for the above state is $x\srange{0}{n}\sqcupplus y[0]$, meaning that the two ket structures, $\textcolor{spec}{\ket{j}}$ and $\textcolor{spec}{\ket{b}}$, have different lengths if we think of $j$ and $b$ in their bitstring representations.
The above simple transformation can be summarized by viewing a quantum state as a predicate over an array of the same dimension as the number of linear-sum terms.
Although the state representation might be the right choice, such a simple transformation might not work for many cases.
In the example above, if we bit-flip $x[i]$ ($i\in[0,n)$), using $\qass{x[i]}{\lambda \ket{u}.\ket{(u\splus 1)\%2}}$, we need to develop a way to locate the specific bit in the right ket structure. Additionally, one cannot assume the qubit order in a ket structure is always well-ordered; e.g., the range $x\srange{0}{n}$ points to the ket $\textcolor{spec}{\ket{j}}$ above, but we cannot assume that $x[0]$ must point to the first bit in $\textcolor{spec}{\ket{j}}$'s bitstring and $x[n\sminus 1]$ points to the last, i.e., the order depends on the layout of the array.

We compile the above operation and its proof to a sequence of classical operations, easily verifiable in Hoare logic, and develop a general procedure to verify the operation via the proof for the sequence.
Assume that a set of loci partitions all quantum qubits in $\varphi$ into separable regions, and the qubits in the $\tch{2}$ typed locus $x\srange{0}{n}\sqcupplus y[0]\sqcupplus \eta$ are stored in a two-dimensional array $A$. The operation $\cn{ind}^{\varphi}(x\srange{0}{n}\sqcupplus y[0]\sqcupplus \eta)$ produces a set of indices in $A$ for the qubits in $x\srange{0}{n}\sqcupplus y[0]\sqcupplus \eta$ with the state being $\varphi$.
The array $A$ can have arbitrary dimensions, i.e., an $\tch{k}$ typed state is compiled to $k$-dimensional. Each element $\theta$ in $A$ has the following record type.

{\small
\[
\theta ::= \cn{element}\,\{\cn{amp} = z, \cn{vector} = c\}
\qquad
A ::= [ ... [ \theta ]_{\textcolor{spec}{n\sminus 1}} ...]_{\textcolor{spec}{0}}
\]
}

Assuming that $A$ is two-dimensional with two-level lengths $m_0$ and $m_1$, we identify four consecutive operations to construct the Hoare logic proof for the oracle operation above, as follows.

{\small
\[
\begin{array}{l}
\cassign{A_1}{\cn{locateN}(A,(m_0,m_1),\cn{ind}^{\varphi}(x\srange{0}{n}\sqcupplus y[0]))};
\\[0.2em]
\quad\cassign{A_2}{\cn{mapN}(A_1,(m_0,m_1),\lambda (t,(u,v)).(t,(u,v \oplus f(u))))};
\\[0.2em]
\qquad\cassign{A_3}{\cn{assignN}(A,A_2,(m_0,m_1),\cn{ind}^{\varphi}(x\srange{0}{n}\sqcupplus y[0]),\lambda u\,v.\cn{amp}(v))};
\\[0.2em]
\qquad\quad\cassign{A}{A_3}
\end{array}
\]
}

Below are example transitions that apply the above operations to the state in \Cref{eq:example1}.

{
\[
\begin{array}{l}
\xrightarrow{\cn{locateN},A_1}
\textcolor{spec}{
[[\frac{1}{2}(-1)^{0}\ket{0}\ket{0},\frac{1}{2}(-1)^{1}\ket{0}\ket{1}],[\frac{1}{2}(-1)^{0}\ket{1}\ket{0},\frac{1}{2}(-1)^{1}\ket{1}\ket{1}]]
}
\\
\xrightarrow{\cn{mapN},A_2}
\textcolor{spec}{
[[\frac{1}{2}(-1)^{0}\ket{0}\ket{1},\frac{1}{2}(-1)^{1}\ket{0}\ket{0}],[\frac{1}{2}(-1)^{0}\ket{1}\ket{1},\frac{1}{2}(-1)^{1}\ket{1}\ket{0}]]
}
\\
\xrightarrow{\cn{mapN},A_3}
\textcolor{spec}{
[[\frac{1}{2}(-1)^{0}\ket{0}\ket{1},\frac{1}{2}(-1)^{1}\ket{0}\ket{0}],[\frac{1}{2}(-1)^{0}\ket{1}\ket{1},\frac{1}{2}(-1)^{1}\ket{1}\ket{0}]]
}
\\
\xrightarrow{\cassign{A}{A_3}}
\textcolor{spec}{
[[\frac{1}{2}(-1)^{0}\ket{0}\ket{1},\frac{1}{2}(-1)^{1}\ket{0}\ket{0}],[\frac{1}{2}(-1)^{0}\ket{1}\ket{1},\frac{1}{2}(-1)^{1}\ket{1}\ket{0}]]
}
\end{array}
\]
}

$\cn{locateN}(A,(m_0,m_1),\cn{ind}^{\varphi}(x\srange{0}{n}\sqcupplus y[0]))$ shrinks high-dimensional element vectors to focus on the indices (e.g., $\overline{j}$) representing the locus fragment $x\srange{0}{n}\sqcupplus y[0]$ in $A$, as we assign the result to a new array $A_1$ (same dimension as $A$), as $\forall \overline{j}.A_1[\overline{j}]=\cn{vector}(A[\overline{j}])[\cn{ind}^{\varphi}(x\srange{0}{n}\sqcupplus y[0])]$. The $\cn{vector}$ operation accesses the vector $A[\overline{j}] $ but restricts it to the indices of the locus $x\srange{0}{n}\sqcupplus y[0]$. For the state in \Cref{eq:example1}, $A_1$ and $A$ are identical. 

The \cn{mapN} operation applies the function $(\lambda (t,(u,v)).(t,(u,v \oplus f(u))))$ to each element in $A_1$ and assigns the result to a new array $A_2$,
where $t$ and $(u,v)$ refer to the element's amplitude and vector, respectively.
In \Cref{fig:onetime-hw}, $f$ flips the result bit because it is constant, as the result after $\xrightarrow{\cn{mapN},A_2}$.
$\cn{assignN}$ merges the vector in $A_2$ back to $A$, i.e., each vector element ($\cn{vector}(A_3[\overline{j}])$) is the concatenation of $\cn{vector}(A_2[\overline{j}])[\cn{ind}^{\varphi}(x\srange{0}{n}\sqcupplus y[0])]$ and $\cn{vector}(A[\overline{j}])[\cn{ind}^{\varphi}(\eta)]$, where $\eta$ is the locus fragment outside of $x\srange{0}{n}\sqcupplus y[0]$. The function ($\lambda u\,v.\cn{amp}(v)$) determines how we merge the amplitudes of the two arrays; we use $A_2$'s amplitudes here.
Finally, we assign $A_3$ back to $A$.

The transformation is generalizable to other quantum proof systems. 
In Quantum Hoare Logic (QHL) \cite{10.1145/3456877}, a proof triple can be written as $\triple{P,q_1}{e}{Q,q_2}$, where $P$ and $Q$ are pre- and post-conditions for classical variables and $q_1$ and $q_2$ are quantum state predicates.
For example, the above proof triple for the oracle operation can be expressed as follows.

{\small
\[
\begin{array}{l}
\quad\textcolor{spec}{
 \Hassert{\cn{true},\ssum{j=0}{2^{n}\sminus 1}{\ssum{b=0}{1}{\frac{1}{\sqrt{2^{n\splus 1}}}(-1)^{b}\ket{j}_{x[0,n)}\ket{b}_{y[0]}}}}
 }
\\[0.5em]
\qquad\qass{x\srange{0}{n}\sqcupplus y[0]}{\lambda \ket{u}\ket{v}.\ket{u}\ket{v\oplus f(u)};}
\\[0.5em]
\quad\textcolor{spec}{
 \Hassert{\cn{true},\ssum{j=0}{2^{n}\sminus 1}{\ssum{b=0}{1}{\frac{1}{\sqrt{2^{n\splus 1}}}(-1)^{b}\ket{j}_{x[0,n)}\ket{b\oplus f(j)}_{y[0]}}}}
}
\end{array}
\]
}

Here, there are no conditions for classical variables.
In the quantum state predicates, we encode each range in the locus information in the corresponding basis-vector, e.g., $\ket{j}_{x[0,n)}\ket{b}_{y[0]}$ \footnote{QHL names $x[0,n)$ and $y[0]$ as quantum variables, while \qafny names them as ranges in a locus.}
Essentially, transforming the above triple is similar to the \qafny one, in which we still transform a state into a high-dimensional array.
We illustrate the QHL case by examples here and in \Cref{sec:frame}; a detailed discussion is in \Cref{appx:qhl} and the full QHL translation is left to future work.

\myparagraph{The main theorems.}\label{sec:guidedtypeexp}

We first present the paper's main result: quantum program verification can be effectively carried out within the classical Hoare logic proof system.
We use \qafny as the example proof system for the main theorems in the paper.
Given a \qafny proof triple, $\qpule{\sigma}{\varphi}{\mu}{\varphi'}$, there are three main theorems in the paper.

\begin{proposition}[Translation Completeness]\label[proposition]{prop:soundness}\rm 
Translating a derivable \qafny triple $\qpule{\sigma}{\varphi}{\mu}{\varphi'}$ results in a valid Hoare logic triple $\triple{P}{e}{Q}$.
\end{proposition}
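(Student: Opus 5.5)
The plan is to prove the statement by induction on the derivation of the \qafny triple $\qpule{\sigma}{\varphi}{\mu}{\varphi'}$, in the style of compiler correctness. First I would fix three translations: a program translation $\mu \mapsto e$ into HLIR built from the primitives $\cn{locateN}$, $\cn{mapN}$, $\cn{assignN}$ and plain assignments; a state translation that sends each $\tch{k}$-typed locus to a $k$-dimensional array of $\cn{element}$ records, with index layout $\cn{ind}^{\varphi}$ determined by the type environment $\sigma$; and a predicate translation sending $\varphi$ to $P$ and $\varphi'$ to $Q$. The length predicates $Q_{\cn{len}}$ are conjoined throughout. The induction hypothesis is that every subderivation translates to a valid Hoare triple whose pre- and post-conditions are exactly the translations of the \qafny assertions under the corresponding environments.

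The base cases are the \qafny primitive rules. For an oracle or unitary application $\qass{\kappa}{\lambda\ket{u}.\ket{g(u)}}$, I would show that each HLIR primitive satisfies a pointwise specification. For example, $\cn{locateN}$ establishes $\forall\overline{j}.\,A_1[\overline{j}]=\cn{vector}(A[\overline{j}])[\cn{ind}^{\varphi}(\kappa)]$. This follows from the primitive's loop definition by a standard loop-invariant argument. \Cref{cor:link} then pushes the translated postcondition back through each assignment as a single substitution. Composing the four steps with the sequencing rule yields a precondition stating that the translated array, with $g$ applied pointwise to the $\kappa$-fragment, satisfies $Q$.

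The remaining work for this case is a rewriting lemma. Substituting the pattern-matched basis-ket $\ket{g(j)}$ into the translation of $\varphi'$ must give exactly the translation of $\varphi$, up to $\cn{ind}$ reindexing. The Hadamard and $\tch{k}\to\tch{k+1}$ rules raise the array dimension, using $\overline{m}\sqsubseteq\overline{m'}$ in \Cref{cor:link}. Measurement translates to a filtering step over one dimension plus computing $r$ as a sum of squared amplitudes.

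The structural rules come next. Sequencing, classical $\cn{if}$, $\cn{for}$ loops with invariants, and consequence map directly to their Hoare counterparts, once I show the predicate translation commutes with substitution and preserves entailment. For the frame rule and tensor $*$, I would use the purity result promised in the contributions: a separating conjunction over disjoint loci translates to an ordinary conjunction over separate arrays, so framing becomes a trivial conjunct that the HLIR code does not modify. Quantum conditionals $\sifq{b}{\mu}$ are handled like an oracle whose map is guarded by the control bit, via $\cn{mapN}$ with a conditional lambda.

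I expect the main obstacle to be the equivalence rules that reorder loci or merge and split them (e.g., line 4 of \Cref{fig:onetime-hw}). These change $\sigma$ and therefore $\cn{ind}^{\varphi}$ and the array shape. For each such rule I would prove a lemma that the translations before and after are related by an explicit array permutation or reshaping, and emit the corresponding HLIR reshaping assignment. Its correctness is again discharged by \Cref{cor:link}, and the index bookkeeping for this lemma would be carried out mechanically in Rocq.
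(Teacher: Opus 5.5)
Your overall architecture agrees with the paper's: induction over the \qafny typing/derivation, one validity lemma per primitive discharged via \Cref{cor:link} and composed by Hoare sequencing, purity for the frame rule, and translated invariants for loops. The quantum-conditional case, however, has a real gap. You translate $\sifq{b}{\mu}$ as a single \cn{mapN} whose lambda is guarded by the control bit. That is correct only when the body acts on each basis-ket as a function that preserves the array shape, i.e.\ when $\mu$ is an oracle.

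In general $\mu$ is an arbitrary well-typed statement, and \rulelab{T-If} explicitly allows it to turn $\tch{n}$ into $\tch{n'}$. A controlled Hadamard $\sifq{x[0]}{\ssassign{y[0]}{}{\ihadh}}$, or any body containing a Hadamard or QFT, translates to HLIR code (\cn{copyN}, \cn{enumN}, \cn{ampN}, \ldots) that adds dimensions, and does so only for the basis-kets satisfying $b$. No element-wise lambda expresses that. More fundamentally, your conditional case never invokes the induction hypothesis for the translated body, which is a statement sequence, not a function on elements. So the induction breaks exactly where \rulelab{P-If} nests a quantum subderivation. The predicate transformers $\frz$ and $\ufrz$ in that rule are also left unaccounted for.

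What is missing is the split, run, and reassemble scheme of \Cref{proof:qif}:
\begin{enumerate}
\item Partition $A$ into $A_1=\cn{filterN}(A,b,\cn{ind}^{\varphi}(\eta))$ and $A_2=\cn{filterN}(A,\neg b,\cn{ind}^{\varphi}(\eta))$.
\item Shrink $A_1$ to the body's locus with $\cn{locateN}(A_1,\overline{u},\cn{ind}^{\varphi}(\eta'))$.
\item Run the translated body $e$ between inserted assertions $\bullet$ and $\circ$. These are the translations of the $\frz$-transformed upper-level conditions, and this is where the induction hypothesis applies.
\item Use \cn{upDim} to lift $A_1$ to the body's output shape and to pad $A_2$ with length-one dimensions.
\item Merge the body's result back with \cn{assignN} on $\cn{ind}^{\varphi}(\eta')$, and \cn{concat} the two halves.
\end{enumerate}
The fact that justifies the initial pre-strengthening is the axiom that $\cn{concat}(\cn{filterN}(A,\neg b,\cdot),\cn{filterN}(A,b,\cdot))$ gives back $A$. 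It holds because a quantum state, as a sum of basis-kets, does not depend on their order.

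Two smaller points:
\begin{itemize}
\item Measurement needs more than a filter and a sum. You must fix an outcome index $\overline{i}$ and filter on the $\eta$-bits equal to that outcome; this cuts across all dimensions rather than running along one. You then drop $\eta$ with \cn{locateN} and renormalize the amplitudes by $r$.
\item Locus reordering, which you flag as the main obstacle, is absorbed in the paper by the index map $\cn{ind}^{\varphi}$, not by emitted reshaping code. Extra reshaping statements would also change the translation whose per-statement size is fixed in \Cref{thm:verification-efficiency}.
\end{itemize}
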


\begin{proposition}[Translation Soundness]\label[proposition]{prop:Tightness}\rm 
Let a well-typed \qafny program $\mu$ and well-typed conditions $\varphi$ and $\varphi'$ translate to $e$, $P$, and $Q$. If the Hoare triple $\triple{P}{e}{Q}$ is valid, then \qafny derives a triple $\qpule{\sigma}{\varphi_1}{\mu}{\varphi'_1}$ with $\denote{\varphi}\equiv\denote{\varphi_1}$ and $\denote{\varphi'}\equiv\denote{\varphi'_1}$; by the semantic soundness of \qafny, $\mu$ satisfies the specification $(\varphi,\varphi')$.
\end{proposition}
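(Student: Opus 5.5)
The statement to prove is the last one displayed, Translation Soundness (\Cref{prop:Tightness}). I would prove it by going back through the translation, using an inversion/decompilation argument structured as a structural induction on the well-typed \qafny program $\mu$.

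\textbf{Semantic correspondence.} The first ingredient is a semantic correspondence relation $\sim$ between \qafny states and HLIR array states. A \qafny state assigns to each locus of type $\tch{k}$ a quantum value written as a $k$-fold sum. The related classical state stores, in a $k$-dimensional array $A$, elements $\cn{element}\{\cn{amp}=z,\cn{vector}=c\}$, indexed by $\cn{ind}^{\varphi}$. I would show two facts about $\sim$. First, it is functional up to semantic equivalence: two \qafny states related to the same array state have the same denotation $\denote{\cdot}$. Second, it is surjective onto well-typed array states satisfying $Q_{\cn{len}}$. With this, the translation of assertions satisfies $\sigma'\models P$ iff some $\sigma\sim\sigma'$ has $\sigma\models\varphi$. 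I would then define a back-translation $\cn{back}(P)$, sending an HLIR assertion over well-formed arrays to a \qafny assertion. Its defining property is $\denote{\cn{back}(\cn{trans}(\varphi))}\equiv\denote{\varphi}$. This gives the candidate $\varphi_1,\varphi'_1$.

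\textbf{Per-construct simulation.} Next, for each \qafny construct I show that the translated HLIR sequence simulates the quantum operation pointwise under $\sim$. The constructs are oracle application, \cn{H}, quantum conditionals, measurement, loops, and sequencing. For instance, $\cn{locateN};\cn{mapN};\cn{assignN}$ realizes the oracle map on every basis-ket. Each primitive is discharged by \Cref{cor:link}, which reduces its loop to a single pointwise substitution. From the simulation, the weakest precondition of $Q$ through $e$ is the translation of the \qafny weakest-precondition-style rule premise. Hence a valid $\triple{P}{e}{Q}$ forces $\cn{back}(P)\Rightarrow\cn{wp}_{\qafny}(\mu,\cn{back}(Q))$. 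Applying the \qafny rule for the construct, followed by its consequence rule, derives $\qpule{\sigma}{\varphi_1}{\mu}{\varphi'_1}$. For sequencing, I take the intermediate assertion to be the HLIR weakest precondition of the second command and back-translate it. The final claim then follows immediately from \qafny's semantic soundness.

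\textbf{Main obstacle.} The hard part is the back-translation of arbitrary intermediate HLIR assertions, together with loop invariants. A valid Hoare triple may go through assertions that do not lie in the image of the translation. Examples are predicates that inspect one array cell in a way that depends on the index layout, or that break the symmetry across basis-kets. Such predicates have no \qafny counterpart. I would handle this in two steps. First, I restrict to assertions closed under $\sim$-equivalence (invariant under permutations of indices induced by $\cn{ind}^{\varphi}$). Second, I show that the weakest preconditions computed through translated primitives preserve this closure, because every primitive acts uniformly on all elements. For loops, I take the HLIR invariant, close it under $\sim$, and back-translate it. This requires \qafny's assertion language to be expressive enough for such closed predicates. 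That expressiveness assumption is where I expect the argument to need the most care, and it is the reason the statement is only up to $\denote{\cdot}$-equivalence.
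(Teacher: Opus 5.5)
Your outline has the same skeleton as the paper's proof: induction on the typing of $\mu$, reading each translated HLIR block back as the matching \qafny rule, and recovering $\varphi_1,\varphi'_1$ only up to $\denote{\cdot}$. It also makes explicit two ingredients the paper's sketch takes for granted: a simulation of $\mu$ by $e$, and a way to turn HLIR assertions back into \qafny predicates.

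\textbf{The simulation fails for measurement.} For oracles, \cn{H}, and conditionals, which act linearly on the element list, the simulation is fine. Measurement is the problem:
\begin{itemize}
\item The \cn{H}/\cn{QFT} translation $\cn{copyN};\cn{enumN};\cn{ampN}$ never merges equal basis vectors. On an $\tch{k}$ state with $k\ge 1$ it therefore yields arrays with repeated kets.
\item The measurement translation picks an index $\overline{i}$ and computes $r$ by \cn{sqrtSum}. That is, it adds $\slen{\cn{amp}}^2$ per element rather than $\slen{\sum\cn{amp}}^2$ per basis vector.
\end{itemize}
For instance, start from $x\srange{0}{1}:\tch{0}\mapsto\ket{0}$ and apply $x\srange{0}{1}\leftarrow\cn{H}$ twice. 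As in \Cref{sec:hadtrans}, this gives the array $[[(\frac{1}{2},\ket{0}),(\frac{1}{2},\ket{1})],[(\frac{1}{2},\ket{0}),(-\frac{1}{2},\ket{1})]]$, which denotes $\ket{0}$. The translated $\cassign{r,v}{\smea{x\srange{0}{1}}}$ then admits $v=1$, and yields $r=\sqrt{1/4+1/4}<1$ whenever $v=0$. So $\triple{P}{e}{v=0\Rightarrow r<1}$ is valid although the quantum claim is false. By \qafny's soundness, no derivation with an equivalent post-condition exists. A pointwise simulation therefore needs one of two fixes: a merge step in the translation, or a restriction to arrays with pairwise distinct basis vectors. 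The latter is an invariant that \cn{H} breaks. The paper's proof does not address this either.

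\textbf{The back-translation repair does not apply.} You locate the difficulty correctly, but restricting to $\sim$-closed assertions has nothing to start from.
\begin{itemize}
\item $P=\cn{trans}(\varphi)$ fixes every cell $A[\overline{j}]$ and the bit layout $\cn{ind}^{\varphi}$.
\item Hence $P$, $Q$, and every weakest precondition computed from $Q$ are not closed under reordering kets or bit positions. The paper's \cn{concat}/\cn{filterN} axiom already identifies arrays only up to such reordering.
\end{itemize}
Closure is also unnecessary. Let $\cn{back}(R)$ be the set of quantum states having some $\sim$-related array that satisfies $R$. Then $\cn{back}(P)\subseteq\mathrm{wp}(\mu,\cn{back}(Q))$ follows from simulation alone.

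\textbf{What remains is relative completeness of \qafny.} Cut formulas and invariants must be expressible as typed \qafny predicates (explicit sums per locus, \Cref{fig:qafny-state}) and must be derivable.
\begin{itemize}
\item This is automatic for the atomic statements. \Cref{def:mult-mut,def:mult-had,def:mea} and the corresponding \qafny rules are all in substitution form, so straight-line code reduces to the simulation issue above.
\item It is open for \rulelab{P-If}. You must extract from the outer triple a valid triple for the body on $A_3$ with post-condition $\varphi'[\frz(b,\eta,\eta')/\eta\sqcupplus\eta']$. The assertion at $\circ$ in \Cref{proof:qif} is not of that shape.
\item It is open for \rulelab{P-Loop}. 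A valid triple supplies no invariant, and the weakest one must be a \qafny predicate whose locus and $\tch{k}$ dimension vary with $j$.
\end{itemize}
The paper cites only \qafny's soundness and asserts this step without argument. This gap is inherited from the paper rather than introduced by your plan.
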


\begin{proposition}[Translation Size]\label[proposition]{prop:polytrans}\rm 
The translation produces at most eleven HLIR statements per \qafny statement, and the translated assertions and the Hoare proof skeleton for each statement have size polynomial in the size of the \qafny triple and the array dimension, measured in the symbolic array representation of \Cref{sec:trans}.
\end{proposition}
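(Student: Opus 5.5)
The statement immediately above is \Cref{prop:polytrans}, and I would prove it by structural induction on the \qafny statement $\mu$, reading the bounds off the translation function of \Cref{sec:trans} one syntactic form at a time. Each \qafny statement is either a primitive or a compound. The primitives are oracle/lambda application, Hadamard, QFT, partial measurement, the type-casting rewrites that change $\tch{n}$ dimension, and classical assignment. The compounds are sequencing, classical $\cn{if}$, quantum conditionals, and $\cn{for}$-loops.

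For the primitive cases I would exhibit the emitted HLIR sequence and count its statements. The oracle case shown in \Cref{sec:overview} is the pattern: $\cn{locateN}$, $\cn{mapN}$, $\cn{assignN}$, and a final assignment, giving four statements. Hadamard on an $\tch{k}$ locus additionally needs a dimension-raising step, so it uses an extra allocation, a map, and an assign. Measurement needs a locate, a filter/collapse step, an amplitude renormalisation, and assignments of $r$ and $v$. Quantum conditionals wrap a translated body inside a $\cn{locateN}$/guard/$\cn{assignN}$ frame. I would tabulate these counts and check that the maximum is eleven, which I expect to be attained by the measurement or quantum-conditional case. For the compound statements I would read the bound as per source statement: the translation emits a constant number of glue statements of its own (at most the loop header or branch), plus the translations of its substatements, which are charged to those substatements.

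For the assertion and proof-skeleton sizes, I would argue that every emitted HLIR primitive is justified by one instance of \Cref{cor:link}. The translated post-condition is therefore obtained from the translated pre-condition by a single substitution $A'\{\forall \overline{k}\in\cn{range}(\overline{m}). \overline{k}\mapsto f(A[\overline{k}])\}$. Here $f$ is a symbolic expression taken from the \qafny lambda or gate, of size linear in that gate's syntax. The quantifier prefix has length equal to the array dimension $d$, and the length predicate $Q_{\cn{len}}$ has size $O(d)$. Since each assertion is kept in this symbolic array form rather than being expanded, its size is bounded by the size of the \qafny assertion plus $O(d)$ times a constant per substitution. The skeleton is a chain of at most eleven rule applications per statement, so it is also polynomial. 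For loops and conditionals, invariants and branch conditions are translated compositionally using the same symbolic encoding, so the sizes add.

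The main obstacle is preventing the substitutions from compounding across a sequence. Naively, nested substitutions $P[\ldots/A'][\ldots/A]$ can duplicate $P$, because an array variable may occur many times in $P$. I would avoid this by introducing a fresh array name for each intermediate array, $A_1, A_2, A_3$ as in \Cref{sec:overview}, and relating the names by conjuncts of the form $\forall\overline{k}.\,A_{i+1}[\overline{k}]=f_i(A_i[\overline{k}])$ instead of substituting them into $P$. This keeps each assertion linear in the number of steps. I would also need to check that $\cn{ind}^{\varphi}$ is computable in polynomial size from the locus, which I expect to follow by induction on loci.
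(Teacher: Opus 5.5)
Your overall plan matches the paper's: cases on the \qafny statement forms, counting the emitted HLIR statements, and keeping assertions in symbolic array form. The skeleton argument, however, stops one level too high. In Hoare logic with array theory the HLIR primitives are programs, not rules. \cn{mapN} is $n$ nested while loops (\Cref{def:locate}), and its derivation (\Cref{appx:mapN}) uses one while rule and one invariant per dimension. Your ``chain of at most eleven rule applications'' treats each primitive as a single rule, so it never looks at these loop nests, which is where the skeleton itself grows with the array dimension.

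The missing step is item (iii) of \Cref{thm:verification-efficiency}. Each invariant is the pointwise post-condition over a bounded index prefix, so one primitive costs $O(n)$ copies of a polynomial-size assertion. \cn{copyN}, \cn{enumN}, \cn{ampN}, \cn{locateN}, and \cn{upDim} are instances of \cn{mapN} or \cn{assignN}, and \cn{sqrtSum} is a summation loop. The conditional and measurement skeletons compose a constant number of these.

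The counts also have to be read off the actual sequences rather than anticipated:
\begin{itemize}
\item Oracle application gives four statements.
\item Hadamard/QFT gives four: \cn{copyN}, \cn{enumN}, \cn{ampN}, and $\cassign{A}{A_3}$, with no extra allocation, map, or assign.
\item Measurement gives seven (\Cref{def:mea}).
\item The conditional frame of \Cref{proof:qif} is not a \cn{locateN}/guard/\cn{assignN} wrapper. It is two \cn{filterN}s, three length assignments, \cn{locateN}, two \cn{upDim}s, \cn{assignN}, \cn{concat}, and $\cassign{A}{A_7}$, i.e.\ eleven plus the body.
\end{itemize}
So eleven is attained only by the conditional, not by measurement.

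The obstacle you single out does not arise, and your remedy would change what is being measured. The bound is per \qafny statement. The Hoare proofs for sequencing and loops are glued through the translated intermediate condition of \rulelab{P-Seq} and the translated invariant of \rulelab{P-Loop}, each obtained directly by $\sigma\vdash\varphi\gg P$ (\Cref{sec:transclassical}). Substitutions are therefore never pushed across statement boundaries, and within one statement only a constant number of them are nested.

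Your size estimate is also off: substitution is multiplicative, not additive. $P[E/A]$ has size roughly $\slen{P}$ plus the number of occurrences of $A$ in $P$ times $\slen{E}$. With constant nesting depth this is still polynomial, which is the paper's argument for item (ii).

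Finally, replacing the substitution-style pre-conditions of \Cref{def:mult-mut,def:mult-had} by equality conjuncts over fresh names would be a different, forward encoding. It has different assertions and proof trees, so it would bound some other translation, not the one the proposition is about.
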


Throughout, the ground truth is the quantum semantics. A \qafny triple is derived by the \qafny proof system, which is sound with respect to that semantics \cite{li2024qafnyquantumprogramverifier}; the classical side never sees quantum states, only the translated program $e$ and assertions $P$ and $Q$. Our two theorems relate derivability in \qafny to validity of the translated Hoare triple in both directions, and we name them from the perspective of Hoare logic as the verifier. Translation completeness says that nothing \qafny can prove is lost in translation: Hoare logic with array theory is expressive enough. Translation soundness says that the classical verifier cannot certify a false quantum claim: a valid translated triple corresponds to a \qafny derivation with semantically equivalent conditions and hence, by \qafny's semantic soundness, to a true statement about the program. Both directions are needed for the claim in the abstract.
The size bound shows that the translation is effective: the verification conditions it generates are polynomial in the input, so the classical verifier faces a problem no larger than the quantum one, up to its own proving power. The bound concerns the size of the conditions, not solver time.

We formally detail the compilation of this operation and others in the next section.

 \section{The Source Proof System: \qafny}

We introduce a part of \qafny as an example quantum proof system for demonstrating the compilation procedure.

\subsection{\qafny Syntax and A Running Example}\label{sec:runningexample}

\begin{figure*}[h]
{
  \small
  \[\begin{array}{llcl} 
      \text{\oqasm Expr} & \iota \\[0.3em]
      \text{Arith Expr} & a & ::= & x \mid x\srange{i}{j} \mid v \mid a \,\splus\, a \mid a \cdot a \mid ... \\[0.3em]
      \text{Bool Expr} & b & ::= & x[a] \mid \qbool{a_1}{=}{a_2}{{x[a]}} \mid \qbool{a_1}{<}{a_2}{{x[a]}} \mid ... \\[0.3em]
      \text{Quantum Expr} & op & ::= & \texttt{H} \mid \iqft[\lbrack -1 \rbrack]{}{} \mid \iota \\[0.3em]
      \text{Statement} & \mu & ::= & \ssassign{\eta}{}{op}\mid\cassign{p,x}{\smea{\eta}} \mid \sifq{b}{\mu} \mid
                               \\[0.3em] & & \textcolor{red}{ \mid} & \textcolor{red}{\sskip \mid \sseq{\mu}{\mu} \mid \sfori{j}{a_1}{a_2}{(\qinv\,P)^*}\;{\{\mu\}}}
\end{array}
  \]
}
  \caption{\qafny syntax. We focus on the operations marked in black, and the red ones are essentially classical. }
  \label{fig:vqimp-full}
\end{figure*}

\Cref{fig:vqimp-full} shows \qafny's syntax, which is imperative and designed with high-level quantum programming operations. 
A \qafny program consists of statements ($\mu$), including control flow constructs (\texttt{SKIP} (\(\sskip\)), for-loops, and quantum conditionals), classical assignments ($\cassign{x}{a}$), computational basis quantum measurement ($\cassign{p,x}{\smea{\eta}}$), and quantum \emph{heap mutations} ($\ssassign{\eta}{}{op}$). A measurement operation is a let-binding that measures the locus $\eta$, stores the classical value of the result in $x$ with the probability $p$, and uses the result in $\mu$. A heap mutation applies a quantum operation -- such as state preparation operations (Hadamard gate \cn{H} and quantum Fourier transformation gate \cn{QFT}), and a quantum arithmetic oracle operation ($\iota$) \footnote{
  $\iota$ can define all quantum arithmetic operations, e.g., \Cref{fig:onetime-hw} line 7. See \cite{oracleoopsla}.
  }--to a locus $\eta$; such behavior is analogous to mutating a qubit heap array by applying the operation to each basis-ket element in the array.

Quantum reversible Boolean guards $b$ are implemented as \oqasm oracle operations, expressed by one of $\qbool{a_1}{=}{a_2}{x[a]}$, $\qbool{a_1}{<}{a_2}{x[a]}$, and $x[a]$, which intuitively amounts to computing $a_1 = a_2$, $a_1 < a_2$ and $\cn{false}$ respectively as \(b_0\) and storing the result of $b_0 \oplus x[a]$ as a binary in qubit $x[a]$.\footnote{$a_1$ and $a_2$ can possibly apply to a range, like $y\srange{0}{n}$, in an entangled locus.}
In both conditionals and loops, guards \(b\) are used to represent the qubits being controlled.

\subsection{Quantum State Predicates and Qafny Triples}\label{sec:qafnystates}

Quantum program proofs are based on the \qafny triples $\qpule{\sigma}{\varphi}{e}{\varphi'}$, meaning that a quantum program verification for $e$ is valid if we start with a pre-condition $\varphi$ and the execution of $e$ satisfies the post-condition $\varphi'$, where $\sigma$ is the typed environment for $\varphi$ and $e$.
Here, $\varphi$ and $\varphi'$ are syntactic logical forms representing quantum states, e.g., the one in \Cref{fig:qafny-state}; such quantum state predicates are generic to different quantum proof systems.

\begin{figure*}[t]
{
\footnotesize
\[\hspace*{-1em}
\begin{array}{l}
\textcolor{blue}{\text{Basic Terms:}}\\[0.2em]
\begin{array}{l@{\;\;\;}l@{\;\;}c@{\;\;}l @{\qquad} l@{\;\;\;}l@{\;\;}c@{\;\;}l @{\qquad} l@{\;\;\;}l@{\;\;}c@{\;\;}l @{\qquad} l@{\;\;\;}l@{\;\;}c@{\;\;}l}
\text{Nat. Num} & \ell & \in & \mathbb{N}
&
      \text{Complex Amplitude} & z & \in & \mathbb{C}
&
      \text{Real} & p,r & \in & \mathbb{R}
&
\text{Phase} & \omega(r) & ::= & e^{2\pi i r}
\\
      \text{Variable} & x,y &&&
  \text{Function Variable} & f &
  &

 & \text{Bit/Bool} & d & ::= & 0\mid 1       
&
      \text{Bitstring} & c & \in & d^{+}
\end{array}
\\[0.7em]
\textcolor{blue}{\text{Modes, Kinds, Types, and Classical/Quantum Values:}}\\[0.2em]
\begin{array}{llcl@{\quad}c@{\quad}l@{\quad}c@{\quad}l@{\quad}c@{\quad}l} 
      \text{Classical Scalar Value} & v & ::= & \ell \mid ... \\
      \text{Kind} & \zeta & ::= & \cmode & \mid & \qmode{n} \\
            \text{Basis Vector}& \beta & ::= & (\ket{c})^*\\
      \text{Basic Ket} & w & ::= & z \beta\\[0.3em]
      \text{\textcolor{red}{Quantum Type}} & \textcolor{red}{\tau} & \textcolor{red}{::=} & \textcolor{red}{\tcht(n)} \\[0.5em]
      \text{Quantum Data} & q & ::= & {\displaystyle \Msum_{j_0=0}^{m_0}...\Msum_{j_{n-1}=0}^{m_{n-1}} w_{j_0,...,j_{n-1}}  }
    \end{array}
\\[0.5em]
\textcolor{blue}{\text{Quantum Loci and States}}\\[0.2em]
\begin{array}{llclcl} 
      \text{Qubit Array Range} & s & ::= & x\srange{a_1}{a_2} \\
      \text{Locus} & \eta & ::= & \overline{s}  & \text{concatenated op} & \sqcupplus \\
        \text{Type Environment} & \sigma & ::= & \overline{\eta : \textcolor{red}{\tau} } & \text{concatenated op} & \uplus \\
      \text{Typed Quantum State Predicate} & \varphi & ::= & \overline{ \eta :  \textcolor{red}{\tau} \mapsto q } &  \text{concatenated op} & \uplus\\
    \end{array}
\\[2em]
\textcolor{blue}{\text{Syntax Abbreviations and Basis/Locus Equations}}\\[0.2em]
   \Msum_{j=0}^{0}w_j \simeq w_0
\quad\;
  1 \beta \simeq \beta
\quad\;
\ket{c_1}\ket{c_2} \simeq \ket{c_1 c_2}
\quad\;
x[n,n) \simeq  \emptyset
\quad\;
\emptyset \sqcupplus \eta \simeq  \eta
\quad\;
x[n,m)  \simeq  x[n,j)\sqcupplus x[j,m)
\;\;
\texttt{if}\;\;n \le j \le m
\end{array}
  \]
}
  \caption{State syntax. The part marked in red is specific to \qafny, while the other part is generic across different quantum proof systems. A range $x\srange{a_1}{a_2}$ in a locus is the interval $\srange{a_1}{a_2}$ in a qubit array $x$. Loci are finite lists; states are finite sets. Natural-number literals are ranged over by $\ell$; range bounds are arithmetic expressions and may mention program variables, such as $n$ in $x\srange{0}{n}$. The ops after "concatenated op" are concatenations.}
  \label{fig:qafny-state}
\end{figure*}

A program verification in \qafny can be expressed as a separation logic triple $\qpule{\sigma}{\varphi}{e}{\varphi'}$.
\qafny verifies such a triple by constructing a proof tree, as in \citet{li2024qafnyquantumprogramverifier}.
We explain below how \qafny state representations are organized, and discuss the program syntax and proof translation in the following subsections.
As in \Cref{fig:qafny-state}, the states in \qafny triples are syntactic entities that represent quantum states and program transitions.
A state $\varphi$ describes program variable behavior, where variables are partitioned into two different \emph{kinds}: $\cmode$ for classical scalar values, such as natural numbers or reals, and $\qmode{n}$ for a physical $n$-length qubit array conceptually residing in a quantum heap.
We permit users to write scalar types, e.g., $\cn{nat}$ and $\cn{real}$, in a \qafny program, recognized as \cmode-kind.

In \qafny, we abstract a quantum state as a predicate that represents a heap structure, with qubits corresponding to physical heap locations. A state predicate is expressed as $\overline{\eta : \tau \mapsto q}$, meaning that a group of \emph{possibly entangled} qubits $\eta$ maps to a quantum datum $q$ of type $\tau$. Elements in a quantum state predicate are separated by the $\uplus$ operation, conceptually equivalent to the tensor operator $\otimes$. Similar to the separation conjunction in separation logic, it enables modular reasoning about the quantum state predicate and ensures that different unentangled qubit groups are disjoint.
Before we explain the types and loci, we show an example of expressing a \qafny triple from the Deutsch–Jozsa algorithm in \Cref{fig:onetime-hw} (left).
Line $1$ is the pre-state predicate before any operation is executed, with $\tch{0}$ typed data $\ket{\overline{0}}$ (abbreviation of $\textcolor{spec}{\bigotimes^{n}\ket{0}}$) for the separable qubit group $x\srange{0}{n}$.
Applying the Hadamard operation to $x\srange{0}{n}$ in line $2$ turns $x\srange{0}{n}$'s state to a $\tch{1}$ typed datum in line $3$.
The pre- and post-conditions and the operation form a \qafny triple as follows.

{\footnotesize
\[
\qpule{\sigma}{x\srange{0}{n}: \tch{0}\mapsto \ket{\overline{0}}}{\iassign{x\srange{0}{n}}{\ihadh}}{x\srange{0}{n}: \tch{1}\mapsto \ssum{j=0}{2^{n}\sminus 1}{\frac{1}{\sqrt{2^n}}\ket{j}}}
\]
}

\qafny utilizes types to classify quantum data, allowing quantum program verification managed in different views of a quantum state, e.g., every \qafny triple comes with a type environment $\sigma$, where its well-typedness is defined as $\sigma(\eta) = \cn{type}(\varphi(\eta))$ for all $x$. We use $\sigma$ to track changes across different quantum state types, leading to distinct verification strategies, as captured in the proof translation in \Cref{sec:semantics}.

As mentioned above, many entangled quantum data are not expressible as tensor compositions of individual qubits. Instead, they can be expressed in the basis-ket view, i.e., the quantum datum, classified as $\tch{n}$ type in \qafny, can be viewed as a linear sum (array) of basis-kets ($q$), each representing a possible outcome for measuring the qubits.
The $n$ flag represents the number of summation operations in a quantum state representation syntactically.
For example, we merge two qubit groups $x\srange{0}{n}$ and $y[0]$ into the entangled datum ($\tch{1}$) in line $4$, as the locus $x\srange{0}{n}\sqcupplus y[0]$.
Here, the whole datum has $2^n$ possible basis-ket states, $\textcolor{spec}{\ket{j} \ket{1}}$ for $j\in[0,2^n)$, i.e., the basis-vector fragment for $x\srange{0}{n}$ is represented as $j$ while the basis-vector fragment for $y[0]$ is $\textcolor{spec}{\ket{1}}$.

The $\tch{0}$ typed datum, e.g., $x\srange{0}{n}$'s state in line $1$, represents a non-superposition state, since it contains only a singleton basis-ket.
A single qubit state could be in superposition but not entangled, e.g., the state in \Cref{fig:onetime-hw} line $3$ can be represented as $n$-length qubit array ($\bigotimes^{n\sminus 1}_{j=0} x\srange{j}{j\splus 1}$), each of which is a $\tch{1}$ superposition state $\textcolor{spec}{\frac{1}{\sqrt{2}}(\ket{0}+\ket{1})}$, for the range $x\srange{j}{j\splus 1}$; the state is not entangled.

As a state predicate, physical qubit locations might be hard to track, e.g., the entangled qubit information in line $8$ is spread into different basis-kets. Therefore, we develop a \emph{locus} structure $x\srange{0}{n} \sqcupplus y[0]$ to indicate that the two \emph{ranges} $x\srange{0}{n}$ and $y[0]$ are possibly entangled. It is also an information location indicator, i.e., the range $x\srange{0}{n}$ points to the first $n$ bits in each basis-ket and $y[0]$ points to the last bit ($\textcolor{spec}{\ket{b\oplus f(j)}}$).

 \section{Embedding Separation Conjunction and the Frame Rule}\label{sec:frame}

Here, we discuss the translation of the frame rule in \qafny and QHL, and separation conjunctions, and present our first discovery: tensor-based quantum separation conjunctions are pure, a key contribution of the paper.

In separation logic, an assertion is pure when its truth depends only on the store and not on the heap.
When an assertion is pure, a separation conjunction ($*$) is equivalent to an ordinary conjunction ($\wedge$).
Here, we generalize the pureness as follows. Given $P*Q$, for any program operation, if we can show that modifying the operation in $Q$ will never affect the evaluation of $P$, there is pureness on the particular separation conjunction, and $P*Q$ is equivalent to $P \wedge Q$.

A non-pureness scenario is based on a predicate $P * l \mapsto v$, where $l$ is a memory heap location pointing to $v$. Even though $P$ might not mention $l$, a variable $x$ in $P$ could alias $l$, so modifying $v$ might affect the value in $x$.
This scenario cannot arise in \qafny: there is no qubit reference type, loci are ranges of named qubit arrays, and the type environment keeps the loci of a state disjoint, so two names never denote the same physical qubit. In a language with qubit references, or under a reference semantics, the disjointness assumption must be discharged separately, which is what a separating conjunction provides there.

\begin{example}[Pureness Example]\label[example]{example:pure}\rm As an example of showing pureness, the separation conjunction in the proof lines 1-3 in \Cref{fig:onetime-hw} can be replaced by a normal conjunction.

{\footnotesize
      \begin{mathpar}
        \inferrule[]{\triple{x\srange{0}{n}: \tch{0}\mapsto \ket{\overline{0}} }{\qass{x\srange{0}{n}}{\cn{H}};}{x\srange{0}{n}: \tch{1}\mapsto \ssum{j=0}{2^{n}\sminus 1}{\frac{1}{\sqrt{2^n}}\ket{j}}}}
      {\triple{x\srange{0}{n}: \tch{0}\mapsto \ket{\overline{0}} \wedge  y[0] : \tch{0} \mapsto \ket{1}}{\qass{x\srange{0}{n}}{\cn{H}};}{x\srange{0}{n}: \tch{1}\mapsto \ssum{j=0}{2^{n}\sminus 1}{\frac{1}{\sqrt{2^n}}\ket{j}} \wedge  y[0] : \tch{0} \mapsto \ket{1}}}
      \end{mathpar}
}

If we have the example in QHL, the same separation conjunction can also be rewritten as a normal conjunction. Here, $\varphi_{\overline{s}}$ can be viewed as a predicate $\overline{s} \mapsto \varphi$, e.g., $\ket{\overline{0}}_{x\srange{0}{n}}$ means $x\srange{0}{n} \mapsto \overline{0}$.

{\footnotesize
      \begin{mathpar}
        \inferrule[]{\triple{\ket{\overline{0}}_{x\srange{0}{n}} }{\qass{x\srange{0}{n}}{\cn{H}};}{\ssum{j=0}{2^{n}\sminus 1}{\frac{1}{\sqrt{2^n}}\ket{j}_{x\srange{0}{n}}}}}
      {\triple{ \ket{\overline{0}}_{x\srange{0}{n}} \wedge \ket{1}_{y[0]}}{\qass{x\srange{0}{n}}{\cn{H}};}{\ssum{j=0}{2^{n}\sminus 1}{\frac{1}{\sqrt{2^n}}\ket{j}_{x\srange{0}{n}}} \wedge \ket{1}_{ y[0]}}}
      \end{mathpar}
}
\end{example}

In the above example, we replace the tensor operators connecting different unentangled qubit regions with normal conjunctions.
In quantum operations, if we know that the operation, e.g., $\qass{x\srange{0}{n}}{\cn{H}}$, does not mention a qubit region $y[0]$, the operation will not affect the region.
This is different from the need for separation conjunction, where a point operation might affect a memory field even if the operation does not explicitly mention the field.
We have the following theorem.

\begin{theorem}[Pureness]\label{def:pureness}\rm 
Given $\mu$, $\varphi_1$, $\varphi'_1$, and $\varphi_2$, where $\varphi_1$ and $\varphi_2$ describe disjoint sets of qubits and $FV(\mu) \cap FV(\varphi_2) = \emptyset$, we have the following in QHL.
\[\triple{P,\varphi_1 \otimes \varphi_2}{\mu}{Q,\varphi'_1 \otimes \varphi_2} \longleftrightarrow \triple{P,\varphi_1 \wedge \varphi_2}{\mu}{Q,\varphi'_1 \wedge \varphi_2},\]
as well as the following in \qafny.
\[\qpule{\sigma_1 \uplus \sigma_2}{\varphi_1 * \varphi_2}{\mu}{\varphi'_1 * \varphi_2} \longleftrightarrow \qpule{\sigma_1 \uplus \sigma_2}{\varphi_1 \wedge \varphi_2}{\mu}{\varphi'_1 \wedge \varphi_2}.\]

Here, $T_1 \longleftrightarrow T_2$ means that $T_1$ is valid if and only if $T_2$ is valid. We also require $\sigma_1 \vdash \mu \triangleright \sigma'_1$, $\sigma_1 \vdash \varphi_1$, $\sigma_2\vdash\varphi_2$, and $\dom{\sigma_1}\cap\dom{\sigma_2}=\emptyset$: the loci of $\varphi_1$ and $\varphi_2$ are disjoint, which the \qafny type environment enforces and which the translated frame rule below turns into the classical side condition $FV(R)\cap \cn{mod}(\cn{trans}(\mu))=\emptyset$.
\end{theorem}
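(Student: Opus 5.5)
The plan is to prove both equivalences by unfolding validity into the underlying semantics and showing that, under the disjointness hypotheses, the two assertion forms are satisfied by exactly the same states, both before and after running $\mu$. The first step is therefore a semantic lemma. Since $\varphi_1$ and $\varphi_2$ mention disjoint qubit regions, and \qafny has no qubit references, the following holds for QHL: a global state satisfies $\varphi_1\otimes\varphi_2$ iff its restriction to the qubits of $\varphi_1$ satisfies $\varphi_1$ and its restriction to the qubits of $\varphi_2$ satisfies $\varphi_2$.

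In \qafny the same lemma is read off the state syntax of \Cref{fig:qafny-state}. A state predicate is a finite set of locus--datum bindings joined by $\uplus$, and $\dom{\sigma_1}\cap\dom{\sigma_2}=\emptyset$ guarantees that no locus in $\varphi_1$ overlaps one in $\varphi_2$. Hence $\varphi_1*\varphi_2$ and $\varphi_1\wedge\varphi_2$ denote the same set of partitioned states. The one caveat is that $\wedge$ must be read with each conjunct constraining only its own loci, the reading used in \Cref{example:pure}. With that reading, the pre-conditions of the two triples have equal denotations, and so do the post-conditions $\varphi'_1*\varphi_2$ and $\varphi'_1\wedge\varphi_2$.

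The second step handles the post-condition, where disjointness must survive execution. I would use $\sigma_1\vdash\mu\triangleright\sigma'_1$ together with $FV(\mu)\cap FV(\varphi_2)=\emptyset$ to show that $\mu$ acts only on the loci in $\sigma_1$ (extended to $\sigma'_1$). This amounts to a locality, or non-interference, lemma proved by induction on $\mu$. For heap mutation, the operation is applied to $\eta\subseteq\dom{\sigma_1}$ and is the identity tensor on the other factors. Measurement collapses only the measured locus, and it rescales amplitudes only in the entangled group containing it, which lies in $\sigma_1$. Quantum conditionals, sequencing and loops follow from the inductive hypothesis. As a result, the $\varphi_2$ component of any output state equals its input component, and the loci of $\varphi'_1$ remain disjoint from those of $\varphi_2$.

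With both lemmas in hand, each direction of $\longleftrightarrow$ is immediate. Take a state satisfying one pre-condition. It satisfies the other by the first lemma. Validity of the assumed triple then gives the corresponding post-condition, and the first lemma transfers it back. For the QHL statement the classical parts $P,Q$ are carried along unchanged, since they constrain only the classical store.

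I expect the main obstacle to be the measurement case of the locality lemma, especially in QHL, where post-measurement states carry probabilities or partial density operators. There I would argue through the tensor factorization: the projector is $M\otimes I$, so the trace and normalization factor computed from $\varphi_1\otimes\varphi_2$ equals the one computed from $\varphi_1$ alone, because the $\varphi_2$ part is normalized. This shows that the $\varphi_2$ factor is untouched even after renormalization.
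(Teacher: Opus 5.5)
\textbf{The first lemma's QHL half needs purity, not disjointness.} Your locality lemma (step 2) is essentially the paper's whole argument. The paper justifies the theorem by showing that executing $\mu$ never affects $\varphi_2$: through the type system for \qafny, and through induction with \qafny's semantic soundness for the stronger \qafny result. In your final assembly, however, that lemma does no work. Once your first lemma gives $\denote{\varphi_1\otimes\varphi_2}=\denote{\varphi_1\wedge\varphi_2}$, and the same for the primed post-conditions, the two triples are equi-valid for any program whatsoever.

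The only extra fact you need about the post-condition is that $\varphi'_1$ and $\varphi_2$ mention disjoint qubits. That is a typing fact: $\sigma'_1\vdash\varphi'_1$, with the loci of $\sigma'_1$ drawn from the qubits of $\sigma_1$, as in the Quantum State Shrinking lemma. Locality of execution cannot supply it, because it says nothing about which qubits an arbitrary predicate $\varphi'_1$ mentions. So the measurement case you flag as the main obstacle is off the critical path for the equivalence as stated. It matters for the frame rule, which is what the paper's non-interference argument is really aimed at.

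That leaves your first lemma carrying the whole proof, and for QHL its justification misses the essential idea. A QHL state is an arbitrary, possibly entangled, state of all registers, so ``its restriction to the qubits of $\varphi_1$'' must mean the reduced state (partial trace). Disjointness of the registers does not rule out entanglement between them. The absence of qubit references is a \qafny fact and is irrelevant here. So neither gives the direction from $\varphi_1\wedge\varphi_2$ to $\varphi_1\otimes\varphi_2$. If the conjuncts constrained only mixed reduced states (say, both qubits maximally mixed), a Bell pair would satisfy the conjunction but not the tensor product.

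The step holds in the paper's setting for a different reason: the Dirac-notation predicates assert \emph{pure} states, and a pure reduced state forces factorization. If the reduced state on the qubits of $\varphi_1$ is the pure state $\varphi_1$, then the global state is $\varphi_1\otimes\rho'$ for some state $\rho'$ of the remaining qubits. Applying the same fact to $\varphi_2$ inside $\rho'$ yields the product. You need to state and use this. Alternatively, use a subspace semantics in which $\wedge$ is intersection, since $(S_1\otimes H_2)\cap(H_1\otimes S_2)=S_1\otimes S_2$.

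For \qafny your lemma is correct, but again not because of disjointness alone. \qafny's semantic states are maps from disjoint loci to data, so cross-locus entanglement cannot be represented, and the identification is essentially definitional.
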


The proof of the above theorem for the \qafny version relies on its type system, and the proof for QHL relies on a clear separation for distinguishing qubit locus regions and demonstrates that the execution of $\mu$ does not affect $\varphi_2$.
The \qafny locus type system purely separates two state conditions if they mention disjoint loci, as we have a stronger result for \qafny below.

\begin{theorem}[Qafny Pureness]\label{def:qafnypureness}\rm 
Given $\sigma_1 \vdash \mu \triangleright \sigma'_1$ and $\qpule{\sigma_1 \uplus \sigma_2}{\varphi_1 * \varphi_2}{\mu}{\varphi'}$, such that $\sigma_1 \vdash \varphi_1$ and $\sigma_2\vdash\varphi_2$, there exists $\varphi'_1$, such that $\sigma'_1 \vdash \varphi'_1$ and $\varphi' = \varphi'_1 * \varphi_2$.
\end{theorem}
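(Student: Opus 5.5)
We argue by induction on the derivation of $\qpule{\sigma_1 \uplus \sigma_2}{\varphi_1 * \varphi_2}{\mu}{\varphi'}$, generalizing over $\varphi_1$ and $\sigma'_1$ but keeping $\varphi_2$ and $\sigma_2$ fixed. The invariant behind the claim is that the locus typing $\sigma_1 \vdash \mu \triangleright \sigma'_1$ confines every locus that $\mu$ reads or writes to $\dom{\sigma_1}$, and $\dom{\sigma_1}\cap\dom{\sigma_2}=\emptyset$. So no \qafny rule applied to $\mu$ can rewrite, split, merge, or retype an entry $\eta:\tau\mapsto q$ of $\varphi_2$. Concretely, for each rule whose conclusion matches $\varphi_1*\varphi_2$, we exhibit $\varphi'_1$ from the rule's own output on the $\sigma_1$-part and check $\sigma'_1\vdash\varphi'_1$ using the type-preservation property already implicit in the well-typedness condition $\sigma(\eta)=\cn{type}(\varphi(\eta))$.

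The base cases come first. For $\sskip$, take $\varphi'_1=\varphi_1$. For heap mutations $\ssassign{\eta}{}{op}$ (H, QFT, oracles $\iota$), typing gives $\eta\subseteq$ some locus of $\sigma_1$. The proof rule then rewrites only the entry whose locus contains $\eta$, possibly after locus equations such as $x[n,m)\simeq x[n,j)\sqcupplus x[j,m)$ or $\tcht$-type coercions that merge entries. Those entries all lie in $\dom{\sigma_1}$, since merging requires the entries to appear in the typing of $\eta$, and the typing judgment for $\mu$ only mentions $\sigma_1$. Hence the postcondition has the form $\varphi'_1*\varphi_2$. Measurement $\cassign{p,x}{\smea{\eta}}$ is handled the same way: it removes or collapses the entry containing $\eta$ inside $\varphi_1$, and the classical facts about $p,x$ belong to the classical part, not to $\varphi_2$. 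For sequencing, we apply the induction hypothesis to the first command to obtain the intermediate state $\varphi''_1*\varphi_2$ with type $\sigma''_1$, then apply it again to the second command, using the typing inversion $\sigma_1\vdash\mu_1\triangleright\sigma''_1$ and $\sigma''_1\vdash\mu_2\triangleright\sigma'_1$. For loops, we apply the hypothesis to the body under the invariant and use the fact that the invariant's quantum part restricted to $\dom{\sigma_2}$ must be $\varphi_2$, because it is equal at entry. For the quantum conditional $\sifq{b}{\mu}$, the control qubits and body loci all lie in $\sigma_1$, and the rule's entangled lift of the body acts only on the merged locus in $\sigma_1$.

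The structural rules need separate treatment. \emph{Consequence/equivalence} steps, such as the rewrite between lines 3 and 4 of \Cref{fig:onetime-hw}, could in principle merge a $\varphi_1$ locus with a $\varphi_2$ locus and break the decomposition. We handle this by noting that the resulting environment must still type $\mu$ via $\sigma_1\vdash\mu\triangleright\sigma'_1$, so such a merge would force an enlarged environment inconsistent with $\sigma_1$. We therefore restrict attention to, or normalize the derivation into, consequence steps that respect the partition $\dom{\sigma_1}\uplus\dom{\sigma_2}$. The frame rule is handled directly: a frame $R$ is either contained in $\varphi_2$ or disjoint from it, so the hypothesis applies to the framed premise.

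We expect this consequence/normalization step to be the main obstacle. The plan is to prove an auxiliary lemma first: any \qafny derivation can be transformed into one in which every state-equivalence step is applied only to loci reachable from the typing of the current command. We would prove that lemma by pushing equivalence steps down to the rule that requires them. With that lemma in hand, the induction above goes through rule by rule.
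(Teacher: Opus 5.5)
Your induction on the derivation breaks at the step you flag as the main obstacle, and the proposed repair does not work.

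The claim that a consequence step merging a $\dom{\sigma_1}$-locus with a $\dom{\sigma_2}$-locus would be inconsistent with $\sigma_1 \vdash \mu \triangleright \sigma'_1$ is false. The \qafny type rules for heap mutations let $\mu$ act on $\eta$ inside any enclosing locus $\eta\sqcupplus\eta'$, so the merged environment still types $\mu$. A post-weakening step after $\mu$ is not constrained by the typing of $\mu$ at all.

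The paper's own running example is a counterexample. In lines 3--6 of \Cref{fig:onetime-hw}, take $\sigma_1=\{y[0]:\tch{0}\}$, $\varphi_1 = y[0]\mapsto\ket{1}$, $\sigma_2=\{x\srange{0}{n}:\tch{1}\}$, $\varphi_2$ the uniform superposition on $x\srange{0}{n}$, and $\mu = \qass{y[0]}{\cn{H}}$. Every hypothesis holds, with $\sigma'_1=\{y[0]:\tch{1}\}$. Yet the derivation merges the two loci at line 4, and the derived post-condition $\varphi'$ at line 6 is a single $\tch{2}$ entry on $x\srange{0}{n}\sqcupplus y[0]$. That is not syntactically $\varphi'_1*\varphi_2$ for any $\varphi'_1$.

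No normalization lemma helps, because $\varphi'$ is part of the given data. You can push the cross-partition merge to the end of the derivation, but you can never delete it. So no argument on the shape of the derivation can deliver the literal equality $\varphi'=\varphi'_1*\varphi_2$. The conclusion holds only up to semantic equivalence. Seeing that line 6 factors as $\varphi_2\otimes\sum_{b=0}^{1}\frac{1}{\sqrt{2}}(-1)^{b}\ket{b}$ is a fact about the meaning of the datum, not about the syntax.

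This is why the paper argues semantically rather than by induction on derivations. It invokes \qafny's soundness with respect to the quantum semantics. Then, whatever consequence and equivalence steps the derivation contains, $\denote{\varphi'}$ must describe the result of running $\mu$ on $\denote{\varphi_1}\otimes\denote{\varphi_2}$. It then shows by induction on $\mu$ that the locus typing confines execution to the qubits of $\dom{\sigma_1}$, so the $\dom{\sigma_2}$ factor stays $\denote{\varphi_2}$.

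To repair your proposal:
\begin{itemize}
\item Replace the normalization lemma by this frame property of the semantics.
\item State the conclusion as $\denote{\varphi'}\equiv\denote{\varphi'_1*\varphi_2}$, which is all the soundness-based argument supports.
\item Reuse your cases for mutation, measurement, sequencing and loops as that induction on $\mu$, but carry them out on evaluated states rather than on proof trees.
\end{itemize}

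Separately, your frame-rule case assumes the frame $R$ is either contained in $\varphi_2$ or disjoint from it. In general $R$ can contain untouched entries of both $\varphi_1$ and $\varphi_2$, so that case needs the induction generalized over the split.
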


\begin{proof}
The proof utilizes the \qafny proof system soundness \cite{li2024qafnyquantumprogramverifier} \footnote{This is \qafny's soundness with respect to the quantum semantics, not one of the translation theorems of this paper.}, as we inductively show that for any $e$, if a state $\varphi$ can be partitioned into $\varphi_1$ and $\varphi_2$ based on disjoint loci in $\dom{\sigma_1}$ and $\dom{\sigma_2}$ and $e$ only operates on $\varphi_1$, then $\varphi_2$ is never affected.
\end{proof}

{\small
      \begin{mathpar}
    \inferrule[]{\dom{\sigma}\cap FV(\varphi_2)=\emptyset \\\\
       \cn{mod}(\mu)\subseteq \dom{\sigma} \\ \qpule{\sigma}{\varphi_1}{\mu}{\varphi'_1}}
                {\qpule{\sigma\uplus\sigma'}{\varphi_1 * \varphi_2}{\mu}{\varphi'_1 * \varphi_2}}
                \;\;
\longrightarrow\;\;
    \inferrule[]{FV(R)\cap \cn{mod}(\cn{trans}(\mu))=\emptyset\\\triple{P}{\cn{trans}(\mu)}{Q}}
                {\triple{P \wedge R}{\cn{trans}(\mu)}{Q \wedge R}}
                      \end{mathpar}
}

\Cref{def:pureness} suggests that separation conjunctions can be replaced by ordinary ones if the two pieces describe disjoint loci. It indicates that the separation conjunctions in the \qafny and QHL frame rule -- where the \qafny one is listed on the left above -- can be rewritten to a formula containing only ordinary conjunctions; thus, translating the two quantum proof systems to Hoare logic will not require adding separation conjunctions.
Here, $\cn{trans}(\mu)$ represents the translated Hoare logic program for $\mu$, $P$, $Q$, and $R$ are the translated predicates for $\varphi_1$, $\varphi'_1$, and $\varphi_2$.
The details of these translated functions are given in \Cref{sec:trans}.

In fact, the above result is general for all the tensor ($\otimes$) based quantum separation conjunctions \cite{qsepa,qseplocal}: once the disjointness of the two qubit regions is tracked, as the \qafny type environment does, these separation conjunctions are pure, i.e., they can be treated as normal conjunctions.

 \section{The Intermediate Representation Language based on Hoare Logic}
\label{sec:semantics}

We describe HLIR based on Hoare logic, where every operation is definable and provable as a composition of operations in the classical Hoare logic language with array assignments. Mainly, we discuss the syntax (\Cref{fig:hlir-syntax}), the informal meaning, and array-state representations. 
We detail HLIR with proof examples, along with our translation proof from \qafny to HLIR in the next section.

\begin{figure*}[h]
{\hspace*{-1em}
  \small
  \[\begin{array}{llcl} 
      \text{Index Function} & \rho & ::= & \cn{ind}^{\varphi}(\eta)\\[0.3em]
      \text{Array Element} &\theta & ::=& \cn{element}\,\{\cn{amp} = z, \cn{vector} = c\} \\[0.3em]
      \text{Function} &\chi & ::=& \cn{up}^a(\theta,z) \mid \cn{up}^b(\theta,c,\rho) \mid \cn{range}(\overline{m}) \mid Q_{\cn{len}}(A,\overline{m}) \mid \cn{enum}(n) \mid \cn{merge}(\rho,c,c)\\[0.3em]
      \text{Statement} & e & ::= &  \cassign{A'}{\cn{mapN}(A,\overline{m},f)} \mid \cassign{A'}{\cn{filterN}(A,b,\rho)} \mid \cassign{A_2}{\cn{assignN}(A,A_1,\overline{m},\rho,f)}\\[0.2em]
      && \mid & \cassign{r}{\cn{sqrtSum}(A,\overline{m})}\mid \cassign{A_2}{\cn{concat}(A,A_1)} \mid\cassign{\overline{m}}{\slen{A}} 
      \\[0.2em]
            && \mid & \textcolor{blue}{\cassign{A'}{\cn{locateN}(A,\overline{m},\rho)}} \mid \textcolor{blue}{\cassign{A'}{ \cn{copyN}(A,\overline{m},\rho)}} \mid \textcolor{blue}{\cassign{A'}{\cn{enumN}(A,\overline{m},\rho)}}
      \\[0.3em] & &
     \mid & \textcolor{blue}{\cassign{A_2}{\cn{ampN}(A,A_1,\overline{m},\rho)}}
                                 \mid \textcolor{blue}{\cassign{A'}{\cn{upDim}(A, \overline{m},\overline{m'})}}
    \end{array}
  \]
}
  \caption{HLIR Syntax }
  \label{fig:hlir-syntax}
\end{figure*}

\noindent\textbf{\textit{HLIR State Representation.}}
As we mentioned, for an $\tch{n}$-typed quantum state, loci partition the state into disjoint qubit regions, where every fragment in a locus is disjoint from other fragments in the locus. We model the $\tch{n}$-typed state in Hoare logic as an $n$-dimensional array, containing elements $\theta$, modeled as a record in \Cref{fig:hlir-syntax}.
The array indexing operation $A[j_0,...,j_{n\sminus 1}]$ can be abbreviated as $A[\overline{j}]$, which represents a basis-ket state; the array can be thought of as a linear sum of basis-kets. In each element, one can access the amplitude via $\cn{amp}$ and the vector via $\cn{vector}$. The two functions $\cn{up}^a(\theta,z)$ and $\cn{up}^b(\theta,c,\rho)$ update the $\cn{amp}$ of an element and the $\rho$ positions of its \cn{vector}, respectively.

Essentially, \cn{vector} is a bit sequence. We maintain a map $\rho \triangleq \cn{ind}^{\varphi}(\eta)$ from the qubits in the locus fragment $\eta$ to the indices in the bit sequence, to record the index positions.
As an instance, \Cref{eq:example1} is an array representation of a quantum state, pointed to by the locus $x[0]\sqcupplus y[0]$, where $\cn{ind}^{\varphi}(x[0])$, with $\varphi$ being the state in \Cref{eq:example1}, points to the index position $0$, while $y[0]$ points to the index position $1$.
We have the following lemma for $\cn{ind}$.

\begin{lemma}[Disjointness Preservation]\label[lemma]{def:disjointness}\rm 
Given a locus $\eta$, an $\tch{n}$ typed state $\varphi$, and a locus map $\cn{ind}$, where  $\cn{indices}(A)=\cn{ind}^{\varphi}(\eta)$ and for every $j_0$,...,$j_{n\sminus 1}$, we have $\slen{A[j_0,...,j_{n\sminus 1}]}=\slen{\eta}$, and for every locus fragment $\eta'$, s.t. $\eta' \subseteq \eta$ and $\cn{indices}(A')=\cn{ind}^{\varphi}(\eta')$, for $j_0,...,j_{n\sminus 1}$, $\cn{vector}(A'[j_0,...,j_{n\sminus 1}])$  $\subseteq$  $\cn{vector}(A[j_0,...,j_{n\sminus 1}])$.
\end{lemma}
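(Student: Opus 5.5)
The plan is to read the statement as a claim about the index map $\cn{ind}^{\varphi}$ and to reduce it to two structural properties of that map: it is \emph{injective}, and it is \emph{monotone} with respect to sub-loci. I will first fix the meaning of $\cn{ind}^{\varphi}(\eta)$ for an $\tch{n}$ typed state $\varphi$. The locus $\eta=s_0\sqcupplus\cdots\sqcupplus s_{k-1}$ is a finite list of ranges, and each basis vector $c$ in the elements of $A$ is the concatenation of the bit strings for $s_0,\dots,s_{k-1}$ in that order. So $\cn{ind}^{\varphi}$ sends the qubit $x[i]$ of a range $s_t=x\srange{a}{b}$ to the position $\sum_{t'<t}\slen{s_{t'}}+(i-a)$. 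With this definition, two facts follow by induction on the length of the locus list, using the locus equations of \Cref{fig:qafny-state} ($x[n,m)\simeq x[n,j)\sqcupplus x[j,m)$ and $\emptyset\sqcupplus\eta\simeq\eta$). First, $\cn{ind}^{\varphi}(\eta)$ is a bijection onto $[0,\slen{\eta})$; this matches the hypothesis $\slen{A[\overline{j}]}=\slen{\eta}$. Second, for $\eta'\subseteq\eta$ we have $\cn{ind}^{\varphi}(\eta')\subseteq\cn{ind}^{\varphi}(\eta)$, meaning the image positions of $\eta'$ form a subset of those of $\eta$. The essential point is that $\cn{ind}^{\varphi}$ depends on the ambient state $\varphi$ and not on the sub-locus being queried.

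Next I will interpret $A'$ with $\cn{indices}(A')=\cn{ind}^{\varphi}(\eta')$ as the array obtained by restricting each element of $A$ to those positions. Concretely, $A'$ has the same dimensions $\overline{m}$ as $A$, and $\cn{vector}(A'[\overline{j}])=\cn{vector}(A[\overline{j}])[\cn{ind}^{\varphi}(\eta')]$. This is exactly the specification of $\cn{locateN}$ described in \Cref{sec:overview}. Here "$\subseteq$" on vectors is read as the partial-map or sub-sequence inclusion: every position present in the left vector is present in the right vector with the same bit.

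With both facts in place, the main claim is a pointwise argument. Fix $\overline{j}\in\cn{range}(\overline{m})$ and any position $p\in\cn{ind}^{\varphi}(\eta')$. By monotonicity, $p\in\cn{ind}^{\varphi}(\eta)=\cn{indices}(A)$, so $p$ is a valid index of $\cn{vector}(A[\overline{j}])$. By the restriction definition, the bit of $A'[\overline{j}]$ at $p$ is the bit of $A[\overline{j}]$ at $p$. Quantifying over $p$ gives the inclusion, and quantifying over $\overline{j}$ finishes the argument. Injectivity ensures that distinct qubits of $\eta'$ never collide on the same position, so the restriction is well defined. It also gives the intended corollary that disjoint fragments $\eta_1,\eta_2\subseteq\eta$ occupy disjoint positions.

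I expect the main obstacle to be the monotonicity step when $\eta'$ is not a literal prefix or sub-list of $\eta$. For example, $\eta'$ may cut a range, as in $x\srange{1}{3}\subseteq x\srange{0}{n}\sqcupplus y[0]$, or the state may have been reordered. To handle this, I will first normalise $\eta$ by splitting ranges at the endpoints of $\eta'$ using the locus equation. I will then check that splitting leaves $\cn{ind}^{\varphi}$ unchanged, since the offset formula is additive across a split. After normalisation, $\eta'$ is a sub-list of $\eta$ and the inductive argument applies directly.
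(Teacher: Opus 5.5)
The paper gives no proof of this lemma. It states it and remarks only that it relies on a hidden assumption: all qubits of $\varphi$ lie in one finite universe, justified afterwards by \Cref{def:shrinking}, so that $\cn{ind}^{\varphi}$ is a single map fixed by the state. Your proposal supplies the missing argument, and your reduction to monotonicity of $\cn{ind}^{\varphi}$ plus the restriction semantics of $A'$ is sound. Your ``essential point'', that $\cn{ind}^{\varphi}$ is determined by the ambient state rather than by the fragment being queried, is exactly that assumption. The paper's global-universe view gives something extra: one index space for the whole program, which the translation needs when loci are merged or measured away. Your per-state map suffices here because $\varphi$ is fixed. You are right to add a hypothesis tying $A'$ to $A$. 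The statement only fixes $\cn{indices}(A')$, and for an arbitrary $A'$ with that index set the conclusion is false. Reading $A'$ as the pointwise $\cn{locateN}$-restriction of $A$ is how the lemma is used (e.g.\ for $A_3$ in \Cref{proof:qif}), so it is the intended reading, not a gap.

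Your route is heavier than it needs to be because of the concrete offset formula $\sum_{t'<t}\slen{s_{t'}}+(i-a)$. The overview (\Cref{sec:overview}) warns that the bit order of a range inside a ket depends on the array layout and need not be the natural one, so committing to this formula is unnecessary and possibly inaccurate. Your proof uses only two facts: $\cn{ind}^{\varphi}$ is one injective map from the qubits of the ambient locus to positions, and $\cn{ind}^{\varphi}(\eta')$ is the image of the qubits of $\eta'$. Monotonicity is then just ``the image of a subset is a subset'', so your anticipated obstacle disappears.

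The range-splitting normalisation matters only for checking that a particular layout respects the locus equations. Its conclusion that $\eta'$ becomes a sub-list of $\eta$ also fails when $\eta'$ lists its ranges in a different order. That is harmless for partial-map inclusion but breaks a sub-sequence reading of $\subseteq$, so drop the latter reading.

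Finally, injectivity is not what makes the restriction well defined, since restriction to any set of positions is. What injectivity gives is that disjoint fragments occupy disjoint positions, which is the property the lemma's name refers to.
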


The above lemma has a hidden assumption when applied to \qafny: we assume we can always refer to a finite global universe containing all qubits in the state $\varphi$. Such a statement is valid in \qafny as well as many other quantum proof systems. We present the following lemma, which states that the resulting quantum state after executing a \qafny program is always a subset of the input state. Thus, for any program, there is a quantum state $\varphi$ that acts as its universe.

\begin{lemma}[Quantum State Shrinking]\label[lemma]{def:shrinking}\rm 
Given a state $\varphi$, a \qafny program $\mu$, and a type environment $\sigma$, such that $\sigma \vdash \varphi$, $\sigma \vdash \mu \triangleright \sigma'$, and $(\varphi,\mu)\Downarrow \varphi'$ (the \qafny program $\mu$ is evaluated to $\varphi'$ by the input state $\varphi$), we have $\dom{\varphi'}\subseteq \dom{\varphi}$.
\end{lemma}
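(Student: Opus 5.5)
We prove the Quantum State Shrinking lemma by induction on the derivation of the big-step judgment $(\varphi,\mu)\Downarrow\varphi'$, or equivalently on the structure of $\mu$. The induction hypothesis is that for every $\sigma$, $\varphi$ with $\sigma\vdash\varphi$ and $\sigma\vdash\mu\triangleright\sigma'$, evaluation yields $\dom{\varphi'}\subseteq\dom{\varphi}$. Here $\dom{\varphi}$ is the set of qubits covered by the loci of $\varphi$, not the syntactic list of loci. Re-associating or splitting loci via the locus equations of \Cref{fig:qafny-state}, e.g.\ $x[n,m)\simeq x[n,j)\sqcupplus x[j,m)$, therefore leaves the domain unchanged. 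We first record this as an auxiliary fact: every state equivalence $\varphi\equiv\varphi_1$ used in the semantics, including merging $\uplus$-separated pieces into one entangled locus as in \Cref{fig:onetime-hw} line 4, preserves the qubit set.

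The base cases come next. For $\sskip$ the state is unchanged. For a heap mutation $\ssassign{\eta}{}{op}$, the semantics first rewrites $\varphi$ so that $\eta$ is a prefix of some locus $\eta\sqcupplus\eta'$. It then applies $op$ (\texttt{H}, \texttt{QFT}, or an \oqasm oracle $\iota$) basis-ket-wise, changing amplitudes and vectors but never the locus. Hence the domain stays the same; this uses the typing premise $\sigma\vdash\mu\triangleright\sigma'$, which guarantees $\eta$ is already present in $\sigma$. For measurement $\cassign{p,x}{\smea{\eta}}$, the semantics removes $\eta$ from its locus $\eta\sqcupplus\eta'$ and keeps $\eta'$ with the filtered, renormalised basis-kets. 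It binds only the classical variables $p,x$, which are not in the quantum domain. So $\dom{\varphi'}=\dom{\varphi}\setminus\eta\subseteq\dom{\varphi}$. This is the only case that strictly shrinks the domain.

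For the inductive cases, sequencing $\sseq{\mu_1}{\mu_2}$ follows by applying the hypothesis twice with transitivity of $\subseteq$. This needs type preservation, namely that the intermediate state is well-typed under the intermediate environment, which we take from \qafny's soundness \cite{li2024qafnyquantumprogramverifier}. The for-loop reduces to iterated sequencing, by an inner induction on the number of iterations $a_2-a_1$, with the empty loop as $\sskip$. For the quantum conditional $\sifq{b}{\mu}$, the semantics merges the guard qubits with the body's locus, evaluates $\mu$ on the basis-kets where $b$ holds, and recombines them with the untouched ones. The typing rule for conditionals forbids measurement in $\mu$ and requires the guard loci to be disjoint from $\mu$'s loci. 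The body therefore preserves its domain exactly by the hypothesis, and so the whole conditional does too.

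The main obstacle is the conditional case, together with the locus-rewriting steps hidden in every rule. There the semantics silently uses equivalences to rearrange loci, and we must check that each such rearrangement is domain-preserving rather than introducing fresh qubits. I would handle this by proving the auxiliary fact once, by induction on the locus-equivalence derivation, and invoking it uniformly. I would also note that \qafny has no allocation construct in the fragment of \Cref{fig:vqimp-full}; without that, the lemma would fail.
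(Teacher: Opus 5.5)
\textbf{Comparison.} The paper states this lemma without proof. It is invoked only as the ``hidden assumption'' behind the Disjointness Preservation lemma, so there is no author argument to match; the closest is the one-line induction sketched for the \qafny Pureness theorem. Your induction on the evaluation derivation is the natural way to fill that gap, and it is sound.

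Your reading of $\dom{\varphi}$ as the set of covered qubits, rather than the syntactic list of loci, is in fact forced. Merging $x\srange{0}{n}$ and $y[0]$ into $x\srange{0}{n}\sqcupplus y[0]$, as in line 4 of the running example, would otherwise already falsify the statement.

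\textbf{Conditional case.} The induction hypothesis gives only $\subseteq$ for the body, so ``preserves its domain exactly by the hypothesis'' does not follow from it. Exactness comes instead from the output environment $\sigma\uplus\{\eta':\tch{n'}\}$ in \rulelab{T-If}. That same output environment is also what excludes measuring part of $\eta'$, since this paper's \rulelab{T-If} has no explicit no-measurement premise. Alternatively, exactness follows simply because the conditional reassembles its result on $\eta\sqcupplus\eta'$. Either way, inclusion is all the lemma needs.

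\textbf{Sequencing case.} The step you need is \qafny's type preservation (subject reduction). It is not the proof-system soundness that the paper elsewhere calls ``\qafny's soundness''. More economically, you can drop the typing premises from the induction altogether. Each evaluation rule fires only when the locus it acts on is already present in $\varphi$, up to the domain-preserving locus equations. Then sequencing and loops reduce to transitivity of $\subseteq$.
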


\noindent\textbf{\textit{Briefing the HLIR Operation Syntax.}}
The functions and statements in \Cref{fig:hlir-syntax} are used in the translation proof in \Cref{sec:trans}.
We provide informal descriptions below.

We have a set of helper functions to define small array properties.
The operations $\cn{range}(\overline{m})$ and $Q_{\cn{len}}(A,\overline{m})$ are described in \Cref{sec:background}.
$\cn{enum}(n)$ enumerates numbers from $0$ to $n\sminus 1$, $\cn{merge}(\rho, c_1,c_2)$ creates a new vector by replacing $c_1$'s $\rho$ indices with $c_2$, and $\cn{vector}(c)[\rho]$ shrinks vector $c$ to $\rho$.

We divide the statements into two sets: a set (marked black) of essential operations and a set (marked \textcolor{blue}{blue}) of compositional operations definable in terms of the essential set.
Most of the operations in the essential set are standard functional programming operations.
$\cn{mapN}$ is a high-order map operation mapping the function $f$ to every element in the array $A$,
$\cn{filterN}$ filters out elements in an array ($A$) that do not satisfy the boolean function $b$,
$\cn{concat}$ concatenates two arrays if they have the same dimensions,
$\slen{A}$ produces the array lengths, which can be high-dimensional,
and $\cn{sqrtSum}$ computes the square root of the sum of the squared amplitude norms over all array elements, as it can be built on top of a higher-order folding function.
$\cn{assignN}(A,A_1,\overline{m}, \rho,f)$ is an update function that updates $A$ with $A_1$'s element amplitudes and vectors (the shrunk $\rho$ fragments), assuming $\slen{A}=\slen{A_1}$, where $f$ defines how to combine the amplitudes of $A$ and $A_1$'s elements.

The set of compositional operations is definable by the essential set. We examine them in \Cref{sec:trans} and briefly describe them here.
$\cn{locateN}$ creates a new array $A'$ by shrinking the vectors in every element to the indices $\rho$.
$\cn{copyN}(A,\overline{m}, \cn{ind}^{\rho}(\eta))$ and $\cn{enumN}(A,\overline{m}, \cn{ind}^{\rho}(\eta))$ are special case of \cn{mapN}.
The former creates an $n+1$ dimensional array $A'$ if $A$ is $n$ dimensional, by replacing each element with a list of $2^{\slen{\eta}}$ copies of that element.
If the inner dimension has length $2^{\slen{\eta}}$, $\cn{enumN}$ creates an array $A'$ by copying $A$ but differing in terms of the vectors, where $\cn{vector}(A'[\overline{j},i],\cn{ind}^{\rho}(\eta)) = i$ and $\cn{vector}(A'[\overline{j},i], \{x|x\not\in\cn{ind}^{\rho}(\eta)\}) = A[\overline{j},i]$.
$\cn{ampN}(A,A_1,\overline{m},\rho)$ multiplies $A_1$ and $A$'s element amplitudes.
$\cn{upDim}(A,\overline{m}, \overline{m'})$ upgrades $A$'s dimensions to $\overline{m'}$, assuming that $\overline{m}\sqsubseteq \overline{m'}$ (defined in \Cref{sec:background}).
The operation fills the new dimensions by copying the elements in the original $A$.

\section{Translating Qafny Operations to HLIR}\label{sec:trans}

This section describes the translation proof from \qafny to HLIR.
The translation of Qafny to HLIR turns Qafny proof triples ($\qpule{\sigma}{\varphi}{\mu}{\varphi'}$) to Hoare logic triples ($\triple{P}{e}{Q}$), including two pieces: translating the \qafny program $\mu$ to an IR code $e$ and, respectively, translating the pre- and post- conditions $\varphi$ and $\varphi'$ to $P$ and $Q$.
The two pieces can be described by two judgments: $\sigma \vdash \mu \triangleright \sigma' \gg e$ and $\sigma \vdash \varphi \gg P$.
The former extends the \qafny type system ($\sigma \vdash \mu \triangleright \sigma'$) and the latter extends the well-typed quantum state ($\sigma \vdash \varphi$).
In \Cref{sec:oracletrans,sec:hadtrans,sec:condtrans,sec:meatrans,sec:transclassical}, we discuss translations for different \qafny operations, along with validity arguments, the completeness direction: for any derivable \qafny triple for an operation, we show that its translated Hoare triple is valid. 
Finally, we state the translation completeness, translation soundness, and translation size theorems in \Cref{sec:proofs}.

The translations are organized by the effect of an operation on the array representation.
(1) Operations that create superposition add a dimension: \cn{H} and \cn{QFT} (and, in the same way, rotations that create superposition) translate to \cn{copyN}, \cn{enumN}, and \cn{ampN} (\Cref{sec:hadtrans}); the amplitudes need not be uniform, since \cn{ampN} multiplies by an arbitrary function of the old and new basis vectors, of which the uniform Hadamard amplitude is one instance.
(2) Operations that map elements in place, i.e., basis permutations and phases such as \cn{X}, \cn{RZ}, and the arithmetic oracles $\iota$, translate to \cn{locateN}, \cn{mapN}, and \cn{assignN} (\Cref{sec:oracletrans}).
(3) Controlled operations, i.e., quantum conditionals, split the array with \cn{filterN}, translate the body, and reassemble it with \cn{concat} (\Cref{sec:condtrans}).
(4) Measurement filters and renormalizes (\Cref{sec:meatrans}), and classical statements are translated structurally (\Cref{sec:transclassical}).
Any gate given by its action on basis kets falls into class (1) or (2), possibly under a control of class (3).

\subsection{Translating Oracle Operations}\label{sec:oracletrans}
Here, we discuss the transformation of the oracle operation proof rule below to HLIR.

{
      \begin{mathpar}
      \inferrule[]{}
      {\qpule{\{\eta \sqcupplus \eta' : \tch{n}\}}{\eta \sqcupplus \eta' \mapsto q}
        {\ssassign{\eta}{}{\mu}}{\eta \sqcupplus \eta' \mapsto \denote{\mu}^{\slen{\eta}}q}}
        \end{mathpar}
}

This rule relies on a frame rule to focus state predicates to the locus $\eta \sqcupplus \eta'$. Because the \qafny separation conjunction is pure, it is no different from a normal conjunction. We can rewrite the above \qafny proof rule to the version below, which relates a general post-condition $\varphi$ to a pre-condition, in a style similar to the Hoare logic array assignment rule.

{
      \begin{mathpar}
      \inferrule[]{}
      {\qpule{\{\eta \sqcupplus \eta' : \tch{n}\}\uplus \sigma}{\varphi[(\eta \sqcupplus \eta')\{\eta \rightarrow \denote{\mu}(\eta)\}/\eta \sqcupplus \eta']}{\ssassign{\eta}{}{\mu}}{\varphi}}
        \end{mathpar}
}

To translate the proof rule to Hoare logic, we construct a composition of classical operations representing the effect of the operation $\ssassign{\eta}{}{\mu}$, including three components: the operations \cn{locateN}, \cn{mapN}, and \cn{assignN}, mentioned in \Cref{sec:guidedtypeexp}. 
One example is given in \Cref{sec:relationclassical}.

\myparagraph{Validity Proof.}
We state the translated Hoare triple as a lemma below, showing its validity.

\begin{lemma}[Valid Oracle Application Transformation]\label[lemma]{def:mult-mut}\rm 
Given a quantum state $\varphi$ and an operation $\ssassign{\eta}{}{\mu}$ in the locus $\eta \sqcupplus \eta'$, with the $\tch{n}$-typed quantum datum $\varphi(\eta\sqcupplus \eta')$, transformed to an $n$ dimensional array $A$ having the collection of indices ($\cn{ind}^{\varphi}(\eta\sqcupplus \eta')$), where $\cn{lens}(A)=\overline{m} =m_0,...,m_{n\sminus 1}$, we have the following valid Hoare triple.

{\footnotesize
\[\hspace*{-1em}
\begin{array}{l}
\textcolor{spec}{\bil 
\begin{array}{l}
P[A\{\forall \overline{j}\in \cn{range}(\overline{m}).\overline{j}\mapsto \cn{up}^{b}(A[\overline{j}],\denote{\mu}(\cn{vector}(A[\overline{j}])[\cn{ind}^{\varphi}(\eta)]),\cn{ind}^{\varphi}(\eta)))\}/A] 
\\[0.2em]
  \wedge Q_{\cn{len}}(A\{\forall \overline{j}\in \cn{range}(\overline{m}).\overline{j}\mapsto \cn{up}^{b}(A[\overline{j}],\denote{\mu}(\cn{vector}(A[\overline{j}])[\cn{ind}^{\varphi}(\eta)]),\cn{ind}^{\varphi}(\eta)))\},\overline{m})
  \\[0.2em]
\end{array}  
\bir }
\\[0.8em]
\quad\cassign{A_1}{\cn{locateN}(A,\overline{m},\cn{ind}^{\varphi}(\eta))}; \cassign{A_2}{\cn{mapN}(A_1,\overline{m},\lambda (x,y).(x,\denote{\mu}y))};
\\[0.2em]
\quad\qquad
 \cassign{A_3}{\cn{assignN}(A,A_2,\overline{m},\cn{ind}^{\varphi}(\eta),\lambda u\,v.\cn{amp}(u))};\cassign{A}{A_3}
\\[0.5em]
\textcolor{spec}{\bil P \wedge Q_{\cn{len}}(A,\overline{m}) \bir}
\end{array}
\]
}
\end{lemma}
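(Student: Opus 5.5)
We prove the triple backwards, statement by statement, pushing the post-condition $P \wedge Q_{\cn{len}}(A,\overline{m})$ through the four statements with the sequencing rule. Each HLIR primitive gets a pointwise specification, and the Pointwise Update lemma (\Cref{cor:link}) turns each array-producing statement into one substitution. Throughout we write $\rho \triangleq \cn{ind}^{\varphi}(\eta)$.

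\textbf{Step 1: pointwise specifications.} First we fix a pointwise specification for each primitive, obtained from its definition as a loop over $\cn{range}(\overline{m})$ with a standard loop invariant. Under $Q_{\cn{len}}(A,\overline{m})$:
\begin{itemize}
\item $\cn{locateN}$ ensures $\forall \overline{j}\in\cn{range}(\overline{m}).\,A_1[\overline{j}] = \cn{element}\{\cn{amp}=\cn{amp}(A[\overline{j}]),\cn{vector}=\cn{vector}(A[\overline{j}])[\rho]\}$;
\item $\cn{mapN}$ with $\lambda(x,y).(x,\denote{\mu}y)$ ensures $A_2[\overline{j}]$ has the same amplitude as $A_1[\overline{j}]$ and vector $\denote{\mu}(\cn{vector}(A_1[\overline{j}]))$;
\item $\cn{assignN}$ with $\lambda u\,v.\cn{amp}(u)$ ensures $A_3[\overline{j}] = \cn{up}^b(A[\overline{j}],\cn{vector}(A_2[\overline{j}]),\rho)$, keeping $A$'s amplitude.
\end{itemize}
Each primitive also preserves the length predicate for its output, with $\overline{m}\sqsubseteq\overline{m}$ trivially. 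These are exactly the hypotheses \Cref{cor:link} needs.

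\textbf{Step 2: backward substitution.} We push the post-condition backwards through the four statements in turn.
\begin{itemize}
\item Through the final $\cassign{A}{A_3}$, the ordinary assignment rule yields $P[A_3/A]\wedge Q_{\cn{len}}(A_3,\overline{m})$.
\item Through $\cn{assignN}$, \Cref{cor:link} replaces $A_3$ by $A\{\forall\overline{j}.\,\overline{j}\mapsto \cn{up}^b(A[\overline{j}],\cn{vector}(A_2[\overline{j}]),\rho)\}$.
\item Through $\cn{mapN}$ and then $\cn{locateN}$, the same lemma replaces $\cn{vector}(A_2[\overline{j}])$ by $\denote{\mu}(\cn{vector}(A_1[\overline{j}]))$, and then by $\denote{\mu}(\cn{vector}(A[\overline{j}])[\rho])$.
\end{itemize}
The length conjuncts for $A_1,A_2$ produced along the way hold from $Q_{\cn{len}}(A,\overline{m})$ and the primitives' specifications, so we discharge them by the consequence rule. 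After these substitutions the pre-condition is syntactically the one in the statement.

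\textbf{Step 3: the main obstacle.} The hard part is justifying the \cn{assignN} step: that merging the mapped $\rho$-fragment back with $\cn{up}^b$ leaves the complementary indices for $\eta'$ untouched and does not corrupt the vector. This needs \Cref{def:disjointness}. Since $\eta\subseteq\eta\sqcupplus\eta'$, the positions $\rho$ are a subset of the indices of each vector. Because fragments in a locus are disjoint, $\rho$ and $\cn{ind}^{\varphi}(\eta')$ are disjoint, so $\cn{merge}(\rho,c,c')$ agrees with $c$ outside $\rho$. We also need $\denote{\mu}$ to preserve the length $\slen{\eta}$ of the fragment, which holds because oracles are reversible bit permutations. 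This guarantees the $\cn{up}^b$ update is well defined and keeps $\slen{A[\overline{j}]}=\slen{\eta\sqcupplus\eta'}$.

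Finally, we note that $A_1,A_2,A_3$ are fresh, so $P$ mentions none of them. Hence the intermediate substitutions only affect $A$, and the composed substitution is exactly the stated one.
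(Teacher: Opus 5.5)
This is essentially the paper's proof: the paper composes the \cn{mapN} lemma (which also covers \cn{locateN}, an instance of \cn{mapN}) and the \cn{assignN} lemma, each derived from the Pointwise Update lemma, with the assignment rule for $\cassign{A}{A_3}$, which is exactly your backward substitution. Your Step 3 is not needed for validity of this triple, because the precondition is stated directly in terms of $\cn{up}^b$ (disjointness and length preservation of $\denote{\mu}$ matter only when relating the array to the quantum state); also, the length facts for the fresh outputs $A_1,A_2,A_3$ come from the paper's standing array-length assumption, not from $Q_{\cn{len}}(A,\overline{m})$.
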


Here, $\forall \overline{j}\in \cn{range}(\overline{m})$ refers to $\forall j_0\in [0,m_0),..., j_n\in [0,m_n)$ where $\overline{j}$ and $\overline{m}$ are expanded to $(j_0,...,j_n)$ and $(m_0,...,m_n)$; $\cn{up}^{b}$ and $\cn{up}^{a}$ are defined as follows, where $\cn{merge}(\rho,c,c^{\prime})$ (\Cref{fig:hlir-syntax}) replaces the $\rho$ positions of $c$ with the bits $c^{\prime}$, with $\slen{c^{\prime}}=\slen{\rho}$.

{\small
\[
\begin{array}{l}
\cn{up}^{b}(\cn{element}\,\{\cn{amp} = z, \cn{vector} = c\},c',\rho) =\cn{element}\,\{\cn{amp} = z, \cn{vector} = \cn{merge}(\rho,c,c')\}
\\[0.2em]
\cn{up}^{a}(\cn{element}\,\{\cn{amp} = z, \cn{vector} = c\},z') =\cn{element}\,\{\cn{amp} = z', \cn{vector} = c\}
\end{array}
\]
}

\begin{proof}\rm 
The validity proof utilizes \Cref{def:locate,def:gassign}. Its mechanized proof is in Rocq.
\end{proof}

We define and prove below the three operations $\cn{locateN}$, $\cn{mapN}$, and $\cn{assignN}$.
Essentially, \cn{locateN} and \cn{mapN} are similar, where the former is a special case of the latter, described below
\[\cn{locateN}(A, \overline{m}, \cn{ind}^{\varphi}(\eta)) = \cn{mapN}(A,\overline{m}, \lambda (x,y).(x,y[\cn{ind}^{\varphi}(\eta)])).\]
We now show the proof for \cn{mapN}.
For all the proofs, we assume the array length axioms $Q_{\cn{len}}(A,A',...,\overline{m})$, stating that the $ n$-dimensional arrays $A,A',...$ have $ n$-dimensional lengths $\overline{m}=m_0,...,m_{n\sminus 1}$.

\begin{lemma}[The Valid MapN Definition]\label[lemma]{def:locate}\rm 
Given $n$ dimensional arrays $A$ and $A'$, with $Q_{\cn{len}}(A,A',\overline{m})$, and a function $f$, we have the valid Hoare triple for $\cassign{A'}{\cn{mapN}(A,\overline{m},f)}$:

{\small
\[
\begin{array}{l}
\textcolor{spec}{\bil P[A'\{\forall \overline{j}\in\cn{range}(\overline{m}).\overline{j}\mapsto f(A[\overline{j}])\}/A'] \wedge Q_{\cn{len}}(A,A',\overline{m}) \bir}\\[0.2em]
\qquad \cassign{A'}{\cn{mapN}(A,\overline{m},f)}\\[0.2em]
\textcolor{spec}{\bil P \wedge Q_{\cn{len}}(A,A',\overline{m}) \bir }\\[0.2em]
\end{array}
\]
}

where $\cassign{A'}{\cn{mapN}(A,\overline{m},f)}$ is defined as: 

{\small
\[
\begin{array}{l}
\cassign{\overline{j}}{0};\\
\WHILE(j_{0} < m_{0})\{ ... \\
\qquad\WHILE(j_{n\sminus 1} < m_{n\sminus 1}) \{ \\
\qquad\qquad \cassign{A'[k_0,...,k_{n\sminus 1}]}{f(A[k_0,...,k_{n\sminus 1}])};  \cassign{j_{n\sminus 1}}{j_{n\sminus 1}+1} 
\\\qquad
\}
\\
...\; \cassign{j_{0}}{j_{0}+1}\}
\end{array}
\]
}
\end{lemma}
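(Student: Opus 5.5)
The plan is to prove a pointwise specification for the nested loop first, and then obtain the stated triple from \Cref{cor:link}.

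\textbf{Step 1: the pointwise specification.} I will show that, starting from $Q_{\cn{len}}(A,A',\overline{m})$, the loop establishes
\[\forall \overline{j}\in\cn{range}(\overline{m}).\,A'[\overline{j}]=f(A[\overline{j}]).\]
\Cref{cor:link} then lets us substitute the pointwise-updated array $A'\{\forall \overline{j}\in\cn{range}(\overline{m}).\overline{j}\mapsto f(A[\overline{j}])\}$ for $A'$ in an arbitrary $P$. Here $\overline{m}=\overline{m'}$, so the prefix condition holds trivially.

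The length conjunct $Q_{\cn{len}}(A,A',\overline{m})$ is preserved by every step. Array assignment at in-range indices does not change lengths, and $A$ is never written. I read the body's index $k_i$ as the loop counter $j_i$.

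\textbf{Step 2: the loop invariants.} I proceed by induction on the nesting depth $n$. At level $i$, with outer counters $j_0,\dots,j_{i-1}$ fixed, the invariant $I_i$ says three things:
\begin{itemize}
\item $0\le j_i\le m_i$;
\item every index $\overline{k}$ that is lexicographically below the current outer prefix, or that agrees with it on positions $0..i\sminus 1$ and has $k_i<j_i$, already satisfies $A'[\overline{k}]=f(A[\overline{k}])$;
\item $A$ and the lengths are unchanged.
\end{itemize}

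The innermost body uses the array assignment rule from \Cref{sec:background}. Since $A'\{\overline{j}\leftarrow f(A[\overline{j}])\}$ agrees with $A'$ everywhere except at $\overline{j}$, the previously established entries are untouched. Incrementing $j_{n\sminus 1}$ then extends the "done" region by one index. On exit from level $i$ ($j_i=m_i$), all indices with the given outer prefix are done. This is exactly what the increment of $j_{i-1}$ needs in order to re-establish $I_{i-1}$.

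One detail must be handled: the inner counters have to be reset to $0$ before each inner loop. The displayed code only shows $\cassign{\overline{j}}{0}$ at the top, so I will assume, or make explicit, a reset $\cassign{j_{i}}{0}$ at the start of each iteration of the loop at level $i\sminus 1$.

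At the outermost exit, $j_0=m_0$ gives the full quantified postcondition of Step 1. All loops terminate because each $m_i-j_i$ decreases. The statement is partial correctness anyway, so this is only needed if total correctness is wanted.

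\textbf{Step 3: combining.} I then apply \Cref{cor:link} with $\varphi(A)$ instantiated as the loop and $Q=Q_{\cn{len}}(A,A',\overline{m})$. This yields the stated precondition $P[A'\{\dots\}/A']\wedge Q_{\cn{len}}(A,A',\overline{m})$. The key semantic fact is that $A'$ after the loop is extensionally equal to the pointwise-updated array on $\cn{range}(\overline{m})$. Together with the length predicate, which pins the domain, this is array extensionality.

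\textbf{Main obstacle.} I expect the hardest part to be stating the invariant uniformly for arbitrary $n$, since the lexicographic "already written" region must be phrased generically. I would first try to phrase it as
\[\forall \overline{k}\in\cn{range}(\overline{m}).\ \overline{k}<_{\mathit{lex}}\overline{j}\Rightarrow A'[\overline{k}]=f(A[\overline{k}]).\]
I would then prove, by induction on $n$, a lemma that an $n$-fold nest of loops whose innermost body establishes the property for the current index yields the property for all of $\cn{range}(\overline{m})$. This reduces the general case to the one-dimensional loop argument, applied at each level.
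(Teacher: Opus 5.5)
Your proposal is correct and follows the paper's proof. Both reduce via \Cref{cor:link} (with $\overline{m'}=\overline{m}$) to the pointwise postcondition $\forall \overline{j}\in\cn{range}(\overline{m}).\,A'[\overline{j}]=f(A[\overline{j}])$ and then discharge it with one loop invariant per nesting level, as in \Cref{appx:mapN}. Your explicit inner-counter resets and prefix-based ``done'' region are more careful than the appendix: its displayed code sets the counters to $0$ only once, and its rectangular invariant $\forall k_0\in[0,j_0)\ldots\forall k_{n\sminus 1}\in[0,j_{n\sminus 1})$ does not carry finished rows through the inner loop. Keep your uniform version in that prefix form. Stated as $\overline{k}<_{\mathit{lex}}\overline{j}$ over the full counter vector, it would claim a block not yet written at an outer loop head, where the leftover inner counter is $j_i=m_i$.
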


\begin{proof}\rm 
By \Cref{cor:link}, it is enough to show the validity of the following triple.

{\small
\[
\textcolor{spec}{\bil  Q_{\cn{len}}(A,A',\overline{m}) \bir}
\;\;
\cassign{A'}{\cn{mapN}(A,\overline{m},f)}\;\;
\textcolor{spec}{\bil \forall \overline{j} \in \cn{range}(\overline{m}) . A'[\overline{j}]=f(A[\overline{j}])  \wedge Q_{\cn{len}}(A,A',\overline{m}) \bir }
\]
}

The proof-tree derivation, with loop invariants at each level, is given in \Cref{appx:mapN}.
\end{proof}

$\cn{assignN}(A,A_2,\overline{m},\rho,\lambda u\,v.f(u,v))$ updates $A$'s element amplitudes with the combination of $A$ and $A_2$'s amplitudes via $\lambda u\,v.f(u,v)$, as well as updates $A$'s bit-vectors, only for the $\rho$ part, with $A_2$'s.  
We now show the validity lemma for \cn{assignN}.

\begin{lemma}[The Valid AssignN Definition]\label[lemma]{def:gassign}\rm 
Given $n$ dimensional arrays $A$, $A_1$, and $A_2$ (assuming that $A_2$ is pre-initialized to an $n$ dimensional array with bounds $\overline{m}$), with $Q_{\cn{len}}(A,A_1,A_2,\overline{m})$, we have the following \emph{valid} Hoare triple.

{\small
\[\hspace*{-1em}
\begin{array}{l}
\textcolor{spec}{\bil 
\begin{array}{l}
P[A_2\{\forall \overline{j} \in \cn{range}(\overline{m}). \overline{j} \mapsto  \cn{up}^a(\cn{up}^b(A[\overline{j}],\cn{vector}(A_1[\overline{j}]),\cn{ind}^{\varphi}(\eta)),f(A[\overline{j}],A_1[\overline{j}]))\}/A_2]
\\\quad
\wedge
Q_{\cn{len}}(A_2,A,A_1,\overline{m})
\end{array}
\bir}\\[0.2em]
\qquad {\cassign{A_2}{\cn{assignN}(A,A_1,\overline{m},\cn{ind}^{\varphi}(\eta),\lambda u\,v.f(u,v))}}\\[0.2em]
\textcolor{spec}{
\bil
P
\wedge
Q_{\cn{len}}(A_2,A,A_1,\overline{m})
\bir
}
\end{array}
\]
}
\end{lemma}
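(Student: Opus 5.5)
We follow the pattern of \Cref{def:locate}: reduce the statement to a pointwise post-condition and then invoke \Cref{cor:link}. First, we fix the loop implementation of $\cn{assignN}$ as the same nested $\WHILE$ skeleton used for $\cn{mapN}$. Its innermost body is the single array assignment
\[\cassign{A_2[\overline{j}]}{\cn{up}^a(\cn{up}^b(A[\overline{j}],\cn{vector}(A_1[\overline{j}]),\cn{ind}^{\varphi}(\eta)),f(A[\overline{j}],A_1[\overline{j}]))}.\]
The body reads only $A$ and $A_1$ and writes only $A_2$. This is the reason for the assumption that $A_2$ is pre-initialized with bounds $\overline{m}$: every write stays in bounds, and $Q_{\cn{len}}(A_2,\overline{m})$ is preserved.

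Writing $g(\overline{j})$ for the right-hand side above, it then suffices to prove
\[\triple{Q_{\cn{len}}(A_2,A,A_1,\overline{m})}{\cassign{A_2}{\cn{assignN}(A,A_1,\overline{m},\cn{ind}^{\varphi}(\eta),f)}}{\forall \overline{j}\in\cn{range}(\overline{m}).A_2[\overline{j}]=g(\overline{j})\wedge Q_{\cn{len}}(A_2,A,A_1,\overline{m})}.\]
Once this holds, \Cref{cor:link} applies directly. Take $A'$ to be $A_2$, take $\overline{m}=\overline{m'}$ so the prefix condition is trivial, and note that although $g$ depends on two source arrays, it is a function of the index $\overline{j}$ through $A[\overline{j}]$ and $A_1[\overline{j}]$. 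The lemma then yields exactly the stated pre-condition $P[A_2\{\forall\overline{j}.\overline{j}\mapsto g(\overline{j})\}/A_2]$.

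The pointwise triple is proved by nested loop invariants, one per dimension, as in \Cref{appx:mapN}. At level $i$ the invariant states the following: the indices $j_0,\dots,j_{i-1}$ are fixed and in range, and $j_i\le m_i$. Every index $\overline{k}$ that is lexicographically smaller than the current prefix satisfies $A_2[\overline{k}]=g(\overline{k})$. $A$ and $A_1$ are unchanged, and all length predicates hold. For the innermost body we apply the array assignment rule from \Cref{sec:background}. After the update $A_2\{\overline{j}\leftarrow g(\overline{j})\}$, the indices already written keep their values because they differ from $\overline{j}$, and index $\overline{j}$ now holds $g(\overline{j})$. The value of $g$ itself is unchanged by the write, since $g$ mentions only $A$ and $A_1$, which are distinct from $A_2$; this needs the side condition $A_2\notin\{A,A_1\}$. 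On exit from each level we have $j_i=m_i$, so the invariant extends to all indices at that level. Resetting $j_{i+1}$ to $0$ re-establishes the next inner invariant.

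The main obstacle is the bookkeeping of the multi-dimensional lexicographic invariant, namely showing that "all indices with smaller prefix" is preserved when an outer counter increments. We would handle this by stating the invariant generically in $n$ and proving it by induction on the number of dimensions, reusing the $\cn{mapN}$ derivation essentially verbatim. This works because $\cn{assignN}$ is, pointwise, a $\cn{mapN}$ over the zipped pair $(A[\overline{j}],A_1[\overline{j}])$.
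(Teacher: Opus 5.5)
Your proposal is correct and follows the paper's own proof. You use \Cref{cor:link} to reduce the statement to the pointwise triple $\forall \overline{j}\in\cn{range}(\overline{m}).\,A_2[\overline{j}]=g(\overline{j})$, then discharge that triple with the same nested-loop proof tree as for $\cn{mapN}$ (the paper's appendix differs only cosmetically, computing $\cn{up}^b(\ldots)$ into a temporary $t$ before writing $\cn{up}^a(t,\ldots)$), and your lexicographic invariant, explicit resets of the inner counters and non-aliasing condition $A_2\notin\{A,A_1\}$ are, if anything, more careful than the paper's appendix skeleton, whose invariant is box-shaped and whose code does not show the inner counters being reset.
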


\begin{proof}\rm 
By \Cref{cor:link}, it is enough to show the validity of the following triple.

{\small
\[
\begin{array}{l}
\textcolor{spec}{\bil 
Q_{\cn{len}}(A_2,A,A_1,\overline{m})
\bir}\\[0.2em]
\qquad {\cassign{A_2}{\cn{assignN}(A,A_1,\overline{m},\cn{ind}^{\varphi}(\eta),\lambda u\,v.f(u,v))}}\\[0.2em]
\textcolor{spec}{
\bil
\forall \overline{j} \in \cn{range}(\overline{m}).
A_2[\overline{j}] =
\cn{up}^a(\cn{up}^b(A[\overline{j}],\cn{vector}(A_1[\overline{j}]),\cn{ind}^{\varphi}(\eta)),f(A[\overline{j}],A_1[\overline{j}]))
\wedge
Q_{\cn{len}}(A_2,A,A_1,\overline{m})
\bir
}
\end{array}
\]
}

The rest of the proof derives a Hoare logic proof tree and has a procedure as the one in \Cref{def:locate}, except that we update $A_2[\overline{j}]$ with $\textcolor{spec}{\cn{up}^a(\cn{up}^b(A[\overline{j}], \cn{vector}(A_1[\overline{j}]),\cn{ind}^{\varphi}(\eta)),f(A[\overline{j}],A_1[\overline{j}]))}$. The proof-tree derivation is given in \Cref{sec:assignproof}.

\ignore{
{\small
\[
\cn{let  }\textcolor{spec}{E[\overline{i}]} := \textcolor{spec}{\cn{up}^a(\cn{up}^b(A[\overline{i}],
\cn{vector}(A_1[\overline{i}]),\cn{ind}^{\varphi}(\eta)),f(A[\overline{i}],A_1[\overline{i}]))}
\]
}

{\small
\[
\begin{array}{ll}
\textcolor{spec}{\bil Q_{\cn{len}}(A_2,A,A_1,(m_0,...,m_{n\sminus 1})) \bir}\\[0.2em]
\qquad \cassign{j_0}{0};... \cassign{j_{n\sminus 1}}{0}; & \textbf{Assign}\\[0.2em]
\textcolor{spec}{\bil j_0 = 0 \wedge ... \wedge j_{n\sminus 1} = 0 \wedge Q_{\cn{len}}(A_2,A,A_1,(m_0,...,m_{n\sminus 1})) \bir}
\\[0.2em]
\Rightarrow
\textcolor{spec}{\bil \begin{array}{l} Q_{\cn{len}}(A_2,A,A_1,(m_0,...,m_{n\sminus 1})) \wedge j_0 \in[0,m_0] \wedge ...\wedge j_{n\sminus 1} \in[0,m_{n\sminus 1}] \\
\wedge \forall k_0 \in [0,j_0).\,...\forall k_{n\sminus 1} \in[0,j_{n\sminus 1}).\, A_2[k_0,...,k_{n\sminus 1}] = E[k_0,...,k_{n\sminus 1}]
\end{array}
\bir}
&
\textbf{Pre-Strength}
\\[0.2em]
\qquad \WHILE(j_0 < m_0) 
&
\textbf{While}
\\[0.2em]
\qquad \textcolor{spec}{
\bil
\begin{array}{l}
Q_{\cn{len}}(A_2,A,A_1,(m_0,...,m_{n\sminus 1})) \wedge
j_0 \in[0,m_0] \wedge ...\wedge j_{n\sminus 1} \in[0,m_{n\sminus 1}] \\
\wedge \forall k_0 \in [0,j_0).\,...\forall k_{n\sminus 1} \in[0,j_{n\sminus 1}).\, A_2[k_0,...,k_{n\sminus 1}] = E[k_0,...,k_{n\sminus 1}]
\end{array}
\bir}
&
\textbf{Invariant}
\\[0.2em]
\qquad \{\\[0.2em]
\qquad\quad 
\{...
\\[0.2em]
\qquad\qquad
\WHILE(j_{n\sminus 1} < m_{n\sminus 1}) \\[0.2em]
\qquad\qquad \textcolor{spec}{
\bil
\begin{array}{l}
Q_{\cn{len}}(A_2,A,A_1,(m_0,...,m_{n\sminus 1})) \wedge
j_0 \in[0,m_0] \wedge ...\wedge j_{n\sminus 1} \in[0,m_{n\sminus 1}] \\
\wedge \forall k_{n\sminus 1} \in[0,j_{n\sminus 1}).\, A[j_0,...,j_{n\sminus 2},k_{n\sminus 1}] = E[j_0,...,j_{n\sminus 2},k_{n\sminus 1}]
\end{array}
\bir
}
&
\textbf{Invariant}
\\[0.2em]
\qquad\qquad\{\\[0.2em]
\qquad\qquad\quad
\cassign{tmp}{\cn{up}^b(A[k_0,...,k_{n\sminus 1}],\cn{vector}(A_1[k_0,...,k_{n\sminus 1}]),\cn{ind}^{\varphi}(\eta))};
\\[0.2em]
\qquad\qquad\quad
\cassign{A_2[k_0,...,k_{n\sminus 1}]}{\cn{up}^a(tmp,f(A[k_0,...,k_{n\sminus 1}],A_1[k_0,...,k_{n\sminus 1}]))}
\\[0.2em]
\qquad\qquad\quad
\cassign{j_{n\sminus 1}}{j_{n\sminus 1}+1}
\\[0.2em]
\qquad\qquad \}\\[0.2em]
\qquad\qquad ...\\[0.2em]
\qquad\quad \}\\[0.2em]
\qquad\quad
\cassign{j_{0}}{j_{0}+1}\\[0.2em]
\qquad \}\\[0.2em]
\textcolor{spec}{
\Big{ \{ }
\begin{array}{l}
Q_{\cn{len}}(A_2,A,A_1,(m_0,...,m_{n\sminus 1})) \wedge
\\
\neg (j_0 < m_0)\wedge ... \wedge \neg (j_{n\sminus 1} < m_{n\sminus 1}) \wedge j_0 \in[0,m_0]  \wedge ...\wedge j_{n\sminus 1} \in[0,m_{n\sminus 1}] 
\\
\qquad
\wedge \forall k_0 \in [0,j_0).\,...\forall k_{n\sminus 1} \in[0,j_{n\sminus 1}).\, A\{k_0,...,k_{n\sminus 1} \mapsto E[k_0,...,k_{n\sminus 1}] \}
\end{array}
\Big{ \} }
}\\[0.5em]
\Rightarrow
\textcolor{spec}{
\bil 
\begin{array}{l}
Q_{\cn{len}}(A_2,A,A_1,(m_0,...,m_{n\sminus 1})) \wedge
\\[0.2em]
\forall k_0 \in [0,m_0).\,...\forall k_{n\sminus 1} \in[0,m_{n\sminus 1}).\, A\{k_0,...,k_{n\sminus 1} \mapsto E[k_0,...,k_{n\sminus 1}]\}
\end{array}
\bir
}
&
\textbf{Post-Weakening}
\end{array}
\]
}
}
\end{proof}

\subsection{Translating Hadamard and QFT Operations}\label{sec:hadtrans}

The translations of Hadamard and QFT operations to HLIR are similar; it suffices to demonstrate the Hadamard operation translation. We list the proof rule below.

{\small
      \begin{mathpar}
      \inferrule[]{}
      {\qpule{\{\eta \sqcupplus \eta' : \tch{n}\}}{\eta \sqcupplus \eta' \mapsto q}
        {\ssassign{\eta}{}{\ihadh}}{\eta \sqcupplus \eta' \mapsto \denote{\ihadh}^{\slen{\eta}}q}}
        \end{mathpar}
}

Similar to the oracle function case above, we develop the translation based on a generalized version connecting the pre- and post-conditions $P$, listed below.

{\small
      \begin{mathpar}
      \inferrule[]{}
      {\qpule{\{\eta \sqcupplus \eta' : \tch{n}\}}{P[(\eta \sqcupplus \eta')\{\eta \rightarrow \denote{\ihadh}(\eta)\}/\eta \sqcupplus \eta']}{\ssassign{\eta}{}{\ihadh}}{P}}
        \end{mathpar}
}

We first show the effect of the operation. The operation's \qafny type rule, $\sigma[\eta \sqcupplus \eta' \mapsto \tch{n}]\vdash \ssassign{\eta}{}{\ihadh} \triangleright \sigma[\eta \sqcupplus \eta' \mapsto \tch{n\splus 1}]$, indicates that a Hadamard operation lifts an $n$ dimensional array $A$ to $n\splus 1$ dimensional.
An example is to apply $\ssassign{x\srange{0}{2}}{}{\ihadh}$ to a $\tch{1}$ typed state with the locus $x\srange{0}{3}$. The quantum datum is shown on the left below, with its one-dimensional array view on the right.

{
\[
\textcolor{spec}{
\frac{1}{\sqrt{2}}(\ket{000}+\ket{111})
\qquad\qquad
[(\frac{1}{\sqrt{2}},\ket{000}),(\frac{1}{\sqrt{2}},\ket{111})]
}
\]
}

To construct a proof in Hoare logic, we can split the operation into three steps, (\cn{copyN}, \cn{enumN}, and \cn{ampN}), on the first two bits, corresponding to the locus fragment $x\srange{0}{2}$. 
Below are their transitions.

{\footnotesize
\[
\begin{array}{c@{\;}l}
&
\textcolor{spec}{
[(\frac{1}{\sqrt{2}},\ket{000}),(\frac{1}{\sqrt{2}},\ket{111})]
}
\\[0.4em]
\xrightarrow{\cn{copyN}}
&
\textcolor{spec}{
[[(\frac{1}{\sqrt{2}},\ket{000}),(\frac{1}{\sqrt{2}},\ket{000}),(\frac{1}{\sqrt{2}},\ket{000}),(\frac{1}{\sqrt{2}},\ket{000})],
[(\frac{1}{\sqrt{2}},\ket{111}),(\frac{1}{\sqrt{2}},\ket{111}),(\frac{1}{\sqrt{2}},\ket{111}),(\frac{1}{\sqrt{2}},\ket{111})]]
}
\\[0.4em]
\xrightarrow{\cn{enumN}}
&
\textcolor{spec}{
[[(\frac{1}{\sqrt{2}},\ket{000}),(\frac{1}{\sqrt{2}},\ket{100}),(\frac{1}{\sqrt{2}},\ket{010}),(\frac{1}{\sqrt{2}},\ket{110})],
[(\frac{1}{\sqrt{2}},\ket{001}),(\frac{1}{\sqrt{2}},\ket{101}),(\frac{1}{\sqrt{2}},\ket{011}),(\frac{1}{\sqrt{2}},\ket{111})]]
}
\\[0.6em]
\xrightarrow{\cn{ampN}}
&
\textcolor{spec}{\big{[}}
\begin{array}{l}
\textcolor{spec}{[(\frac{1}{\sqrt{2}}*\frac{(-1)^{(00)\cdot(00)}}{2},\ket{000}),(\frac{1}{\sqrt{2}}*\frac{(-1)^{(00)\cdot(10)}}{2},\ket{100}),(\frac{1}{\sqrt{2}}*\frac{(-1)^{(00)\cdot(01)}}{2},\ket{010}),(\frac{1}{\sqrt{2}}*\frac{(-1)^{(00)\cdot(11)}}{2},\ket{110})],}\\
\textcolor{spec}{[(\frac{1}{\sqrt{2}}*\frac{(-1)^{(11)\cdot(00)}}{2},\ket{001}),(\frac{1}{\sqrt{2}}*\frac{(-1)^{(11)\cdot(10)}}{2},\ket{101}),(\frac{1}{\sqrt{2}}*\frac{(-1)^{(11)\cdot(01)}}{2},\ket{011}),(\frac{1}{\sqrt{2}}*\frac{(-1)^{(11)\cdot(11)}}{2},\ket{111})]}
\end{array}
\textcolor{spec}{\big{]}}
\end{array}
\]
}

The \cn{copyN} step turns each array element into an array of $2^{\slen{x\srange{0}{2}}}$ ($4$) elements by copying the original element, i.e., for the element $(\frac{1}{\sqrt{2}},\ket{000})$, we make an array of $4$ copied elements. This step also turns the array dimension from one (typed $\tch{1}$) to two (typed $\tch{2}$).
For the first two bits of each element vector in the inner array, \cn{enumN} enumerates from $0$ to $2^{\slen{x\srange{0}{2}}}-1$ in binary.
As shown above, we change the first two bits at each basis-vector position to $00$ (integer $0$), $10$ ($1$), $01$ ($2$), and $11$ ($3$) for each inner array.
\cn{ampN} multiplies each amplitude position by two pieces. We first multiply by $\frac{1}{(\sqrt{2})^{\slen{x\srange{0}{2}}}}$, and then by a phase $(-1)^{\cn{old} \cdot \cn{new}}$, where $\cn{old}$ and $\cn{new}$ refer to the first two bits of element vectors before and after applying \cn{enumN}, respectively, and $\cdot$ is the dot product.
QFTs are similar to Hadamard operations, with the addition of phase multiplication.

\myparagraph{Validity Proof.}
Below, we present the valid Hoare triple definition as a lemma for Hadamard operations, followed by the proofs for individual operations.

\begin{lemma}[Valid Hadamard Application]\label[lemma]{def:mult-had}\rm 
Given a quantum state $\varphi$, an operation $\ssassign{\eta}{}{\ihadh}$ in the locus $\eta \sqcupplus \eta'$ with the $\tch{n}$-typed quantum datum $\varphi(\eta\sqcupplus \eta')$, transformed to an $n$-dimensional array $A$, with $\cn{ind}^{\varphi}(\eta\sqcupplus \eta')$ and $\cn{lens}(A)=\overline{m}$,
the result array has $n+1$ dimension, with the inner array bound $2^{\slen{\eta}}$. The valid Hoare logic triple is given as follows.

{\small
\[
\begin{array}{l}
\cn{let }\textcolor{spec}{E} := \textcolor{spec}{[\forall k \in [0,2^{\slen{\eta}}). 
\cn{up}^{a}(\cn{up}^{b}(A[\overline{j}],k,\cn{ind}^{\varphi}(\eta)),(-1)^{k\cdot \cn{vector}(A[\overline{j}])[\cn{ind}^{\varphi}(\eta)]})]}
\cn{ in }
\\[1em]
\textcolor{spec}{\bil 
P[A\{\forall \overline{j}\in \cn{range}(\overline{m}).\overline{j}\mapsto E\}/A]
\wedge Q_{\cn{len}}(A,\overline{m})
\wedge Q_{\cn{len}}(A\{\forall \overline{j}\in \cn{range}(\overline{m}).\overline{j}\mapsto E\},(\overline{m},2^{\slen{\eta}}))
  \bir}\\[0.8em]
\quad
{\cassign{A_1}{\cn{copyN}(A,\overline{m},\cn{ind}^{\varphi}(\eta))};\cassign{A_2}{\cn{enumN}(A_1,\overline{m},\cn{ind}^{\varphi}(\eta))};\cassign{A_3}{\cn{ampN}(A_1,A_2,\overline{m},\cn{ind}^{\varphi}(\eta))}};\cassign{A}{A_3}\\[0.5em]
\textcolor{spec}{\bil P
\wedge Q_{\cn{len}}(A,\overline{m})
\wedge Q_{\cn{len}}(A\{\forall \overline{j}\in \cn{range}(\overline{m}).\overline{j}\mapsto E\},(\overline{m},2^{\slen{\eta}})) \bir}
\end{array}
\]
}
\end{lemma}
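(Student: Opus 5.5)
The plan mirrors the proof of \Cref{def:mult-mut}. We prove one pointwise validity lemma for each of the three compositional statements \cn{copyN}, \cn{enumN}, and \cn{ampN}. Each is a special case of \cn{mapN}, or a composition of \cn{upDim} with \cn{mapN}. We then chain them backwards with the sequence rule, using \Cref{cor:link} at each step to turn the pointwise postcondition into a single substitution. The final assignment $\cassign{A}{A_3}$ is handled by the ordinary assignment rule: it contributes the substitution $P[A_3/A]$, together with the length fact $Q_{\cn{len}}(A_3,(\overline{m},2^{\slen{\eta}}))$.

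The first step handles \cn{copyN}. Write $\cassign{A_1}{\cn{copyN}(A,\overline{m},\cn{ind}^{\varphi}(\eta))}$ as $\cn{upDim}(A,\overline{m},(\overline{m},2^{\slen{\eta}}))$. We show the pointwise triple $\triple{Q_{\cn{len}}(A,\overline{m})}{\cdot}{\forall \overline{j}\in\cn{range}(\overline{m}),k\in[0,2^{\slen{\eta}}).\,A_1[\overline{j},k]=A[\overline{j}]}$. Its proof is the nested-loop derivation of \Cref{def:locate}, with one extra inner loop of bound $2^{\slen{\eta}}$. Here $\overline{m}\sqsubseteq(\overline{m},2^{\slen{\eta}})$ is exactly the prefix condition required by \Cref{cor:link}.

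The second step handles \cn{enumN}, viewed as a \cn{mapN} over the $(n\splus1)$-dimensional index space. Its function uses the last index $k$: $f(\overline{j},k)=\cn{up}^b(A_1[\overline{j},k],k,\cn{ind}^{\varphi}(\eta))$. The pointwise postcondition is therefore $A_2[\overline{j},k]=\cn{up}^b(A[\overline{j}],k,\cn{ind}^{\varphi}(\eta))$. This needs a mild generalization of \Cref{def:locate} in which $f$ may read its index. The loop proof is unchanged, because the invariant already quantifies over the index tuple.

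The third step handles \cn{ampN}, which is an instance of \cn{assignN} in \Cref{def:gassign}. Its amplitude-combination function multiplies $\cn{amp}(A_1[\overline{j},k])$ by $(-1)^{k\cdot c}$, where $c=\cn{vector}(A_1[\overline{j},k])[\cn{ind}^{\varphi}(\eta)]$. This $c$ equals the old fragment $\cn{vector}(A[\overline{j}])[\cn{ind}^{\varphi}(\eta)]$ by the \cn{copyN} postcondition. The normalization $1/\sqrt{2^{\slen{\eta}}}$ is either folded into $E$ or carried as the same scalar factor; we follow the statement's $E$. Since \cn{ampN} only alters amplitudes, the vector component of $A_3$ is taken from $A_2$.

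Composing the three pointwise facts, $A_3[\overline{j},k]=\cn{up}^a(\cn{up}^b(A[\overline{j}],k,\cn{ind}^{\varphi}(\eta)),(-1)^{k\cdot\cn{vector}(A[\overline{j}])[\cn{ind}^{\varphi}(\eta)]})$ holds for all in-range $\overline{j},k$. This says exactly $A_3=A\{\forall\overline{j}.\,\overline{j}\mapsto E\}$, as arrays of dimension $(\overline{m},2^{\slen{\eta}})$. Applying \Cref{cor:link} once to the whole sequence, with $f$ being $E$ regarded as a function of $A[\overline{j}]$, then yields the stated precondition.

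The main obstacle is the bookkeeping of dimensions. The precondition mentions $A$ at both shapes $\overline{m}$ and $(\overline{m},2^{\slen{\eta}})$, which is why both $Q_{\cn{len}}$ conjuncts appear. We must check that the intermediate facts about $A_1$ and $A_2$ are not referenced by $P$, so they can be dropped by consequence, and that the reading of $A$ inside \cn{ampN} and \cn{enumN} happens before the final overwrite $\cassign{A}{A_3}$. If \Cref{cor:link} does not directly allow index-dependent $f$, we will first prove the pointwise triple for the whole three-statement sequence, and only then apply the substitution argument of \Cref{cor:link}'s proof verbatim.
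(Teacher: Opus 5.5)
Your plan is essentially the paper's proof. The paper just combines the valid triples of \Cref{def:copyN}, \Cref{def:enumN}, and \Cref{def:ampN} in sequence, with $\cassign{A}{A_3}$ handled by the assignment rule; viewing \cn{copyN} as \cn{upDim} and \cn{enumN} as an index-aware \cn{mapN} are inessential variations.

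One bookkeeping caveat: your composition step silently drops a factor. Since $\cn{up}^a$ overwrites the amplitude, $E$ (like the stated triple of \Cref{def:ampN}) carries only the phase $(-1)^{k\cdot c}$, whereas computing \cn{ampN} from its \cn{assignN} definition, as you do, gives $\cn{amp}(A[\overline{j}])\cdot(-1)^{k\cdot c}$. Either chain \Cref{def:ampN} as stated, as the paper does, or read $E$'s amplitude as that product; your remark about $1/\sqrt{2^{\slen{\eta}}}$ does not cover this.
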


\begin{proof}\rm 
The validity proof below comprises the results for \cn{copyN}, \cn{enumN}, and \cn{ampN}. Its mechanized proof is in Rocq.
\end{proof}

We now demonstrate the definitions and proofs for the three key IR operations.
We first show the valid Hoare triple for \cn{copyN}.

\begin{lemma}[The Valid copyN Hoare Triple]\label[lemma]{def:copyN}\rm 
Given a locus $\eta' \sqcupplus \eta$, an $n$ dimensional array $A$, with the bounds $\overline{m}$, and the valid Hoare triple of $\cn{copyN}(A,\overline{m},\cn{ind}^{\varphi}(\eta))$ is listed below with its implementation. We assume that the result array $A_1$ is initialized to have $n+1$ dimensions, with the additional dimension bounded by $2^{\slen{\eta}}$.

{\small
\[\hspace{-0.8em}
\begin{array}{l}
\textcolor{spec}{\bil P[A_1\{\forall \overline{j}\in\cn{range}(\overline{m}). \forall k\in[0,2^{\slen{\eta}}). (\overline{j},k) \mapsto A[\overline{j}]\}/A_1] \wedge Q_{\cn{len}}(A,\overline{m}) \wedge Q_{\cn{len}}(A_1,(\overline{m},2^{\slen{\eta}}))  \bir}\\[0.2em]
\;\;\cassign{A_1}{\cn{copyN}(A,\overline{m},\cn{ind}^{\varphi}(\eta))}
\\[0.2em]
\textcolor{spec}{\bil 
P \wedge Q_{\cn{len}}(A,\overline{m}) \wedge Q_{\cn{len}}(A_1,(\overline{m},2^{\slen{\eta}}))
\bir }
\end{array}
\]
}

where ${\cn{copyN}(A,\overline{m},\cn{ind}^{\varphi}(\eta))}$ is defined as: ${\cn{mapN}(A,\overline{m},\lambda (x,y) . \cn{copy}(2^{\slen{\eta}},(x,y)))}$;
$\cn{copy}(N,v)$ is a function to generate an array of $N$ copies of the element $v$.
\end{lemma}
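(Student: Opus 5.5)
Since $\cn{copyN}$ is defined as an instance of $\cn{mapN}$, the plan is to reduce the lemma to \Cref{def:locate} and then handle the mismatch in dimensions with \Cref{cor:link}. I take $f \triangleq \lambda (x,y).\cn{copy}(2^{\slen{\eta}},(x,y))$. This $f$ sends each element of the $n$-dimensional array $A$ to a one-dimensional array of length $2^{\slen{\eta}}$. So the $\cn{mapN}$ loop nest of \Cref{def:locate} runs over $\overline{j}\in\cn{range}(\overline{m})$ and assigns $A_1[\overline{j}] := \cn{copy}(2^{\slen{\eta}},A[\overline{j}])$, where $A_1[\overline{j}]$ is the inner row of $A_1$.

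First, I would use the reduction from the proof of \Cref{def:locate}. It gives
\[\triple{Q}{\cassign{A_1}{\cn{copyN}(A,\overline{m},\cn{ind}^{\varphi}(\eta))}}{\forall \overline{j}\in\cn{range}(\overline{m}).A_1[\overline{j}]=\cn{copy}(2^{\slen{\eta}},A[\overline{j}]) \wedge Q},\]
where $Q \triangleq Q_{\cn{len}}(A,\overline{m})\wedge Q_{\cn{len}}(A_1,(\overline{m},2^{\slen{\eta}}))$. This comes from the same nested loop invariants $\forall \overline{k}<\overline{j}.\,A_1[\overline{k}]=f(A[\overline{k}])$ as in \Cref{appx:mapN}. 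The only change is in the length predicate: $Q_{\cn{len}}(A,A',\overline{m})$ becomes the pair above. The body writes only the prefix positions $\overline{j}$, and the value written is a row of the declared inner length, so $Q$ is preserved at every step.

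Second, I would turn the row-level postcondition into a pointwise one over $(\overline{m},2^{\slen{\eta}})$. I would unfold $\cn{copy}$ using $\cn{copy}(N,v)[k]=v$ for $k\in[0,N)$ and combine this with $Q_{\cn{len}}(A_1,(\overline{m},2^{\slen{\eta}}))$. This gives
\[\forall \overline{j}\in\cn{range}(\overline{m}).\forall k\in[0,2^{\slen{\eta}}).\,A_1[\overline{j},k]=A[\overline{j}],\]
and it is a consequence step applied to the postcondition.

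Finally, I would apply \Cref{cor:link} with $\overline{m}$ and $\overline{m'}=(\overline{m},2^{\slen{\eta}})$; the side condition $\overline{m}\sqsubseteq\overline{m'}$ holds by construction. The conclusion of \Cref{cor:link} substitutes the pointwise-updated array $A_1\{\forall\overline{j}.\forall k.(\overline{j},k)\mapsto A[\overline{j}]\}$ for $A_1$ in $P$, which is exactly the stated precondition. The conjunct $Q$ passes through because $A$ is not modified.

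I expect the main obstacle to be this last step. \Cref{cor:link} is stated with the index set $\overline{k}\in\cn{range}(\overline{m})$ of the source array and the map $f(A[\overline{k}])$, but here the update ranges over the longer index set $(\overline{j},k)$. I would handle this by applying \Cref{cor:link} at the row level, with update $\overline{j}\mapsto\cn{copy}(2^{\slen{\eta}},A[\overline{j}])$. I would then show that this row-level updated array equals the pointwise one, $A_1\{\forall \overline{j},k.(\overline{j},k)\mapsto A[\overline{j}]\}$, under $Q_{\cn{len}}$, by array extensionality. This equality lets me rewrite inside $P$, which is safe because the substitution is congruent under an equality of array terms.
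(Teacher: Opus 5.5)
Your proposal is correct and takes essentially the same route as the paper. Both reduce the lemma via the Pointwise Update lemma (\Cref{cor:link}) to a postcondition established by the single \cn{mapN} loop proof, with $\cn{copy}$ unfolded to obtain $A_1[\overline{j},k]=A[\overline{j}]$. Your row-level use of \Cref{cor:link}, followed by an extensionality rewrite inside $P$, handles a point the paper glosses over: the paper applies \Cref{cor:link} directly with the $(\overline{m},2^{\slen{\eta}})$-indexed update, which does not literally have the form $f(A[\overline{k}])$ over $\cn{range}(\overline{m})$ that the lemma requires.
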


\begin{proof}\rm
By \Cref{cor:link}, it is enough to show the validity of the following triple.

\vspace*{-0.5em}
{\small
\[\hspace{-0.8em}
\begin{array}{l}
\textcolor{spec}{\bil Q_{\cn{len}}(A,\overline{m}) \wedge Q_{\cn{len}}(A_1,(\overline{m},2^{\slen{\eta}}))\bir}\\[0.2em]
\;\;\cassign{A_1}{\cn{mapN}(A,\overline{m},\lambda (x,y) . \cn{copy}(2^{\slen{\eta}},(x,y)))}
    \\[0.2em]
\textcolor{spec}{\bil 
\forall \overline{j}\in \cn{range}(\overline{m}).\forall k\in[0,2^{\slen{\eta}}).A_1[\overline{j},k] = A[\overline{j}]
  \wedge Q_{\cn{len}}(A_1,\overline{m},2^{\slen{\eta}}) \wedge Q_{\cn{len}}(A,\overline{m}) \wedge Q_{\cn{len}}(A_1,(\overline{m},2^{\slen{\eta}}))
\bir }
\end{array}
\]
}

The proof is given by deriving a proof tree and using a single operation \cn{mapN}. Its details are mechanized in Rocq.
\end{proof}

We now present the proof of \cn{enumN} below, where we construct an $n+1$ dimensional array ($A_2$) based on $A_1$, by replacing the vector in the locus $\eta$ with the enumeration from $0$ to $2^{\slen{\eta}}$.

\begin{lemma}[The Valid EnumN Hoare Triple]\label[lemma]{def:enumN}\rm 
Given a locus $\eta' \sqcupplus \eta$, an $n+1$ dimensional array $A_1$, with the bounds $(\overline{m},2^{\slen{\eta}})$, the valid Hoare triple of $\cn{enumN}(A_1,(\overline{m},2^{\slen{\eta}}),\cn{ind}^{\varphi}(\eta))$ is listed below with its implementation. The result array $A_2$ is initialized as an $n+1$ dimensional array.
$\overline{j},k$ has the first $n$ indices $\overline{j}$ and the most inner index $k$.

{\small
\[\hspace{-0.8em}
\begin{array}{l}
\textcolor{spec}{\bil 
P[A_2\{\forall \overline{j} \in \cn{range}(\overline{m}).\forall k\in[0,2^{\slen{\eta}}). (\overline{j},k)\mapsto \cn{up}^{b}(A_1[\overline{j},k],k, \cn{ind}^{\varphi}(\eta))\} / A_2]
\wedge
Q_{\cn{len}}(A_1,A_2,(\overline{m},2^{\slen{\eta}})) \bir }\\[0.2em]
\;\;
\cn{enumN}(A_1,(\overline{m},2^{\slen{\eta}}),\cn{ind}^{\varphi}(\eta))
\\[0.2em]
\textcolor{spec}{\bil 
P
\wedge Q_{\cn{len}}(A_1,A_2,(\overline{m},2^{\slen{\eta}})) \bir}
\end{array}
\]
}
\end{lemma}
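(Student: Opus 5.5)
The plan follows the pattern used for \Cref{def:locate} and \Cref{def:copyN}. First, I would write $\cn{enumN}$ as an instance of \cn{mapN} on the $n+1$ dimensional array. The one wrinkle is that the function applied at position $(\overline{j},k)$ depends on the inner index $k$, not only on the element $A_1[\overline{j},k]$. There are two ways to handle this. One is to view the loop body as $\cassign{A_2[\overline{j},k]}{\cn{up}^{b}(A_1[\overline{j},k],k,\cn{ind}^{\varphi}(\eta))}$, i.e.\ a \cn{mapN} whose function may read the loop indices. The other is to map over the outer $n$ dimensions with the function $\lambda \overline{a}.[\forall k\in[0,2^{\slen{\eta}}).\,\cn{up}^{b}(\overline{a}[k],k,\cn{ind}^{\varphi}(\eta))]$, which acts on the whole inner row $A_1[\overline{j}]$. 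I would take the second route, since it makes $\cn{enumN}$ literally $\cn{mapN}(A_1,\overline{m},g)$ and lets \Cref{def:locate} be reused with no new loop reasoning.

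Second, by \Cref{cor:link} (with $\overline{m}\sqsubseteq(\overline{m},2^{\slen{\eta}})$), it is enough to prove a pointwise specification. Starting from the precondition $Q_{\cn{len}}(A_1,A_2,(\overline{m},2^{\slen{\eta}}))$, the statement must establish
\[\forall \overline{j}\in\cn{range}(\overline{m}).\forall k\in[0,2^{\slen{\eta}}).\,A_2[\overline{j},k]=\cn{up}^{b}(A_1[\overline{j},k],k,\cn{ind}^{\varphi}(\eta)),\]
with the length predicate preserved. Applying \Cref{def:locate} with $f=g$ gives $A_2[\overline{j}]=g(A_1[\overline{j}])$ for every $\overline{j}$. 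Unfolding $g$ and extensionality on the inner row then yield the pointwise equation at every $(\overline{j},k)$. The row-level and element-level array updates $A_2\{\forall \overline{j}.\,\overline{j}\mapsto g(A_1[\overline{j}])\}$ and $A_2\{\forall \overline{j},k.\,(\overline{j},k)\mapsto\cn{up}^{b}(\cdots)\}$ coincide, so the substituted precondition in the lemma matches the one \Cref{cor:link} produces.

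Third, I would check that $Q_{\cn{len}}$ survives. The function $g$ returns rows of length $2^{\slen{\eta}}$ because the row comprehension ranges over $[0,2^{\slen{\eta}})$. The function $\cn{up}^{b}$ keeps vector length, because $\cn{merge}$ replaces exactly the $\slen{\rho}=\slen{\eta}$ positions, and $k<2^{\slen{\eta}}$ fits in $\slen{\eta}$ bits. That last fact is the side condition $\slen{c'}=\slen{\rho}$ required by $\cn{merge}$.

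The main obstacle is the index-dependence of the mapped function. If the first encoding does not fit the \cn{mapN} lemma as stated, I would fall back on the direct nested-while proof from \Cref{appx:mapN}. There the innermost invariant becomes $\forall k'\in[0,k).\,A_2[\overline{j},k']=\cn{up}^{b}(A_1[\overline{j},k'],k',\cn{ind}^{\varphi}(\eta))$, and the assignment rule for arrays discharges each iteration. The remaining difficulty is the bit-encoding convention for $k$ (the little-endian order seen in the example, $01\mapsto 2$), which must be fixed consistently with $\cn{merge}$ but does not affect validity.
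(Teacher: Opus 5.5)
Your argument is sound, but it verifies a different program from the one the paper attaches to this lemma. The paper does not implement $\cn{enumN}$ as one $\cn{mapN}$. It uses the pipeline $\cassign{A_a}{\cn{locateN}(A_1,\overline{m},\cn{ind}^{\varphi}(\eta))}$; $\cassign{A_b}{\cn{mapN}(A_a,\overline{m},\lambda (x,y).(x,\cn{enum}(2^{\slen{\eta}})))}$; $\cassign{A_2}{\cn{assignN}(A_1,A_b,\overline{m},\cn{ind}^{\varphi}(\eta),\lambda u\,v.\cn{amp}(u))}$. After the same \Cref{cor:link} reduction you make, it proves the pointwise triple by a backward-substitution proof tree. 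That tree chains the $\cn{locateN}$, $\cn{mapN}$ and $\cn{assignN}$ triples through nested updates $E_1,E_2,E_3$, and adds a one-loop lemma that $\cn{enum}(n)$ returns $A[k]=k$.

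Your handling of the index dependence is essentially the paper's idea too. Its middle $\cn{mapN}$ also ranges over the outer bounds $\overline{m}$ and rewrites a whole inner row with $\cn{enum}(2^{\slen{\eta}})$. The difference is where the merge happens: the paper enumerates only the located $\eta$-fragment and merges it back with $\cn{assignN}$, keeping $A_1$'s amplitude, while your $g$ merges in place through $\cn{up}^{b}$.

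Your route buys several things:
\begin{itemize}
\item a single loop nest and direct reuse of \Cref{def:locate};
\item no auxiliary arrays that $P$ must avoid;
\item a use of \Cref{cor:link} in its literal element-function form, at row granularity. The paper instead applies it with an index-dependent right-hand side, which is harmless because the proof of \Cref{cor:link} never uses that restriction.
\end{itemize}
One caveat: the row comprehension in $g$ plays the role of $\cn{enum}$, so if realized as code it needs the same small loop argument. The paper's route keeps $\cn{enumN}$ inside the locate/map/assign vocabulary shared with the oracle translation and $\cn{ampN}$, so it is assembled from already-verified primitives.

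To cover the paper's implementation as stated, you would compose those three triples. The only semantic facts needed are that $\cn{vector}(A_b[\overline{j},k])=k$ and that $\lambda u\,v.\cn{amp}(u)$ keeps $A_1$'s amplitude. Then \Cref{def:gassign} gives $A_2[\overline{j},k]=\cn{up}^{b}(A_1[\overline{j},k],k,\cn{ind}^{\varphi}(\eta))$, provided $A_a$ and $A_b$ do not occur in $P$.
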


where \cn{enumN} is defined as follows; $\cn{enum}(2^{\slen{\eta}})$ produces an array listing all numbers in $[0,2^{\slen{\eta}})$.

{\small
\[
\begin{array}{l}
\cassign{A_a}{\cn{locateN}(A_1,\overline{m},\cn{ind}^{\varphi}(\eta))};\\
   \quad \cassign{A_b}{\cn{mapN}(A_a,\overline{m},\lambda (x,y) . (x,\cn{enum}(2^{\slen{\eta}})))}; \\
    \qquad\cassign{A_2}{\cn{assignN}(A_1,A_b,\overline{m},\cn{ind}^{\varphi}(\eta),\lambda u\,v.\cn{amp}(u))}
    \end{array}
 \]
}

\begin{proof}\rm 
By \Cref{cor:link}, it is enough to show the validity of the following triple.

{\small
\[\hspace{-0.8em}
\begin{array}{l}
\textcolor{spec}{\bil Q_{\cn{len}}(A_1,A_2,(\overline{m},2^{\slen{\eta}})) \bir }\\[0.2em]
\;\; \cn{enumN}(A_1,(\overline{m},2^{\slen{\eta}}),\cn{ind}^{\varphi}(\eta))\\[0.2em]
\textcolor{spec}{\bil 
 \forall \overline{j} \in \cn{range}(\overline{m}).\forall k\in[0,2^{\slen{\eta}}). A_2[\overline{j},k]=\cn{up}^{b}(A_1[\overline{j},k],k, \cn{ind}^{\varphi}(\eta)) 
\wedge Q_{\cn{len}}(A_1,A_2,(\overline{m},2^{\slen{\eta}})) \bir}
\end{array}
\]
}

The proof-tree derivation is given in \Cref{appx:enumN}.
\end{proof}

We now demonstrate the validity of the Hoare triple for \cn{ampN}, and it suffices to show the $n$-dimensional case. Here, we generate an $n$-dimensional array $A_3$, having the same basis-vectors as $A_1$, but multiplying its \cn{amp} elements by a phase relating the \cn{amp} elements of $A_1$ and $A_2$.

\begin{lemma}[The ampN Proof]\label[lemma]{def:ampN}\rm 
Given a locus $\eta' \sqcupplus \eta$, two $n$ dimensional arrays $A_1$ and $A_2$, with the bounds of each dimension to be $\overline{m}$, and the valid Hoare triple of the $\cn{ampN}(A_1,A_2,\overline{m},\cn{ind}^{\varphi}(\eta))$ function is listed below. The result array $A_3$ is initialized to be $ n$-dimensional with bounds $\overline{m}$.

{\small
\[\hspace*{-1em}
\begin{array}{l}
\textcolor{spec}{\bil 
\begin{array}{l}
P[A_3\{\forall \overline{j}\in \cn{range}(\overline{m}). \overline{j}\mapsto \cn{up}^{a}(\cn{up}^b(A_1[\overline{j}],\cn{vector}(A_2[\overline{j}])[\cn{ind}^{\varphi}(\eta)],\cn{ind}^{\varphi}(\eta)), \\[0.2em]
\qquad
(-1)^{\cn{vector}(A_1[\overline{j}])[\cn{ind}^{\varphi}(\eta)]\cdot \cn{vector}(A_2[\overline{j}])[\cn{ind}^{\varphi}(\eta)]}) \} / A_3]
\wedge Q_{\cn{len}}(A_1,A_2,A_3,(\overline{m},2^{\slen{\eta}}))
  \end{array}
  \bir}\\[0.8em]
\cassign{A_3}{\cn{ampN}(A_1,A_2,\overline{m},\cn{ind}^{\varphi}(\eta))}
  \\[0.2em]
\textcolor{spec}{\bil P \wedge Q_{\cn{len}}(A_1,A_2,A_3,(\overline{m},2^{\slen{\eta}}))  \bir}\\[0.2em]
\end{array}
\]
}

where $\cn{ampN}(A_1,A_2,\overline{m},\cn{ind}^{\varphi}(\eta))$ is defined as:
{\small
\[{\cn{assignN}(A_1,A_2,\overline{m},\cn{ind}^{\varphi}(\eta),\lambda u\,v.\cn{amp}(v)*(-1)^{\cn{vector}(u)[\cn{ind}^{\varphi}(\eta)]\cdot\cn{vector}(v)[\cn{ind}^{\varphi}(\eta)]})}\]
}
\end{lemma}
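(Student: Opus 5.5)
Since $\cn{ampN}$ is defined as an instance of $\cn{assignN}$, the plan is to obtain the triple directly from \Cref{def:gassign}, instantiated with $A:=A_1$, $A_1:=A_2$, $A_2:=A_3$, and the amplitude combinator
\[f(u,v) \triangleq \cn{amp}(v)*(-1)^{\cn{vector}(u)[\cn{ind}^{\varphi}(\eta)]\cdot\cn{vector}(v)[\cn{ind}^{\varphi}(\eta)]}.\]
\Cref{def:gassign} then yields a valid triple whose post-condition is exactly $P \wedge Q_{\cn{len}}(A_3,A_1,A_2,\overline{m})$. Its pre-condition substitutes for $A_3$ the pointwise array $\overline{j}\mapsto \cn{up}^a(\cn{up}^b(A_1[\overline{j}],\cn{vector}(A_2[\overline{j}]),\cn{ind}^{\varphi}(\eta)),f(A_1[\overline{j}],A_2[\overline{j}]))$. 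The remaining work is to show that this pre-condition coincides with, or is implied by, the one stated in the lemma, and then to close with the consequence rule.

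\textbf{Step one: the vector component.} \Cref{def:gassign} merges $\cn{vector}(A_2[\overline{j}])$ into the $\rho$ positions, while the lemma writes $\cn{vector}(A_2[\overline{j}])[\cn{ind}^{\varphi}(\eta)]$. These agree because, by the convention in \Cref{def:gassign} (where $A_1$ comes from $\cn{locateN}$), $\cn{up}^b$ and $\cn{merge}$ read only the $\rho$-restricted bits. In the intended use, $A_2$ comes from $\cn{enumN}$ and holds full vectors, so I would discharge this with a small auxiliary fact: $\cn{merge}(\rho,c,c') = \cn{merge}(\rho,c,c'[\rho])$ whenever $c'$ is a full-length vector. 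That fact follows by unfolding $\cn{merge}$.

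\textbf{Step two: the amplitude component.} Unfolding $f$ at $u=A_1[\overline{j}]$, $v=A_2[\overline{j}]$ gives the phase $(-1)^{\cn{vector}(A_1[\overline{j}])[\rho]\cdot\cn{vector}(A_2[\overline{j}])[\rho]}$ multiplied by $\cn{amp}(A_2[\overline{j}])$. The lemma's pre-condition lists only the phase, so it implicitly relies on the reading that $\cn{ampN}$ multiplies amplitudes. Two facts make the two forms match. First, $\cn{enumN}$ preserves amplitudes (it is an $\cn{assignN}$ with $\lambda u\,v.\cn{amp}(u)$), so $\cn{amp}(A_2[\overline{j}])=\cn{amp}(A_1[\overline{j}])$. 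Second, I would interpret the second argument of $\cn{up}^a$ in the statement as the multiplicative factor applied to the existing amplitude. Otherwise, I would add the factor $\cn{amp}(A_2[\overline{j}])$ explicitly and prove that the two substituted arrays are equal, using extensional array equality under $\forall\overline{j}\in\cn{range}(\overline{m})$.

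\textbf{Step three: the length predicates.} The lemma writes $Q_{\cn{len}}(A_1,A_2,A_3,(\overline{m},2^{\slen{\eta}}))$, whereas \Cref{def:gassign} uses a single bound vector. I take $\overline{m}$ here to denote the full $(n{+}1)$-dimensional bound and match the two by renaming, so only the order of the arrays differs. The main obstacle is step two: reconciling the exact amplitude expression with the $\cn{assignN}$ instance, which depends on how $\cn{up}^a$ and the preservation fact for $\cn{enumN}$ are stated. Once the two pointwise arrays are shown equal, substitution congruence in $P$ finishes the proof.
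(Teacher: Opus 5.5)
You follow the paper's route. Its entire proof is ``build on top of \cn{assignN}'': instantiate \Cref{def:gassign} with $f(u,v)=\cn{amp}(v)*(-1)^{\cn{vector}(u)[\rho]\cdot\cn{vector}(v)[\rho]}$, writing $\rho$ for $\cn{ind}^{\varphi}(\eta)$, and close with the consequence rule. You are right that this instance does not reproduce the stated pre-condition, but Step two does not repair the mismatch. The paper defines $\cn{up}^a(\theta,z')$ to overwrite the amplitude of $\theta$ with $z'$. So the stated pre-condition places the bare phase $(-1)^{\cn{vector}(A_1[\overline{j}])[\rho]\cdot\cn{vector}(A_2[\overline{j}])[\rho]}$ at index $\overline{j}$, whereas the \cn{assignN} instance, and hence the program, produces $\cn{amp}(A_2[\overline{j}])$ times that phase. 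Since $P$ is arbitrary, consequence requires the two substituted arrays to be equal, and they differ whenever some $\cn{amp}(A_2[\overline{j}])\neq 1$. Indeed, take $P$ to say that $A_3$ agrees with the stated array on $\cn{range}(\overline{m})$: the pre-condition becomes trivially true and the post-condition false, so the triple as written is not valid in general.

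Your two facts do not close this. First, $\cn{amp}(A_2[\overline{j}])=\cn{amp}(A_1[\overline{j}])$ is not a hypothesis of the lemma, which is stated for unrelated $A_1,A_2$; even granted, it leaves $\cn{amp}(A_1[\overline{j}])$ against $1$. Second, reading $\cn{up}^a$ multiplicatively in the statement but as an overwrite in the instance gives one symbol two meanings. If you read it multiplicatively everywhere, including the \cn{assignN} loop body $\cn{up}^a(t,f(A[\overline{k}],A_1[\overline{k}]))$, the instance produces $\cn{amp}(A_1[\overline{j}])\cdot\cn{amp}(A_2[\overline{j}])$ times the phase, again not the stated value. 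What you can prove is your ``otherwise'' variant, and you should present it as a corrected statement, not a proof of this one. With $\cn{amp}(A_2[\overline{j}])*(-1)^{\ldots}$ as the second argument of $\cn{up}^a$, the amplitude part is a verbatim instance of \Cref{def:gassign} and needs no array-equality argument.

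Steps one and three also rest on conventions you have not fixed. The paper specifies $\cn{merge}(\rho,c,c')$ only for $\slen{c'}=\slen{\rho}$, so the identity $\cn{merge}(\rho,c,c')=\cn{merge}(\rho,c,c'[\rho])$ for full-length $c'$ does not follow by unfolding. If $c'$ is consumed positionally, the left side writes the first $\slen{\rho}$ bits of $c'$. These agree with $c'[\rho]$ in general only when $\rho$ is an in-order initial segment, and the paper stresses that $\cn{ind}^{\varphi}$ need not be one. Under that reading, the \cn{assignN} instance with the full vector $\cn{vector}(A_2[\overline{j}])$ is not even well formed. The identity does hold if vectors are index-aligned, e.g.\ finite maps from positions to bits, as the $\subseteq$ in \Cref{def:disjointness} suggests, so state that representation as an assumption.

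In Step three, taking $\overline{m}$ to be the full bound is not a renaming, because the statement then writes $(\overline{m},2^{\slen{\eta}})$ for a longer vector. You must fix one bound vector and use it in the call, in $\cn{range}$, and in $Q_{\cn{len}}$. With these conventions and the amplitude correction, the lemma is a one-step instance of \Cref{def:gassign} plus consequence, matching the paper's one-line proof. The paper simply does not address the mismatches you found.
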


\begin{proof}\rm 
The proof is given as deriving a proof tree and building on top of \cn{assignN}. The detail is mechanized in Rocq.
\end{proof}

\subsection{Translating Quantum Conditionals}\label{sec:condtrans} 

We now present the translation of a quantum conditional, whose proof and type rules in \qafny are given below and contain two predicate transformers, $\frz$ and $\ufrz$.

{\small
      \begin{mathpar}
      \inferrule[P-If]{ FV(b) = \eta  \\ \qpule{\{\eta' : \tch{n}\}}{\varphi[\frz(b,\eta,\eta')/ \eta \sqcupplus \eta']}{\mu}{\varphi'}}
      {\qpule{\{\eta \sqcupplus \eta' : \tch{n}\}}{\varphi}{\sifq{b}{\mu}}{\varphi[\ufrz(\neg b,\eta,\eta \sqcupplus \eta')/\eta \sqcupplus \eta'] * \varphi'[\ufrz(b,\eta,\eta \sqcupplus \eta')/\eta']}}
           \end{mathpar}
                 \begin{mathpar}
              \inferrule[T-If]{ 
          FV(b)=\eta \\
          FV(e) \subseteq \eta'
          \\
          \sigma\uplus\{\eta' : \tch{n}\} \vdash e \triangleright \sigma\uplus\{\eta' : \tch{n'}\}}
        {\sigma\uplus\{\eta \sqcupplus \eta' : \tch{n}\} \vdash \sifq{b}{e} \triangleright \sigma\uplus\{\eta \sqcupplus \eta' : \tch{n'}\} }
     \end{mathpar}
}

In Rule \rulelab{P-If}, the bottom layer selects basis-kets $w$, where $\denote{b}^{\eta}(w)$ is \cn{true} and trims the basis-ket states to the locus region $\eta'$.
With the shrunk basis-ket states, the top level applies the $\mu$ operation.
We first demonstrate the \qafny proof via the Bell pair example.

\begin{example}[The Quantum Conditional Example in Bell Pair]\label[example]{example:bellpair}\rm
The Bell pair circuit (\Cref{sec:background}) applies a Hadamard to qubit $x[0]$ followed by a controlled-not on $y[0]$ guarded by $x[0]$, and $e=\qass{y\srange{0}{1}}{y\srange{0}{1}\,\splus\,1}$.
With pre- and post-conditions, the \qafny triple is:

{\small
      \begin{mathpar}
\inferrule[]{
      {\qpule{\sigma[y\srange{0}{1} \mapsto \tch{1}]}{y\srange{0}{1} {:}\,\tch{1} \mapsto \frac{1}{\sqrt{2}} \textcolor{auto}{\ket{1}} \ket{0}}{e}
             {y\srange{0}{1}{:}\,\tch{1} \mapsto \frac{1}{\sqrt{2}} \textcolor{auto}{\ket{1}} \ket{1} }}}
      {\qpule{\sigma}{x\srange{0}{1}\sqcupplus y\srange{0}{1} {:}\,\tch{1} \mapsto \sum_{b=0}^1{\frac{1}{\sqrt{2}}\ket{b}\ket{0}}}
      {\qif{x[0]}{e}}{x\srange{0}{1}\sqcupplus y\srange{0}{1}{:}\,\tch{1} \mapsto \sum_{b=0}^1{\frac{1}{\sqrt{2}}\ket{b}\ket{b}}}}
     \end{mathpar}
}

The pre-condition is an $\tch{1}$ typed entangled state for $x\srange{0}{1}\sqcupplus y\srange{0}{1}$.
The upper-level pre-condition checks $x\srange{0}{1}$'s basis-ket fragment $\ket{b}$, masks the $\textcolor{spec}{\frac{1}{\sqrt{2}}\ket{0}\ket{0}}$ case, and leaves the case of $\textcolor{spec}{\frac{1}{\sqrt{2}}\ket{1}\ket{0}}$, where we hide $x\srange{0}{1}$'s basis-vector $\textcolor{auto}{\ket{1}}$ since it is not mentioned in the locus domain.
Finally, we assemble the upper level post-condition with $\textcolor{auto}{\ket{1}}$ and $\textcolor{spec}{\frac{1}{\sqrt{2}}\ket{1}\ket{1}}$. 

\end{example}

{\small
      \begin{mathpar}
      \inferrule[]{ FV(b) = \eta  \\ \qpule{\{\eta' : \tch{n}\}}{\varphi[\frz(b,\eta,\eta')/ \eta \sqcupplus \eta']}{\mu}{\varphi_1[\frz(b,\eta,\eta')/ \eta \sqcupplus \eta']}}
      {\qpule{\{\eta \sqcupplus \eta' : \tch{n}\}}{\varphi}{\sifq{b}{\mu}}{\varphi_1}}
     \end{mathpar}
}

The key to embedding the proof rule is to connect the upper- and bottom-level (pre- and post-) conditions.
To do so, we can think of two program points ${\bullet}$ and ${\circ}$ around the translated expression $e$, where we prepare inserted predicates before ${\bullet}$ to transform $P$ to the right form of the upper-level pre-condition and transform ${\circ}$, representing the upper-level post-condition, to the bottom-level post-condition $Q$.
We manage the translation as follows.

\begin{lemma}[The Quantum Conditional Transformation]\label[lemma]{proof:qif}\rm 
Given $\sifq{b}{\mu}$, where we do expression translation for $\mu$,  $\{ \eta' : \tch{n} \}\vdash \mu \triangleright \{ \eta' : \tch{n'} \} \gg e$ and pre- and post-condition translations, as $\{\eta \sqcupplus \eta' : \tch{n} \} \vdash \varphi \gg P$ and $\{\eta \sqcupplus \eta' : \tch{n'} \} \vdash \varphi' \gg Q$; without losing generality, we assume the locus $\eta \sqcupplus \eta'$ is translated to an $n$ dimensional array $A$, such that $\eta \sqcupplus \eta' \vdash A$, we express the valid Hoare logic triple as follows.
$\overline{t}$ and $(\overline{t},\overline{1})$ represent the lengths for an $n$ and $n'$ ($>n$) dimensional arrays, respectively, where $(\overline{t},\overline{1})$ extends 
$\overline{t}$ with additional $n'-n$ dimensions with length $1$ in each dimension; $\overline{u'}+(\overline{t},\overline{1})$ adds the lengths in each dimensions.

{\small
\[\hspace*{-1em}
\begin{array}{l}
\textcolor{spec}{\bil P \wedge Q_{\cn{len}}(A,\overline{m}) \bir}\\[0.8em]
\;\;
\cassign{A_1}{\cn{filterN}(A,b,\cn{ind}^{\varphi}(\eta))}; 
\\[0.1em]
\quad
\cassign{\overline{u}}{\slen{A_1}};
\\[0.2em]
\;\;\quad
\cassign{A_2}{\cn{filterN}(A,\neg b,\cn{ind}^{\varphi}(\eta))};
\\[0.1em]
\qquad
\cassign{\overline{t}}{\slen{A_2}};
\\[0.2em]
\;\;\qquad
\cassign{A_3}{\cn{locateN}(A_1,\overline{u},\cn{ind}^{\varphi}(\eta'))};
\\[0.2em]
\qquad\qquad\bullet\;
\textcolor{spec}{\bil
\begin{array}{l}
P[\cn{concat}(\cn{upDim}(A_2,\overline{t},(\overline{t},\overline{1})),
\cn{upDim}(A_1,\overline{u},\slen{A_3})\{\forall \overline{j}\in \cn{range}(\overline{\slen{A_3}}). \overline{j}\mapsto \cn{up}^a(\cn{up}^b(
\\[0.1em]
\cn{upDim}(A_1,\overline{u},\slen{A_3})[\overline{j}],\cn{vector}(A_3[\overline{j}]),\cn{ind}^{\varphi}(\eta')),\cn{amp}(A_3[\overline{j}])) \})/A]
\end{array}
\bir}
\\[0.2em]
\qquad\qquad  e ; 
\\[0.2em]
\qquad\qquad\circ\;
\textcolor{spec}{\bil
\begin{array}{l}
Q[\cn{concat}(\cn{upDim}(A_2,\overline{t},(\overline{t},\overline{1})),
\cn{upDim}(A_1,\overline{u},\slen{A_3})\{\forall \overline{j}\in \cn{range}(\overline{\slen{A_3}}). \overline{j}\mapsto \cn{up}^a(\cn{up}^b(
\\[0.1em]
\cn{upDim}(A_1,\overline{u},\slen{A_3})[\overline{j}],\cn{vector}(A_3[\overline{j}]),\cn{ind}^{\varphi}(\eta')),\cn{amp}(A_3[\overline{j}])) \})/A]
\end{array}
\bir}
\\[0.2em]
\;\;\qquad
 \cassign{\overline{u'}}{\slen{A_3}};
 \\[0.1em]
 \qquad
\cassign{A_4}{\cn{upDim}(A_1,\overline{u},\overline{u'})};
\\[0.2em]
\;\;\quad
\cassign{A_5}{\cn{upDim}(A_2,\overline{t},(\overline{t},\overline{1}))};
\\[0.1em]
\quad
\cassign{A_6}{\cn{assignN}(A_4,A_3,\overline{u'},\cn{ind}^{\varphi}(\eta'),\lambda v\,l.\cn{amp}(l))};
\\[0.2em]
\;\;
\cassign{A_7}{\cn{concat}(A_5,A_6)};
\\[0.1em]
\;
\cassign{A}{A_7}
\\[0.5em]
\textcolor{spec}{\bil Q \wedge Q_{\cn{len}}(A,\overline{u'}+(\overline{t},\overline{1})) \bir}
\end{array}
\]
}
\end{lemma}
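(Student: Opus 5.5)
We verify the annotated program by splitting it into three segments around the program points $\bullet$ and $\circ$, and compose the resulting triples with the sequencing rule. The middle segment, the triple at $\bullet$ and $\circ$ around $e$, should follow from the induction hypothesis on the translation of the body $\mu$. By \Cref{def:qafnypureness} and the rule \rulelab{P-If}, the upper-level \qafny triple for $\mu$ is over the locus $\eta'$ only. Its translated Hoare triple $\triple{P'}{e}{Q'}$ therefore mentions $A_3$ (the located $\eta'$-fragment of the $b$-selected array) and none of $A_1,A_2,\overline{u},\overline{t}$.

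The assertions at $\bullet$ and $\circ$ are exactly $P$ and $Q$ with $A$ substituted by an expression built from $A_1,A_2$ and $A_3$. We then apply the frame rule displayed after \Cref{def:qafnypureness}, using $FV(A_1,A_2,\overline{u},\overline{t})\cap\cn{mod}(e)=\emptyset$, since $e$ only writes $A_3$ and its temporaries. Combined with the consequence rule, this carries the $\bullet$-assertion to the $\circ$-assertion. The key observation is that $\bullet$ is the translation of the upper pre-condition $\varphi[\frz(b,\eta,\eta')/\eta\sqcupplus\eta']$, with the frozen $\neg b$ part re-attached syntactically.

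For the first segment, we reason backwards from $\bullet$ to $P\wedge Q_{\cn{len}}(A,\overline{m})$. The five assignments are each an instance of a primitive whose pointwise triple we already have: \cn{locateN} by \Cref{def:locate}, \cn{filterN} and $\slen{\cdot}$ by the analogous pointwise lemma, each discharged through \Cref{cor:link}. After substitution, the precondition becomes $P$ with $A$ replaced by $\cn{concat}(\cn{upDim}(\cn{filterN}(A,\neg b),\ldots),\ \cn{upDim}(\cn{filterN}(A,b),\ldots)\{\ldots\})$, where the $\{\ldots\}$ update writes back the located $\eta'$-fragment of $\cn{filterN}(A,b)$. Before any $e$ runs, this write-back is the identity on vectors and amplitudes, because $\cn{up}^b(\theta,\cn{vector}(\theta)[\rho],\rho)=\theta$. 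The upDim with unit extra dimensions is likewise trivial.

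The remaining obligation is array-theoretic: the concatenation of the two filters is a permutation of $A$. Since the \qafny assertion $\varphi$ denotes a sum, which is invariant under reordering of basis-kets, $P$ is invariant under such a reordering. We must make this precise, as a lemma that $P$, as a translated assertion, is closed under the permutation induced by filter-and-concat. I expect this to be the main obstacle, since HLIR assertions are syntactically positional. I would first try to prove it by induction on the translation judgment $\sigma\vdash\varphi\gg P$, showing that translated predicates only ever quantify over all indices in $\cn{range}$.

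For the last segment, after $\circ$, we push $Q\wedge Q_{\cn{len}}(A,\overline{u'}+(\overline{t},\overline{1}))$ backwards through $\cassign{A}{A_7}$, \cn{concat}, \cn{assignN} (\Cref{def:gassign}), the two \cn{upDim}s and the length assignment. Each step is a substitution, and the composite substitution coincides syntactically with the one at $\circ$, once $\overline{u'}=\slen{A_3}$ is substituted. The length conjunct follows from the lengths of the \cn{upDim} outputs and the dimension-wise addition for \cn{concat}.
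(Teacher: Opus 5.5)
Your skeleton is the same as the paper's proof tree: filter twice, take the lengths, apply \cn{locateN}, use the upper-level triple for $e$, then substitute backwards through $\cassign{A}{A_7}$, \cn{concat}, \cn{assignN}, the two \cn{upDim}s and $\cassign{\overline{u'}}{\slen{A_3}}$. Your last segment is fine. The gap is at the step you flagged yourself, and the lemma you propose for it is false.

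Translated assertions are positional. The locus $\eta:\tch{k}\mapsto q$ becomes $\forall\overline{j}\in\cn{range}(\overline{m}).\,A[\overline{j}]=\cn{element}\,\{\cn{amp}=z(\overline{j}),\cn{vector}=c(\overline{j})\}$. Quantifying over the whole range does not make this closed under permutations, because the body depends on $\overline{j}$. For instance, $x\srange{0}{1}\sqcupplus y\srange{0}{1}:\tch{1}\mapsto\ssum{d=0}{1}{\frac{1}{\sqrt{2}}\ket{1-d}\ket{0}}$ translates to $A=[(\frac{1}{\sqrt{2}},\ket{10}),(\frac{1}{\sqrt{2}},\ket{00})]$. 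With $b=x[0]$, the reassembled array $\cn{concat}(\cn{filterN}(A,\neg b,\ldots),\cn{filterN}(A,b,\ldots))$ is $[(\frac{1}{\sqrt{2}},\ket{00}),(\frac{1}{\sqrt{2}},\ket{10})]$. So $P\Rightarrow P[\cn{concat}(\ldots)/A]$ fails, and no induction on $\sigma\vdash\varphi\gg P$ can establish it. Indeed, the lemma's triple instantiated with $\mu=\sskip$ and $\varphi'=\varphi$ is already invalid in plain array theory.

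The paper does not prove this step either. It postulates that concatenating $\cn{filterN}(A,b,\ldots)$ and $\cn{filterN}(A,\neg b,\ldots)$ gives back $A$, justified only by the order-insensitivity of quantum sums. It then uses this once, as the pre-strengthening step $P\Rightarrow P[\cn{concat}(\cn{filterN}(A,\neg b,\ldots),\cn{filterN}(A,b,\ldots))/A]$ at the top of the tree. To close your argument you must do one of the following:
\begin{itemize}
\item adopt the same axiom;
\item read array assertions modulo reordering of elements, which is the slack the $\denote{\cdot}$-equivalence in Translation Soundness provides;
\item state the lemma up to a permutation of $A$.
\end{itemize}

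A smaller point: the frame rule alone does not carry $\bullet$ to $\circ$. These assertions are $P$ and $Q$ with $A$ replaced by a term that combines $A_1$ and $A_3$ index by index. They are not conjunctions $P'\wedge R$ and $Q'\wedge R$ with $R$ free of $A_3$. The paper instead reads $\bullet$ and $\circ$ as the translated upper-level pre- and post-conditions of the generalized \rulelab{P-If} rule, and takes $\triple{\bullet}{e}{\circ}$ directly from the induction hypothesis. That is exactly what your ``key observation'' says, so you can simply drop the frame step.
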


In the lemma, \cn{filterN} collects all the elements in $A$ satisfying $b$ on the vector-bits corresponding to $\eta$, $\cn{upDim}(A,\overline{u},\overline{v})$ updates $A$'s dimension from $\overline{u}$ to $\overline{v}$, and $\cn{concat}(A_5,A_6)$ creates a new array $A_7$ by concatenating $A_6$ to the end of $A_5$.

As an example of compiling \Cref{example:bellpair}, we name all the compiled operations in \Cref{proof:qif} as $\cn{if\_h}(e)$, where the \qafny triple in \Cref{example:bellpair} is translated to the Hoare triple below.

{\small
      \begin{mathpar}
      \inferrule[]{\triple{A_3 = [(\tfrac{1}{\sqrt{2}},\ket{0})]}
       {e}
       {A_3 = [(\tfrac{1}{\sqrt{2}},\ket{1})]}}
      {\triple{A = [(\tfrac{1}{\sqrt{2}},\ket{00}),\;(\tfrac{1}{\sqrt{2}},\ket{10})]}
       {\cn{if\_h}(e)}
       {A = [(\tfrac{1}{\sqrt{2}},\ket{00}),\;(\tfrac{1}{\sqrt{2}},\ket{11})]}}
     \end{mathpar}
}

We examine the step-by-step effect of the operation below.
By applying the \cn{filterN} functions to $A$ for Boolean expressions $b$ and $\neg b$, we assign $A_1$ and $A_2$, with both lengths being $1$ ($\overline{u}$ for $A_1$ and $\overline{t}$ for $A_2$). We list the post-condition after applying the first four operations below.

{
\[
\textcolor{spec}{
\{
A_1 = [(\tfrac{1}{\sqrt{2}},\ket{10})]
\wedge
\overline{u}=1
\wedge
A_2 = [(\tfrac{1}{\sqrt{2}},\ket{00})]
\wedge
\overline{t}=1
\}
}
\]
}

We then apply \cn{locateN} to $A_1$ and assign $A_3$ to be $\textcolor{spec}{[(\tfrac{1}{\sqrt{2}},\ket{0})]}$.
By applying pre-condition strengthening, we reach the pre-condition $\textcolor{spec}{\{A_3 = [(\tfrac{1}{\sqrt{2}},\ket{0})]\}}$ on the top level above.

In \Cref{example:bellpair}, the upper level $e$ is a translation of $\mu$ given by the oracle operation compilation in \Cref{def:mult-mut}. We list the translated version below.

{
\[
\begin{array}{l}
\cassign{A_a}{\cn{locateN}(A_3,\overline{u},\cn{ind}^{\varphi}(\eta'))}; \\
\quad
\cassign{A_b}{\cn{mapN}(A_a,\overline{u},\lambda (x,y).(x,(y+1)\% 2))};
\\[0.1em]
\qquad
 \cassign{A_c}{\cn{assignN}(A_3,A_b,\overline{u},\cn{ind}^{\varphi}(\eta),\lambda v\,l.\cn{amp}(l))};\\
 \qquad\quad\cassign{A_3}{A_c}
 \end{array}
 \]
}

After applying the expression $e$, we reach the upper level post-condition  $\textcolor{spec}{\{A_3 = [(\tfrac{1}{\sqrt{2}},\ket{1})]\}}$, with the length $\overline{u'}=1$.
Since $A_3$'s dimension is not modified in the upper-level computation, i.e., $A_3$'s dimension is the same as $A_1$'s, the first \cn{upDim} operation sets $A_4$ to be the same as $A_1$. The same situation applies to the second \cn{upDim}. We list below the post-condition, including only $A_3$, $A_4$, and $A_5$ after the two \cn{upDim} operations.

{
\[
\textcolor{spec}{
\{
A_3 = [(\tfrac{1}{\sqrt{2}},\ket{1})]
\wedge
A_4 = [(\tfrac{1}{\sqrt{2}},\ket{10})]
\wedge
A_5 = [(\tfrac{1}{\sqrt{2}},\ket{00})]
\}
}
\]
}

The \cn{assignN} operation assigns to $A_6$ with $A_3$'s amplitude as well as the basis-vector bit shrunk to the second position ($y[0]$'s position),
where $A_6$ is assigned as $\textcolor{spec}{[(\tfrac{1}{\sqrt{2}},\ket{11})]}$.
Then, we concatenate $A_5$ and $A_6$ and assign $A_7$ back to $A$.
Finally, we reach the bottom-level post-condition listed in the Hoare triple above.

\myparagraph{Validity Proof.}
The validity proof is based on the \qafny proof rule for quantum conditionals, with generalized pre- and post-conditions, where we set the bottom-level pre- and post-conditions to be arbitrary $P$ and $Q$ in \rulelab{P-If}.
In \Cref{proof:qif}, the operations before $\bullet$ transform $P$ to the pre-condition of the upper level (\rulelab{P-If}) and the operations after $\circ$ transform the post-condition in the upper level to $Q$, i.e., the pre- and post-conditions around $e$ can be thought of as $\triple{\bullet}{e}{\circ}$, where $\bullet$ and $\circ$ represent the translated upper-level pre- and post-conditions. We outline the proof below.

\begin{proof}\rm 
The proof requires deriving a Hoare logic proof tree, including the array-length axioms for $A$ and $A_1$ to $A_7$.
We assume that $(\overline{t},\overline{1})$ defines the length for $i+j$ dimensional array, where $\overline{t}$ defines the lengths for $i$ dimensions and the remaining $j$ dimensions have length $1$.
The proof relies on the axiom that $\cn{concat}(\cn{filterN}(A,b),\cn{filterN}(A,\neg b))$ gives back $A$, i.e., if we partition $A$ into two subsets and merge them back, it will produce the same $A$. This is valid in quantum states since they do not care about ordering.
The proof-tree derivation is given in \Cref{appx:qif}.
\end{proof}

We now present the proofs for some of the functions involved above.
The filter (\cn{filterN}), array concatenation (\cn{concat}), and array length ($\slen{A}$) are well known, as many previous works \cite{10.1007/s10817-022-09654-y} discuss their Hoare logic proofs. Their mechanized proof is in Rocq and has a similar style to \Cref{def:locate}.
We mainly focus on $\cn{upDim}(A,\overline{m},\overline{u})$, which updates $A$'s dimension from $\overline{m}$ to $\overline{u}$, provided that $\overline{m}\sqsubseteq \overline{u}$, i.e., $\overline{m}$ is a prefix of $\overline{u}$ as $\exists \overline{m'}. \overline{u} = (\overline{m},\overline{m'})$.

\begin{lemma}[The Valid UpDim Hoare Triple]\label[lemma]{def:updim}\rm 
Given a locus $\eta' \sqcupplus \eta$, an $n$ dimensional array $A$, with the bounds $\overline{m}$ and a length list $\overline{m'}$, and the valid Hoare triple of $\cn{upDim}(A,\overline{m},(\overline{m},\overline{m'}))$ is listed below with its implementation. We assume that the result array $A_1$ is initialized to $n+\slen{\overline{m'}}$ dimension with the additional dimension bounds $\overline{m'}$.

{\small
\[\hspace{-0.8em}
\begin{array}{l}
\textcolor{spec}{\bil P[A_1\{\forall \overline{j}\in\cn{range}(\overline{m}). \forall \overline{k}\in\cn{range}(\overline{m'}). (\overline{j},\overline{k}) \mapsto A[\overline{j}]\}/A_1] \wedge Q_{\cn{len}}(A,\overline{m}) \wedge Q_{\cn{len}}(A_1,(\overline{m},\overline{m'}))  \bir}\\[0.2em]
\;\;\cassign{A_1}{\cn{upDim}(A,\overline{m},(\overline{m},\overline{m'}))}
\\[0.2em]
\textcolor{spec}{\bil 
P \wedge Q_{\cn{len}}(A,\overline{m}) \wedge Q_{\cn{len}}(A_1,(\overline{m},\overline{m'}))
\bir }
\end{array}
\]
}

where ${\cn{upDim}(A,\overline{m},(\overline{m},\overline{m'}))}$ is defined as: ${\cn{mapN}(A,\overline{m},\lambda (x,y) . \cn{copy}(\overline{m'},(x,y)))}$;
$\cn{copy}(\overline{m'},v)$ is a generalized version of the copy function in \Cref{def:copyN} to produce a $\overline{m'}$ dimensional array of elements $v$.
\end{lemma}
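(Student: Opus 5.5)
By \Cref{cor:link}, it suffices to establish the pointwise triple
\[
\textcolor{spec}{\bil Q_{\cn{len}}(A,\overline{m}) \wedge Q_{\cn{len}}(A_1,(\overline{m},\overline{m'}))\bir}
\;\cassign{A_1}{\cn{upDim}(A,\overline{m},(\overline{m},\overline{m'}))}\;
\textcolor{spec}{\bil \forall \overline{j}\in\cn{range}(\overline{m}).\forall \overline{k}\in\cn{range}(\overline{m'}). A_1[\overline{j},\overline{k}]=A[\overline{j}] \wedge Q_{\cn{len}}(A,\overline{m}) \wedge Q_{\cn{len}}(A_1,(\overline{m},\overline{m'}))\bir}.
\]
This instantiates the lemma with prefix $\overline{m}\sqsubseteq(\overline{m},\overline{m'})$. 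The substituted update $A_1\{\forall \overline{j}.\forall\overline{k}.(\overline{j},\overline{k})\mapsto A[\overline{j}]\}$ is exactly the array that the pointwise postcondition pins down. Combined with the length facts, this pointwise property is therefore equivalent to $A_1$ being that updated array, which gives the stated precondition.

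For the pointwise triple, I would unfold $\cn{upDim}$ into its definition $\cn{mapN}(A,\overline{m},\lambda(x,y).\cn{copy}(\overline{m'},(x,y)))$. Then I apply \Cref{def:locate}, taking $P$ to be the desired pointwise postcondition and $f=\cn{copy}(\overline{m'},\cdot)$. This yields a postcondition of the form $\forall\overline{j}\in\cn{range}(\overline{m}). A_1[\overline{j}]=\cn{copy}(\overline{m'},A[\overline{j}])$, where $A_1[\overline{j}]$ is read as the sub-array of the outer $n$ indices.

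The remaining step is a purely logical consequence rule. I would prove an auxiliary fact about $\cn{copy}$ by induction on the length of $\overline{m'}$: $Q_{\cn{len}}(\cn{copy}(\overline{m'},v),\overline{m'})$ holds and $\forall\overline{k}\in\cn{range}(\overline{m'}).\cn{copy}(\overline{m'},v)[\overline{k}]=v$. The base case $\overline{m'}$ empty gives $v$ itself. The inductive step treats $\cn{copy}(m'_0,\overline{m''},v)$ as $m'_0$ copies of $\cn{copy}(\overline{m''},v)$, which generalizes the one-dimensional $\cn{copy}$ of \Cref{def:copyN}. With this fact, indexing $A_1[\overline{j}][\overline{k}]=A_1[\overline{j},\overline{k}]$ gives the pointwise postcondition, and the same fact gives $Q_{\cn{len}}(A_1,(\overline{m},\overline{m'}))$.

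I expect the main obstacle to be a dimension mismatch. \Cref{def:locate} is stated for arrays $A,A'$ of the same dimension $\overline{m}$, whereas here $A_1$ has the larger dimension $(\overline{m},\overline{m'})$ and $\cn{mapN}$ writes whole sub-arrays into it. I would handle this as \Cref{def:copyN} does: view $A_1$ as an $n$-dimensional array whose elements are $\slen{\overline{m'}}$-dimensional arrays. I would then reuse the loop-invariant derivation of \Cref{appx:mapN} verbatim, with the element type generalized, and add the length axiom $Q_{\cn{len}}(A_1,(\overline{m},\overline{m'}))$ to each invariant; it is preserved because every assignment writes an element of shape $\overline{m'}$. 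The mechanized Rocq version would follow the $\cn{copyN}$ proof, replacing the scalar $2^{\slen{\eta}}$ with the list $\overline{m'}$.
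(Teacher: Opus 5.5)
Your proposal is correct and follows the same route as the paper. It reduces via the Pointwise Update lemma (\Cref{cor:link}) to the pointwise triple $\forall \overline{j}\in\cn{range}(\overline{m}).\forall\overline{k}\in\cn{range}(\overline{m'}).\,A_1[\overline{j},\overline{k}]=A[\overline{j}]$ and discharges it with the single-\cn{mapN} proof tree as in \Cref{def:copyN}; your induction on $\cn{copy}(\overline{m'},v)$ and the nested-array reading, which reconciles \Cref{def:locate}'s equal-dimension hypothesis with $A_1$'s shape $(\overline{m},\overline{m'})$, just spell out details the paper leaves to its Rocq mechanization.
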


\begin{proof}\rm
The proof is given by deriving a proof tree using a single operation \cn{mapN}, and it follows a similar style to the proof of \Cref{def:copyN}. Its detail is mechanized in Rocq.
\end{proof}

\subsection{Translating Quantum Measurement}\label{sec:meatrans}

Below is the \qafny proof rule for quantum measurements, containing a predicate transformer $\mea$.

{
      \begin{mathpar}
    \inferrule[]{}
     {\qpule{\sigma\uplus\{\eta\sqcupplus\eta': \tch{n}\}}{P[\mea(x,r,\slen{\eta},\eta')/\eta\sqcupplus\eta']}{\cassign{r,x}{\smea{\eta}} }{P} }
     \end{mathpar}
}

The predicate transformer $\mea$ performs two tasks: 1) trim down the locus $\eta \sqcupplus \eta'$ to be $\eta'$ since $\eta$ is measured away, and 2) assign the measurement result, both measurement output bits and probability value, to the binding variables $x$ and $r$, respectively.
The operation is similar to a series of array assignments, as in the quantum oracle operation, except that the measurement result is nondeterministic. If we view a quantum state as an array representing a linear sum, we can always assume the existence of the index $\overline{i}$ to be picked; thus, $x=A_1[\overline{i}]$, if $A_1$ refers to the sub-array of $\eta$, and the probability value $r$ is the sum of the squares of all amplitudes in $A_1[\overline{k}]$ (for all $\overline{k}$), such that $A_1[\overline{i}] = A_1[\overline{k}]$. We define the following valid Hoare triple for quantum measurement.

\begin{lemma}[The Measurement Valid Triple]\label[lemma]{def:mea}\rm 
Given a quantum state $\varphi$, an operation $\cassign{r,x}{\smea{\eta}}$ in the locus $\eta \sqcupplus \eta'$ with the $\tch{n}$-typed quantum datum $\varphi(\eta\sqcupplus \eta')$, transformed to an $n$ dimensional array $A$, where the collection of indices $\cn{ind}^{\varphi}(\eta\sqcupplus \eta')$.
Assuming that the index $\overline{i}\in[0,\overline{m})$ is picked for $A$, the valid Hoare triple is listed as follows, where $\textcolor{spec}{E_1} := 
\textcolor{spec}{\cn{filterN}(A,\lambda y.y = x,\cn{ind}^{\varphi}(\eta))}$.

{\small
\[
\begin{array}{l}
    \textcolor{spec}{
\Big{ \{ }
\begin{array}{l} 
P[E_1\{\forall \overline{j}\in\cn{range}(\slen{E_1}).\overline{j}\mapsto \cn{up}^a(\cn{up}^b(E_1[\overline{j}],\cn{vector}(E_1[\overline{j}]),\cn{ind}^{\varphi}(\eta'))
,\frac{\cn{amp}(E_1)}{r})\}/A]
\\[0.2em]\qquad[\sqrt{\sum_{\overline{j}\in\cn{range}(\slen{E_1})}\slen{\cn{amp}(E_1)}^2}/r]
[\cn{vector}(A[\overline{i}])[\cn{ind}^{\varphi}(\eta)]/x]
\wedge Q_{\cn{len}}(A,\overline{m})
\end{array}
\Big{ \} } }\\[0.2em]
\qquad 
\begin{array}{l}
\;\;\cassign{x}{\cn{vector}(A[\overline{i}])[\cn{ind}^{\varphi}(\eta)]};
\\[0.2em]
\quad\cassign{A_1}{\cn{filterN}(A,\lambda y.y = x,\cn{ind}^{\varphi}(\eta))};
\\[0.2em]
\quad\;\;
\cassign{\overline{t}}{\slen{A_1}};
\\[0.2em]
\qquad
      \cassign{A_2}{\cn{locateN}(A_1,\overline{t},\cn{ind}^{\varphi}(\eta'))};
\\[0.2em]
\qquad\;\;
      \cassign{r}{\cn{sqrtSum}(A_2,\overline{t})};
 \\[0.2em]
\qquad\quad      \cassign{A_3}{\cn{mapN}(A_2,\overline{t}, \lambda (y,u) . (\frac{y}{r},u))}; \cassign{A}{A_3}
 \end{array}
       \\[0.2em]
\textcolor{spec}{\bil P \wedge Q_{\cn{len}}(A,\overline{t}) \wedge \overline{t} \le \overline{m} \bir }
\end{array}
\]
}
\end{lemma}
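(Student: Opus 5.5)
We prove the triple backwards, one statement at a time, in weakest-precondition style, as for \Cref{def:mult-mut} and \Cref{proof:qif}. Each HLIR primitive already has a valid triple: \cn{filterN} and $\slen{\cdot}$ from the standard array results cited in \Cref{sec:condtrans}, \cn{locateN} and \cn{mapN} from \Cref{def:locate}, and \cn{sqrtSum} as a fold with the same loop-invariant shape as \Cref{def:locate}. By \Cref{cor:link}, each of these triples becomes a single substitution into the post-condition. We therefore start from $P \wedge Q_{\cn{len}}(A,\overline{t}) \wedge \overline{t}\le\overline{m}$ and push it back through the seven statements in reverse, chaining the triples with the sequencing rule.

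\textbf{Last three statements.} The final assignment $\cassign{A}{A_3}$ substitutes $A_3$ for $A$. Then \cn{mapN} with $\lambda (y,u).(\frac{y}{r},u)$ replaces $A_3$ by $A_2\{\forall\overline{j}.\overline{j}\mapsto \cn{up}^a(A_2[\overline{j}],\cn{amp}(A_2[\overline{j}])/r)\}$. Next, $\cassign{r}{\cn{sqrtSum}(A_2,\overline{t})}$ substitutes $\sqrt{\sum_{\overline{j}}\slen{\cn{amp}(A_2[\overline{j}])}^2}$ for $r$. Since \cn{locateN} only shrinks vectors and never changes amplitudes, this sum equals the corresponding sum over $E_1$.

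\textbf{First four statements.} The \cn{locateN} step replaces $A_2$ by $A_1$ with each vector restricted to $\cn{ind}^{\varphi}(\eta')$; this is the $\cn{up}^b(\cdot,\cn{vector}(\cdot),\cn{ind}^{\varphi}(\eta'))$ component of the stated pre-condition. The length assignment $\cassign{\overline{t}}{\slen{A_1}}$ substitutes $\slen{A_1}$ for $\overline{t}$, turning $\overline{t}\le\overline{m}$ into $\slen{A_1}\le\overline{m}$. The \cn{filterN} assignment replaces $A_1$ by $E_1$, written with the free variable $x$. Finally, $\cassign{x}{\cn{vector}(A[\overline{i}])[\cn{ind}^{\varphi}(\eta)]}$ performs the outer substitution for $x$. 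Composing these substitutions gives exactly the displayed pre-condition, together with a leftover side condition $\slen{E_1}\le \overline{m} \wedge Q_{\cn{len}}(E_1,\slen{E_1})$. This follows from $Q_{\cn{len}}(A,\overline{m})$ and the filter axiom that the result of \cn{filterN} is a sub-collection of its input. We then discharge it by precondition strengthening.

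\textbf{Main obstacle: the amplitude term.} The statement writes $\frac{\cn{amp}(E_1)}{r}$ and leaves $r$ to be substituted afterward, while the order of substitutions fixed by the program places the \cn{sqrtSum} substitution inside the \cn{mapN} substitution. We will show that these agree by checking that $r$ does not occur in $E_1$ or in the updated array except through that amplitude term, so the two substitutions commute.

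\textbf{Division by zero.} A second, related point is that $r\neq 0$ is needed for the division to make sense. Because $\overline{i}$ is a valid index of $A$, the element $A[\overline{i}]$ survives the filter, so $E_1$ is nonempty. Nonzero amplitude of that element is an implicit invariant of the array representation, since zero-amplitude basis-kets are omitted. We will add it as a conjunct of the representation predicate if the mechanization requires it.
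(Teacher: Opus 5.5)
Your proof is the paper's: the derivation in \Cref{appx:mea} chains exactly these primitive triples backwards (the final assignment, \cn{mapN}, \cn{sqrtSum}, \cn{locateN}, the length assignment, \cn{filterN}, and the assignment to $x$), with one pre-strengthening step that rests on the same fact you use, namely that the $\cn{up}^b$ restriction to $\cn{ind}^{\varphi}(\eta')$ leaves amplitudes unchanged. Your ``main obstacle'' does not exist and its proposed fix should be dropped: pushing the post-condition back through \cn{mapN} and then through $\cassign{r}{\cn{sqrtSum}(A_2,\overline{t})}$ already gives $(P[\ldots\cn{amp}(A_2[\overline{j}])/r\ldots/A])[\sqrt{\cdots}/r]$, i.e.\ the $r$-substitution applied after the $A$-substitution to the whole formula, exactly as displayed (this also correctly replaces any occurrences of $r$ in $P$ itself, such as the constraint on $r$ in line 12 of \Cref{fig:onetime-hw}), and in any case the two substitutions do not commute, because $r$ occurs free in the substituted amplitude term.
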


Following the Bell pair example in \Cref{example:bellpair}, assume that we are given the task of verifying whether $0$ and $\ket{0}$ are the valid outputs after measuring the first qubit, and the state before the measurement is represented as an array $A=\textcolor{spec}{[\frac{1}{\sqrt{2}}\ket{0}\ket{0},\frac{1}{\sqrt{2}}\ket{1}\ket{1}]}$.
When applying the proof sequence in \Cref{def:mea} to the example, we first pick $0$ as the output index, as $A[0]$, where we assign $x$ to $\cn{vector}(A[0],\cn{ind}^{\varphi}(x[0]))=0$, and we assign $A_1$ and $A_2$ to be $\textcolor{spec}{[\frac{1}{\sqrt{2}}\ket{0}\ket{0}]}$ and $\textcolor{spec}{[\frac{1}{\sqrt{2}}\ket{0}]}$ with $\overline{t}=1$.
Finally, $r$ is assigned to $\frac{1}{\sqrt{2}}$ where we assign $A_3$ to be $\textcolor{spec}{[\ket{0}]}$ by dividing the element amplitude by $r$.

Measurement need not be the last step of a program: $\cassign{r,x}{\smea{\eta}}$ binds the outcome $x$ and its probability $r$ for use in the continuation. Because the outcome index is chosen once and the remaining array is a function of the pre-state and that choice, the continuation sees $x$ as an ordinary classical variable; feed-forward control on measurement outcomes, as in teleportation or repeat-until-success loops, therefore translates to ordinary classical control in Hoare logic, and no black-box probability distribution is involved.

\myparagraph{Validity Proof.}
We show the proof of \Cref{def:mea} below.

\begin{proof}\rm 
The proof derives a Hoare logic proof tree by combining the proof results for the \cn{filterN}, \cn{locateN}, \cn{sqrtSum}, and other functions.
For simplicity, we omit the $Q_{\cn{len}}$ judgments.
The proof-tree derivation is given in \Cref{appx:mea}.
\end{proof}

For the functions involved in the proof, aside from \cn{sqrtSum}, the others have been discussed above.
We now present the proof of the \cn{sqrtSum} function below, which is essentially the same as the proof for a classical sum function.

\begin{lemma}[The SqrtSum Hoare Triple]\label[lemma]{def:sqrtsum}\rm
Given a locus $\eta$, an $n$-dimensional array $A$ with bounds $\overline{m}$, the valid Hoare triple for the $\cn{sqrtSum}(A,\overline{m})$ function is listed below. 

{\small
\[\hspace{-0.8em}
\begin{array}{l}
\textcolor{spec}{\bil P[\sqrt{\sum_{\overline{j}\in\cn{range}(\overline{m})}\slen{\cn{amp}(A[\overline{j}])}^2}/r] \wedge Q_{\cn{len}}(A,\overline{m})  \bir}\\[0.2em]
\;\;\cassign{r}{\cn{sqrtSum}(A,\overline{m})}
\\[0.2em]
\textcolor{spec}{\bil 
P \bir }
\end{array}
\]
}

where ${\cn{sqrtSum}(A,\overline{m})}$ is defined as:

{\small
\[
\begin{array}{l}
\cassign{r}{0};\\
\cassign{\overline{j}}{0};\\
\WHILE(j_{0} < m_{0})\{ ... \\
\qquad\WHILE(j_{n\sminus 1} < m_{n\sminus 1}) \{ 
\\
\qquad\qquad{r}\,\cn{+=}\,{\slen{\cn{amp}(A[k_0,...,k_{n\sminus 1}])}^2};  \cassign{j_{n\sminus 1}}{j_{n\sminus 1}+1} 
\\
\qquad\}
\\
...\; \cassign{j_{0}}{j_{0}+1}\}
\\
\cassign{r}{\sqrt{r}}
\end{array}
\]
}
\end{lemma}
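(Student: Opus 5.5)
The plan follows the pattern of the other HLIR primitives. We first isolate the loop's functional effect on $r$ and then push an arbitrary post-condition $P$ backwards. First we show
\[\textcolor{spec}{\bil Q_{\cn{len}}(A,\overline{m})\bir}\;\cassign{r}{\cn{sqrtSum}(A,\overline{m})}\;\textcolor{spec}{\bil r=\sqrt{\textstyle\sum_{\overline{j}\in\cn{range}(\overline{m})}\slen{\cn{amp}(A[\overline{j}])}^2}\wedge Q_{\cn{len}}(A,\overline{m})\bir}.\]
Then we argue as in \Cref{cor:link}, with the scalar $r$ in place of the array $A'$. The program modifies only $r$ and the loop counters $\overline{j}$. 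Since $A$ is unchanged, the value $S\triangleq\sqrt{\sum_{\overline{j}}\slen{\cn{amp}(A[\overline{j}])}^2}$ is the same before and after. The precondition $P[S/r]$ mentions neither $r$ nor the counters, so the frame rule carries it to the end. There, $P[S/r]\wedge r=S$ entails $P$ by substitution, which gives the stated triple by consequence. This assumes, as elsewhere in the paper, that the counters are fresh and do not occur in $P$.

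For the functional triple we split the program into three parts: the initialization, the nested loops accumulating a partial sum, and the final $\cassign{r}{\sqrt{r}}$. The last assignment is handled by the assignment rule, so before it we need $r=\sum_{\overline{j}\in\cn{range}(\overline{m})}\slen{\cn{amp}(A[\overline{j}])}^2$. For the loops we use one invariant per nesting level, exactly as in the \cn{mapN} derivation (\Cref{appx:mapN}), with the pointwise array fact replaced by a partial-sum fact. At level $\ell$ the invariant states that $j_0,\dots,j_{\ell-1}$ are fixed and in range, that $j_\ell\in[0,m_\ell]$, and that $r$ equals the sum over all index tuples lexicographically smaller than $(j_0,\dots,j_{\ell},0,\dots,0)$. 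The innermost body $r\,\cn{+=}\,\slen{\cn{amp}(A[\overline{j}])}^2;\ j_{n\sminus1}\mathrel{+}=1$ preserves its invariant by the assignment rule plus the arithmetic identity that a prefix sum extended by one element is the next prefix sum. When a loop at level $\ell$ exits with $j_\ell=m_\ell$, its invariant coincides with the level-$(\ell-1)$ invariant after $j_{\ell-1}$ is incremented, once the inner counter is reset to $0$.

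The main obstacle is this multi-dimensional bookkeeping. We must state the lexicographic prefix sum so that it reduces cleanly at loop exit and re-entry, and we must justify reordering the finite sum $\sum_{\overline{j}\in\cn{range}(\overline{m})}$ as an iterated sum $\sum_{j_0}\cdots\sum_{j_{n\sminus1}}$. To avoid doing this by hand, we would define the prefix sum as an iterated sum of the form $\sum_{k<j_\ell}\sum_{\text{rest}}+\cdots$, and argue by induction on $n$ that the $n$-level nest equals an outer loop over $j_0$ whose body is an $(n-1)$-level \cn{sqrtSum}-style accumulation without the square root. The base case is the standard one-dimensional sum loop, and the inductive step reuses the invariant of the inner nest. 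The $n=0$ case, with a single element, is handled directly. Finally, the rule of consequence discharges the termination side condition $\neg(j_\ell<m_\ell)\wedge j_\ell\le m_\ell\Rightarrow j_\ell=m_\ell$.
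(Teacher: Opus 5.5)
Your proposal is correct and takes essentially the paper's route: the paper's proof simply refers to the \cn{mapN} argument of \Cref{def:locate} combined with the standard classical sum loop, and that is exactly your decomposition. The \cn{mapN} argument is a \Cref{cor:link}-style reduction to a functional triple (here with the scalar $r$ playing the role of $A'$), followed by a nested-loop proof tree with one invariant per level; your lexicographic prefix-sum invariant, the explicit reset of the inner counters, and the freshness assumption on $\overline{j}$ make precise the bookkeeping that the paper leaves implicit.
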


\begin{proof}\rm
The proof is given by deriving a proof tree with a structure similar to the proofs of the classical sum function and of \Cref{def:locate}.
\end{proof}

\subsection{Translating Classical Operations}\label{sec:transclassical}

The \qafny proof system contains some proof rules for classical operations, where we list two below for sequencing and for-loop operations.

{\small
\begin{mathpar}
    \inferrule[P-Seq]{\sigma\vdash \mu_1 \triangleright\sigma_1\\\\\qpule{\sigma}{\varphi}{\mu_1 }{\varphi'} \;\quad \qpule{\sigma_1}{\varphi'}{\mu_2 }{\varphi''}}
     {\qpule{\sigma}{\varphi}{\sseq{\mu_1}{\mu_2} }{\varphi''} }
     
    \inferrule[P-Loop]{\sigma[a_1/j]\vdash \cfor{j}{a_1}{a_2}{\{e\}} \triangleright\sigma[a_2/j] \\\\ \qpule{\sigma}{\varphi \wedge a_1 \le j < a_2}{e}{\varphi} }
     {\qpule{\sigma[a_1/j]}{\varphi}{ \cfor{j}{a_1}{a_2}{\{e\}} }{\varphi \wedge j = a_2} }
\end{mathpar}
}

Except for the \qafny typing, the proof rules are similar to the sequencing and for-loop rules in Hoare logic. These rule transformations are natural, i.e., we transform the sub-terms and simply transform these operations to the proof rules for similar operations in Hoare logic.
Sequencing is compositional: the translation of $\sseq{\mu_1}{\mu_2}$ is the sequence of the two translations, and \rulelab{P-Seq} becomes the Hoare sequencing rule. A for-loop is a classical loop whose body is translated as above, so \rulelab{P-Loop} becomes the Hoare loop rule with the translated invariant. The only non-trivial step is the invariant itself: its locus typically grows with the loop index, as $x\srange{0}{j}$ does in a loop that entangles one more qubit per iteration, so the invariant is translated with the index map $\cn{ind}^{\varphi}(x\srange{0}{j})$ and the array bounds $\overline{m}$ parameterised by $j$. The translated invariant $P(j)$ is then an ordinary Hoare invariant, and its preservation over one iteration is the translation of the loop body from $P(j)$ to $P(j\splus 1)$.
 \subsection{Translation Completeness and Soundness}
\label{sec:proofs}

This section presents the two translation theorems; translation completeness (\Cref{thm:compile-soundness}) and its component lemmas are mechanized in Rocq, while translation soundness (\Cref{thm:relative-completeness}) is proved on paper.
Based on the individual component proofs in \Cref{sec:trans}, we express the combined theorem below.

\begin{theorem}[Translation Completeness]\label{thm:compile-soundness}\rm 
Given a \qafny proof $\qpule{\sigma}{\varphi}{\mu}{\varphi'}$, let $\sigma \vdash \mu \triangleright \sigma' \gg e$, $\sigma \vdash \varphi \gg P$, and $\sigma' \vdash \varphi' \gg Q$; thus, $\triple{P}{e}{Q}$.
\end{theorem}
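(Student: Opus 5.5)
We prove the theorem by induction on the derivation of the \qafny triple $\qpule{\sigma}{\varphi}{\mu}{\varphi'}$. The invariant carried through the induction is that the translation judgments $\sigma \vdash \mu \triangleright \sigma' \gg e$ and $\sigma \vdash \varphi \gg P$ are compositional. Concretely, each $\tch{n}$-typed locus of $\varphi$ is sent to an $n$-dimensional array $A$ together with its index map $\cn{ind}^{\varphi}$ and length facts $Q_{\cn{len}}$. A state built with $\uplus$ is sent to the conjunction of the translations of its parts.

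Before the induction, we establish a substitution-commutation lemma. It says that translating a predicate-transformer substitution on the \qafny side yields the corresponding array-update substitution on the Hoare side. For example, $\varphi[(\eta\sqcupplus\eta')\{\eta\rightarrow\denote{\mu}(\eta)\}/\eta\sqcupplus\eta']$ translates to $P[A\{\forall \overline{j}.\,\overline{j}\mapsto \cn{up}^b(\ldots)\}/A]$. Analogous statements hold for $\frz$/$\ufrz$ (via \cn{filterN}, \cn{locateN}, \cn{upDim}, \cn{concat}) and for $\mea$ (via \cn{filterN} and renormalization). This lemma is what lets the generalized, backward-style rules stated in \Cref{sec:trans} be matched syntactically against the per-operation lemmas.

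The base cases then follow by invoking the component lemmas.

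\begin{itemize}
\item Oracle application is \Cref{def:mult-mut}.
\item Hadamard and QFT are \Cref{def:mult-had}, built from \Cref{def:copyN,def:enumN,def:ampN}.
\item Measurement is \Cref{def:mea}, which uses \Cref{def:sqrtsum}. The nondeterministically chosen index $\overline{i}$ is handled by existentially quantifying it in the translated precondition.
\item Skip is trivial.
\end{itemize}

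For each of these cases we only check that the translated pre- and post-conditions coincide with those in the lemma, using the commutation lemma and the assumption $Q_{\cn{len}}$ that the lemmas require.

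The structural cases are handled as follows.

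\begin{itemize}
\item \rulelab{P-Seq} becomes the Hoare sequencing rule. Here we use that the intermediate type environment $\sigma_1$ determines the translation of $\varphi'$ uniquely, so the postcondition of the first translated triple and the precondition of the second agree.
\item \rulelab{P-Loop} becomes the while rule, with the translated invariant $P(j)$ parameterised by $j$ through $\cn{ind}^{\varphi}(x\srange{0}{j})$ and bounds $\overline{m}(j)$. The induction hypothesis gives preservation from $P(j)$ to $P(j\splus1)$.
\item The frame rule is handled by \Cref{def:pureness,def:qafnypureness}. These replace $*$ with $\wedge$, and disjointness of loci, via \Cref{def:disjointness}, yields the side condition $FV(R)\cap\cn{mod}(\cn{trans}(\mu))=\emptyset$ of the classical frame rule.
\item The consequence rule transfers because \qafny state entailment between well-typed states translates to implication between array predicates. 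Locus reordering and the equations of \Cref{fig:qafny-state} translate to permutations of $\cn{ind}^{\varphi}$.
\end{itemize}

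The main obstacle is the quantum conditional, \rulelab{P-If}. The induction hypothesis gives $\triple{\bullet}{e}{\circ}$ only for the inner environment $\{\eta':\tch{n}\}$ acting on $A_3$. We must embed this triple in the larger context of \Cref{proof:qif} and show that the reassembled array $\cn{concat}(A_5,A_6)$ represents $\varphi[\ufrz(\neg b,\ldots)]*\varphi'[\ufrz(b,\ldots)]$. We plan two steps.

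First, we show that $e$ modifies only $A_3$ and fresh temporaries, so that $A_1$, $A_2$, $\overline{u}$, $\overline{t}$ are framed through $e$. This again uses the frame rule.

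Second, we use the axiom that $\cn{concat}(\cn{filterN}(A,b),\cn{filterN}(A,\neg b))$ equals $A$ up to reordering, together with the \cn{upDim} padding, to handle the dimension change from $n$ to $n'$. This part needs care: quantum states are order-insensitive sums, so the equality is up to that equivalence. We would make this precise by interpreting translated predicates modulo permutation of basis-ket indices and checking that every HLIR primitive respects this equivalence.
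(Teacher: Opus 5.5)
Your proposal follows the paper's proof: an induction whose quantum cases are discharged by \Cref{def:mult-mut,def:mult-had,proof:qif,def:mea} and whose sequencing and loop cases become the Hoare rules with a translated invariant. The paper inducts over the \qafny typing derivation, while you induct over the triple derivation, which also lets you treat frame and consequence explicitly; your commutation lemma and your order-insensitive reading of the \cn{concat}/\cn{filterN} step make explicit what the paper leaves implicit or posits as an axiom. One correction: in the measurement case the picked index $\overline{i}$ occurs free in the translated program, so it must stay free in the precondition (implicitly universally quantified, matching the demonic choice of outcome, as in \Cref{def:mea}), not be existentially quantified. The Hoare existential rule applies only to variables not free in $e$, and an existential precondition would certify only some outcome.
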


\begin{proof}\rm 
The proof is based on induction on the \qafny type system and a combination of the proofs for the cases in \Cref{def:mult-mut,def:mult-had,proof:qif,def:mea}.
The proof of the for-loop and a Hoare logic loop are the same except that we need to translate the loop invariant, since the for-loop operation is classical in \qafny.
\end{proof}

The above theorem is the completeness direction: as we translate the program $\mu$, pre-condition $\varphi$, and post-condition $\varphi'$ to a Hoare logic program $e$ and conditions $P$ and $Q$, the triple $\triple{P}{e}{Q}$ is valid in Hoare logic, so nothing \qafny can prove is lost in translation.
Since the translation is guided by the \qafny type system $\sigma \vdash \mu \triangleright \sigma'$, the converse direction, translation soundness, must be shown separately: when a translated $\triple{P}{e}{Q}$ is valid, \qafny derives a triple for $\mu$ with semantically equivalent conditions, so the classical verifier cannot certify a false quantum claim. This is the direction that makes a classical verdict about a quantum program trustworthy.
In the following theorem, $\denote{\varphi}$ is the semantic interpretation of the quantum state $\varphi$.

\begin{theorem}[Translation Soundness]\label{thm:relative-completeness}\rm
Given a \qafny program $\mu$, with the typed translation $\sigma \vdash \mu \triangleright \sigma' \gg e$ where $\sigma \vdash \mu \triangleright \sigma'$, two conditions $\varphi$ and $\varphi'$, such that $\sigma \vdash \varphi \gg P$ and $\sigma' \vdash \varphi' \gg Q$, if $\triple{P}{e}{Q}$, there exists $\varphi_1$ and $\varphi'_1$, such that $\qpule{\sigma}{\varphi_1}{\mu}{\varphi'_1}$, $\denote{\varphi}\equiv\denote{\varphi_1}$, and $\denote{\varphi'}\equiv\denote{\varphi'_1}$.
\end{theorem}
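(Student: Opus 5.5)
Since the translation $\sigma \vdash \mu \triangleright \sigma' \gg e$ is defined by recursion on the \qafny typing derivation, I will argue by induction on that derivation, strengthening the statement so it covers an arbitrary translated postcondition $Q$. The key device is a \emph{decoding} map that sends translated assertions back to \qafny state predicates. For each well-typed $\varphi$ with $\sigma\vdash\varphi\gg P$, each $\tch{n}$ locus is represented by an $n$-dimensional array, read as the linear sum $\ssum{\overline{j}}{}{\cn{amp}(A[\overline{j}])\ket{\cn{vector}(A[\overline{j}])}}$ laid out through $\cn{ind}^{\varphi}$. The first lemma I need is that the translation is injective up to semantics: if $\sigma\vdash\varphi\gg P$, $\sigma\vdash\psi\gg P'$ and $P\Leftrightarrow P'$ in array theory, then $\denote{\varphi}\equiv\denote{\psi}$. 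This follows from \Cref{def:disjointness}, which guarantees that the index maps give a faithful layout, together with the fact that reordering array elements (as in the \cn{filterN}/\cn{concat} axiom) does not change the represented quantum datum. Its converse, that every array state satisfying $P$ decodes to an element of $\denote{\varphi}$, lets me move freely between valid Hoare triples and semantic quantum triples.

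With this in place, I treat each primitive case by reading the lemmas of \Cref{sec:trans} backwards. For a heap mutation $\ssassign{\eta}{}{op}$, \Cref{def:mult-mut} and \Cref{def:mult-had} show that $e$ is (semantically) a single substitution: the weakest precondition of $Q$ is exactly $Q[A\{\ldots\}/A]$. Validity of $\triple{P}{e}{Q}$ is therefore equivalent to $P\Rightarrow Q[A\{\ldots\}/A]$. The substituted array is precisely the translation of $\varphi'[(\eta\sqcupplus\eta')\{\eta\to\denote{op}(\eta)\}/\eta\sqcupplus\eta']$, so I take $\varphi'_1=\varphi'$ and $\varphi_1$ to be that substituted predicate. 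The generalized \qafny rule then derives $\qpule{\sigma}{\varphi_1}{\mu}{\varphi'_1}$. Finally, $\denote{\varphi}\equiv\denote{\varphi_1}$ holds after a consequence step, using the semantic-injectivity lemma together with the fact that $\varphi$ is exactly the precondition whose translation is $P$.

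Measurement works the same way through \Cref{def:mea}, with the picked index $\overline{i}$ treated as universally quantified so that the outcome $x$ ranges over all outcomes. For sequencing, I would rather not split an intermediate Hoare assertion, which might not lie in the image of the translation. Instead I use the weakest precondition of the second translated block, which by the primitive cases \emph{is} a translated predicate, and then apply the induction hypotheses together with \rulelab{P-Seq}. For loops, I decode the translated invariant $P(j)$; since it is built with $\cn{ind}^{\varphi}(x\srange{0}{j})$, it decodes to a \qafny invariant, and \rulelab{P-Loop} applies. The frame rule is handled through \Cref{def:pureness} and \Cref{def:qafnypureness}: the untouched conjunct $R$ decodes to $\varphi_2$.

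I expect the quantum conditional to be the main obstacle. I would cut the translated code of \Cref{proof:qif} at $\bullet$ and $\circ$. The prefix and suffix are compositions of \cn{filterN}, \cn{locateN}, \cn{upDim}, \cn{assignN} and \cn{concat}, which are deterministic and each admit an exact weakest precondition. Validity of the whole triple therefore forces $\triple{\bullet}{e}{\circ}$ once the intermediate assertions are taken as weakest preconditions. I then have to show that these $\bullet$ and $\circ$ assertions are translations of $\varphi[\frz(b,\eta,\eta')/\eta\sqcupplus\eta']$ and of the corresponding $\ufrz$-form, again up to array reordering. Once that is established, the induction hypothesis on $\mu$ and rule \rulelab{P-If} finish the case. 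The delicate point is the reordering: \cn{concat} of the filtered pieces only equals $A$ up to permutation, so the whole argument must be carried out modulo the semantic equivalence $\denote{\cdot}$ rather than syntactic equality. This is exactly why the statement concludes with $\equiv$.
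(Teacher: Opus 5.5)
Your plan (induction on the typing derivation, reading each lemma of \Cref{sec:trans} backwards, then a semantic comparison of assertions) is the same induction the paper sketches. However, the loop case fails as written, and it exposes an assumption that your sequencing and conditional cases also rely on.

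To use \rulelab{P-Loop} with the decoded annotated invariant $P(j)$, you must feed the induction hypothesis a valid translated body triple for $P(j)$. You also need $P\Rightarrow P(a_1)$ and $P(a_2)\Rightarrow Q$. Your hypothesis is only that $\triple{P}{e}{Q}$ is valid. Validity is a semantic property of the loop that ignores the annotation: a wrong or non-inductive invariant leaves $\triple{P}{e}{Q}$ valid. So you have two options:
\begin{itemize}
\item read the hypothesis in verification-condition style, as a Hoare derivation that uses the translated invariants, which changes the theorem's hypothesis and should be said explicitly; or
\item construct an inductive invariant yourself, e.g.\ the exact state after $j-a_1$ iterations, and prove that it is the translation of a well-typed \qafny predicate.
\end{itemize}
The second option is a Cook-style expressiveness property of the \qafny assertion language. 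It is exactly what your sequencing case assumes when it says the weakest precondition of the second block is a translated predicate ``by the primitive cases''. That justification only covers a primitive second block. The property must be built into a strengthened induction hypothesis and proved for every statement form, loops included: for every $\mu$ and every translated $Q$, the weakest precondition of the translated code with respect to $Q$ is equivalent to the translation of some well-typed $\psi$, and \qafny derives $\qpule{\sigma}{\psi}{\mu}{\varphi'}$. Nothing in the paper provides this.

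In the conditional case, $\bullet$ and $\circ$ of \Cref{proof:qif} mention $A_1$, $A_2$ and the reassembled array. These lie outside the inner environment $\{\eta':\tch{n}\}$, so $\bullet$ and $\circ$ are not translations of $\varphi[\frz(b,\eta,\eta')/\eta\sqcupplus\eta']$ as they stand. You need a factorization step that isolates a constraint on $A_3$ alone. This is available because \qafny predicates are equational and, provided the translation of $\mu$ uses fresh temporaries, $e$ does not write $A_1$ or $A_2$.

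The cut before $e$ must also use the strongest postcondition of the prefix, which records $A_1=\cn{filterN}(A,b,\cn{ind}^{\varphi}(\eta))$ and the like. The lemma's $\bullet$ admits states the prefix never produces (e.g.\ $A_1$ containing elements violating $b$), so validity of the whole triple does not yield $\triple{\bullet}{e}{\circ}$. Taking $\bullet$ instead to be the weakest precondition of $e$ followed by the suffix makes that triple trivial, but then identifying it with the $\frz$-form becomes circular.

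Two smaller points:
\begin{itemize}
\item \Cref{def:disjointness} does not give your semantic-injectivity lemma; it only says that sub-locus vectors are sub-vectors.
\item What your primitive cases actually use is monotonicity of decoding, which yields $\denote{\varphi}\subseteq\denote{\varphi_1}$, not $\equiv$. You then need a \qafny consequence rule to derive $\qpule{\sigma}{\varphi}{\mu}{\varphi'}$ directly and set $\varphi_1:=\varphi$. No such rule appears in the fragment the paper presents.
\end{itemize}
The paper's own proof is a short sketch of the same induction and does not address these points either.
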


\begin{proof}\rm 
The proof is based on induction on the \qafny typing rules $\sigma \vdash \mu \triangleright \sigma'$.
Note that the expression $e$ is translated from $\mu$ via fixed translation functions.
If the three entities can construct a valid Hoare triple, $\triple{P}{e}{Q}$, we can find conditions $\varphi_1$ and $\varphi'_1$ to construct a valid \qafny triple $\qpule{\sigma}{\varphi_1}{\mu}{\varphi'_1}$. When we translate $\mu$ to $e$, we translate an individual quantum operation to a composition of HLIR operations.
If we can construct a Hoare triple for the HLIR operations, we can also construct a \qafny triple for $\mu$.

Now, we need to construct the relations between $\varphi_1$ and $\varphi$ and between $\varphi'_1$ and $\varphi'$. When we translate $\varphi$/$\varphi'$ to $P$/$Q$, we restrict their types to $\sigma$/$\sigma'$; these types are based on the locus structures. Given $\triple{P}{e}{Q}$, we can show that the types for $\varphi_1$ and $\varphi'_1$ are also $\sigma$ and $\sigma'$, respectively. With the same program $\mu$, we are able to show that $\varphi_1$/$\varphi'_1$ and $\varphi$/$\varphi'$ are semantically equivalent, as $\denote{\varphi}\equiv\denote{\varphi_1}$ and $\denote{\varphi'}\equiv\denote{\varphi'_1}$.
\end{proof}

Finally, we bound the size of the translation. Sizes are measured in the symbolic array representation used throughout \Cref{sec:trans}: a dimension of length $2^m$ is a single bound expression and assertions quantify over index vectors, so a Hadamard operation on $m$ qubits adds one dimension (\cn{copyN}, \cn{enumN}, \cn{ampN}) rather than $2^m$ elements.

\begin{theorem}[Translation Size]\label{thm:verification-efficiency}\rm
Let $\sigma \vdash \mu \triangleright \sigma' \gg e$, $\sigma \vdash \varphi \gg P$, and $\sigma' \vdash \varphi' \gg Q$, and let $n$ be the largest array dimension (the largest $k$ in an $\tch{k}$ type) occurring in $\sigma$ and $\sigma'$. Then
(i) $e$ contains at most eleven HLIR statements per \qafny statement of $\mu$, so $\slen{e}$ is linear in $\slen{\mu}$;
(ii) $\slen{P}$ and $\slen{Q}$ are polynomial in $\slen{\varphi}+n$ and $\slen{\varphi'}+n$, respectively;
(iii) for each HLIR statement, the Hoare proof-tree skeleton of \Cref{sec:trans} and \Cref{appx:prooftrees} has size polynomial in the size of its pre- and post-conditions and in $n$.
\end{theorem}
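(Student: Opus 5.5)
The plan is structural induction on the translation judgment $\sigma \vdash \mu \triangleright \sigma' \gg e$, which mirrors the \qafny typing derivation. For (i), we read off the statement count for each case directly from the translation schemes in \Cref{sec:trans}. The oracle case (\Cref{def:mult-mut}) emits four statements. The Hadamard/QFT case (\Cref{def:mult-had}) emits four, and the measurement case (\Cref{def:mea}) emits seven. The quantum conditional (\Cref{proof:qif}) is the largest: it emits eleven statements around the translated body $e$. These are the two \cn{filterN}, two length assignments, \cn{locateN}, the length assignment for $\overline{u'}$, two \cn{upDim}, \cn{assignN}, \cn{concat}, and the final assignment. The body statements are charged to the \qafny statements inside $\mu$, not to the conditional.

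Sequencing and for-loops add no HLIR statements beyond those of their subterms, apart from a constant amount of loop scaffolding. The key step is to treat the compositional (blue) operations as \emph{single} HLIR statements, as \Cref{fig:hlir-syntax} does, rather than unfolding them into the essential set. Summing over the syntax tree then gives $\slen{e}\le 11\slen{\mu}$.

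For (ii), we induct on the derivation of $\sigma\vdash\varphi\gg P$ over the $\uplus$-components of $\varphi$. Each component $\eta:\tch{k}\mapsto q$ is translated to a constraint over a $k$-dimensional array with $k\le n$. Its size is governed by three pieces. The universally quantified index vector $\overline{j}$ has length $k$. The bound list $\overline{m}$ is symbolic, so $2^{m}$ is one expression. The element expression is the translation of the basic ket $w$, which is linear in $\slen{w}$. The $Q_{\cn{len}}$ conjunct and the $\cn{ind}^{\varphi}$ maps contribute $O(k+\slen{\eta})$. This gives $O((\slen{\varphi}+n)^2)$ overall.

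For (iii), we inspect each proof skeleton in \Cref{appx:prooftrees}. Every skeleton is a nest of at most $n+1$ while-loops with invariants of the form $Q_{\cn{len}}\wedge\bigwedge j_i\in[0,m_i]\wedge\forall\overline{k}<\overline{j}.\,A'[\overline{k}]=E[\overline{k}]$. Each such invariant has size linear in the pre- and post-condition plus $n$. The skeleton consists of $O(n)$ rule applications, each carrying one such assertion, so its size is $O(n\cdot(\slen{P}+\slen{Q}+n))$. Backward substitution through \Cref{cor:link} replaces $A'$ with a single symbolic update $A'\{\forall\overline{j}.\,\overline{j}\mapsto f(\ldots)\}$, so it adds only a constant-size wrapper per statement.

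The main obstacle is the conditional case of (ii) and (iii). The assertions at $\bullet$ and $\circ$ nest \cn{concat}, \cn{upDim} and symbolic updates inside substitutions. Under nested conditionals, these substitutions could compound multiplicatively. I would handle this by naming each intermediate array $A_1,\dots,A_7$ as a fresh variable with its defining equation conjoined, instead of substituting textually. This keeps each level's contribution additive, so nesting depth contributes linearly and the polynomial bound survives.
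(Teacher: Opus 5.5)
Your proposal follows the paper's proof essentially step for step: the same per-case counts for (i) (four for oracles and for \cn{H}/\cn{QFT}, seven for measurement, eleven plus the body for conditionals, with the blue operations counted as single statements), the same per-locus accounting for (ii) (one $k$-dimensional array, its index quantifier, and the ket expression), and the same one-loop-per-dimension skeleton with $O(n)$ invariant copies for (iii). The extra device in your last paragraph, naming $A_1,\dots,A_7$ instead of substituting, is not needed for the theorem as stated: $P$ and $Q$ come from $\sigma\vdash\varphi\gg P$ and $\sigma'\vdash\varphi'\gg Q$ rather than from substitutions through $e$, and (iii) is measured relative to each HLIR statement's own pre- and post-conditions, so the device would matter only for a global bound on all intermediate assertions of a nested derivation.
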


The theorem bounds the size of the generated verification conditions; it says nothing about the time a classical verifier needs to discharge them.

\begin{proof}\rm
By cases on the translation of each \qafny statement.
For (i), an oracle application translates to four statements (\Cref{def:mult-mut}: \cn{locateN}, \cn{mapN}, \cn{assignN}, and an assignment), a Hadamard or QFT application to four (\Cref{def:mult-had}), a quantum conditional to eleven plus the translation of its body (\Cref{proof:qif}), and a measurement to seven (\Cref{def:mea}); sequencing and for-loops add no statements beyond those of their sub-statements (\Cref{sec:transclassical}).
For (ii), the translation of a state predicate maps each locus $\eta : \tch{k} \mapsto q$ to one $k$-dimensional array, a quantifier over its index vector, and the basis-ket expression of $q$, whose size is linear in $\slen{q}$ plus $k \le n$; each predicate transformer in \Cref{sec:trans} substitutes one array expression of that size per occurrence of the array variable, so the translated assertions remain polynomial.
For (iii), the skeleton of each primitive (\Cref{def:locate} and \Cref{appx:prooftrees}) consists of one loop per dimension whose invariant repeats the post-condition with a bounded index prefix, that is, $O(n)$ copies of an assertion of polynomial size; \cn{copyN}, \cn{enumN}, \cn{ampN}, \cn{locateN}, and \cn{upDim} are instances of \cn{mapN} or \cn{assignN}, \cn{sqrtSum} is a classical sum, and the conditional and measurement skeletons compose a constant number of such steps.
\end{proof}
 \section{Related Work}
\label{sec:related}

We discuss the related work below.

\noindent\textbf{\textit{Quantum Proof Systems.}}
Quantum proof systems include \qafny \cite{li2024qafnyquantumprogramverifier,10.1145/3763157}, \qbricks \cite{qbricks} based on the path-sum method \cite{Amy_2019,10.1145/3808314},
Quantum Hoare Logic (QHL) \cite{qhoare,qhoreusage,10.1145/3456877,10.1007/s00165-018-0465-3,10.1145/3808273,10.1145/3808284}, quantum separation logic (QSL) \cite{qseplocal,qsepa}, quantum relational logic
\cite{relationlogic,10.1145/3290346}, and probabilistic Hoare logic for quantum programs \cite{10.1007/978-3-642-10622-4_7}.
These proof systems are developed based on quantum program semantics and states, such as interpreting superoperators \cite{qhoare} or Dirac notation \cite{qbricks} as program-state syntax, whereas our work demonstrates the translation of a quantum proof system to the classical Hoare logic system.

The key difference between the quantum proof systems and our work is that we develop translation completeness and soundness theorems to show that quantum verification can be effectively done in Hoare logic, i.e., we compile proof rules in a quantum proof system to Hoare logic directly, and show the soundness and completeness based on the compilation. Moreover, the compilation is not specific to \qafny; we illustrate it for QHL as well.
As mentioned in \Cref{sec:frame}, there are also quantum separation logic proof systems \cite{qsepa,qseplocal} that view tensor products as a kind of separation conjunction, and we show that these separation conjunctions are pure.

\noindent\textbf{\textit{Qafny in Dafny.}}
\qafnyc \cite{10.1145/3763157} compiles \qafny verification conditions into Dafny sequences and verifies 37 quantum programs with the Dafny verifier, but proves nothing about the compilation itself. Our translation targets an idealized Hoare logic with array theory, of which Dafny is one instance, and \Cref{thm:compile-soundness,thm:relative-completeness} are what justify trusting a Dafny verdict about a quantum program: nothing \qafny can prove is lost, and nothing false is certified. Beyond that foundation, this paper contains results that \qafnyc does not: the pureness of tensor-based separating conjunction (\Cref{sec:frame}), the representation of an $\tch{n}$ state as an $n$-dimensional array with the index map $\cn{ind}$ (\Cref{sec:semantics}), the filter-and-renormalize treatment of measurement (\Cref{sec:meatrans}), and the size bound (\Cref{thm:verification-efficiency}).

\noindent\textbf{\textit{Symbolic Execution of Quantum Programs.}}
\citet{10.1145/3656419} and \citet{10.1145/3808314} execute quantum programs symbolically, the former with symbolic stabilizer states for error-correction programs and the latter with hybrid path-sums for hybrid quantum programs, and decide properties of the symbolic state of a given program. These are engines; our contribution is meta-theoretic: a translation between two proof systems with guarantees in both directions (\Cref{thm:compile-soundness,thm:relative-completeness}), with the completeness direction mechanized in Rocq, together with the pureness result of \Cref{sec:frame}. We share the engine-level view that superposition should be represented symbolically. What the translation adds, and what a symbolic execution engine does not imply, is that the classical Hoare rules for arrays already suffice, and why: separating conjunction is pure, dynamic locus reordering is handled by an index map, and measurement is filter-and-renormalize.

\noindent\textbf{\textit{Classical Proof Systems.}}
We are informed by Hoare and separation logic, as articulated in the classic paper \cite{Gordon12backgroundreading,separationlogic}, and other papers as well
\cite{10.1145/3453483.3454087,arxiv.1609.00919,10.1007/978-3-319-89960-2_13,10.1007/978-3-319-89960-2_2,nat-proof-fun,10.1145/3453483.3454036}.
Primarily, we show a compilation from \qafny to Hoare logic with array operators \cite{Gordon12backgroundreading}.

 \section{Conclusion, Future Work, and Limitations}\label{sec:limit}

We have shown that verifying a quantum program need be no harder than verifying a classical program, e.g., a parallel program: the \qafny proof system translates into Hoare logic with array theory completely and soundly, and the completeness direction is mechanized in Rocq. Three observations make the embedding tractable. First, \qafny's separation conjunctions over disjoint loci are pure and reduce to ordinary conjunctions. Second, an $\tch{n}$-typed quantum state admits a faithful representation as an $n$-dimensional array of basis-kets. Third, the semantics of quantum operations, applied to that representation, factor through a small set of array-aggregate primitives, packaged here as the Hoare-logic intermediate representation HLIR. We also bound the size of the translation: in the symbolic array representation, the generated verification conditions are polynomial in the size of the \qafny triple and the array dimension.

The embedding is constructive: every \qafny proof rule maps to an explicit HLIR derivation, and every HLIR primitive is realized by a proof-tree skeleton in Hoare logic with array theory. Quantum proof obligations therefore become deterministic and can be discharged by classical automated verifiers (e.g. Dafny, SMT-based separation logic, natural-proof engines) with no quantum-specific infrastructure. Future work includes extending the embedding to richer quantum types (such as oracle and Hadamard variants), connecting HLIR to a concrete automated back-end for end-to-end verified quantum software, and investigating whether the same determinization technique generalizes to other hybrid classical/quantum proof systems.

Our compilation only shows that quantum program verification is no harder than classical program verification, i.e., we do not claim that all quantum program verification is easy, as many quantum programs based on statistical methods, such as quantum machine learning \cite{Biamonte2017}, are hard to verify.
However, we believe the same holds for classical program verification, as many classical AI and machine-learning-based programs are also hard to verify.

\bibliography{reference}

\newpage
\appendix
\section{Proof-Tree Derivations for the Translation Lemmas}\label{appx:prooftrees}

This appendix\footnote{We plan to upload a complete technical report to arXiv in lieu of appendices.} collects the Hoare logic proof-tree derivations for the lemmas in \Cref{sec:trans}; the main text keeps each lemma, its translated triple, and the reduction step, and points here for the derivation. In every derivation, the right-hand column names the Hoare rule applied at each step.

\subsection{Proof Tree for \cn{mapN}}\label{appx:mapN}

We continue the proof of \Cref{def:locate} (\Cref{sec:oracletrans}), showing the validity of the triple for $\cassign{A'}{\cn{mapN}(A,\overline{m},f)}$ obtained there via \Cref{cor:link}.

We expand $\overline{m}$ to $(m_0,...,m_{n\sminus 1})$.
The proof requires deriving a Hoare logic proof tree. We demonstrate the simplified proof sequences with loop invariants at each level. 
The right-hand side indicates the Hoare rules to apply.

{\small
\[
\begin{array}{ll}
\textcolor{spec}{\bil Q_{\cn{len}}(A,A',(m_0,...,m_{n\sminus 1})) \bir}\\[0.2em]
\qquad \cassign{j_0}{0};... \cassign{j_{n\sminus 1}}{0}; & \textbf{Assign}\\[0.2em]
\textcolor{spec}{\bil j_0 = 0 \wedge ... \wedge j_{n\sminus 1} = 0 \wedge Q_{\cn{len}}(A,A',(m_0,...,m_{n\sminus 1})) \bir}
\\[0.2em]
\Rightarrow
\textcolor{spec}{\bil \begin{array}{l} Q_{\cn{len}}(A,A',(m_0,...,m_{n\sminus 1})) \wedge j_0 \in[0,m_0] \wedge ...\wedge j_{n\sminus 1} \in[0,m_{n\sminus 1}] \\
\wedge \forall k_0 \in [0,j_0).\,...\forall k_{n\sminus 1} \in[0,j_{n\sminus 1}).\, A'[k_0,...,k_{n\sminus 1}] = f(A[k_0,...,k_{n\sminus 1}])
\end{array}
\bir}
&
\textbf{Pre-Strength}
\\[0.2em]
\qquad \WHILE(j_0 < m_0) 
&
\textbf{While}
\\[0.2em]
\qquad \textcolor{spec}{
\bil 
\begin{array}{l}
Q_{\cn{len}}(A,A',(m_0,...,m_{n\sminus 1})) \wedge
j_0 \in[0,m_0] \wedge ...\wedge j_{n\sminus 1} \in[0,m_{n\sminus 1}] \\
\wedge \forall k_0 \in [0,j_0).\,...\forall k_{n\sminus 1} \in[0,j_{n\sminus 1}).\, A[k_0,...,k_{n\sminus 1}] = f(A[k_0,...,k_{n\sminus 1}])
\end{array}
\bir}
&
\textbf{Invariant}
\\[0.2em]
\qquad \{\\[0.2em]
\qquad\quad 
\{...
\\[0.2em]
\qquad\qquad
\WHILE(j_{n\sminus 1} < m_{n\sminus 1}) \\[0.2em]
\qquad\qquad \textcolor{spec}{
\bil
\begin{array}{l}
Q_{\cn{len}}(A,A',(m_0,...,m_{n\sminus 1})) \wedge
j_0 \in[0,m_0] \wedge ...\wedge j_{n\sminus 1} \in[0,m_{n\sminus 1}] \\
\wedge \forall k_{n\sminus 1} \in[0,j_{n\sminus 1}).\, A[j_0,...,j_{n\sminus 2},k_{n\sminus 1}] = f(A[j_0,...,j_{n\sminus 2},k_{n\sminus 1}])
\end{array}
\bir
}
&
\textbf{Invariant}
\\[0.2em]
\qquad\qquad\{\\[0.2em]
\qquad\qquad\quad
\cassign{A'[k_0,...,k_{n\sminus 1}]}{f(A[k_0,...,k_{n\sminus 1}])};
\\[0.2em]
\qquad\qquad\quad
\cassign{j_{n\sminus 1}}{j_{n\sminus 1}+1}
\\[0.2em]
\qquad\qquad \}\\[0.2em]
\qquad\qquad ...\\[0.2em]
\qquad\quad \}\\[0.2em]
\qquad\quad
\cassign{j_{0}}{j_{0}+1}\\[0.2em]
\qquad \}\\[0.2em]
\textcolor{spec}{
\Big{ \{ }
\begin{array}{l}
Q_{\cn{len}}(A,A',(m_0,...,m_{n\sminus 1})) \wedge
\\
\neg (j_0 < m_0)\wedge ... \wedge \neg (j_{n\sminus 1} < m_{n\sminus 1}) \wedge j_0 \in[0,m_0]  \wedge ...\wedge j_{n\sminus 1} \in[0,m_{n\sminus 1}] 
\\
\qquad
\wedge \forall k_0 \in [0,j_0).\,...\forall k_{n\sminus 1} \in[0,j_{n\sminus 1}).\, A\{k_0,...,k_{n\sminus 1} \mapsto f(A[k_0,...,k_{n\sminus 1}]) \}
\end{array}
\Big{ \} }
}\\[0.5em]
\Rightarrow
\textcolor{spec}{
\bil 
\begin{array}{l}
Q_{\cn{len}}(A,A',(m_0,...,m_{n\sminus 1})) \wedge
\\[0.2em]
\forall k_0 \in [0,m_0).\,...\forall k_{n\sminus 1} \in[0,m_{n\sminus 1}).\, A\{k_0,...,k_{n\sminus 1} \mapsto f(A[k_0,...,k_{n\sminus 1}])\}
\end{array}
\bir
}
&
\textbf{Post-Weakening}
\\[0.2em]
\end{array}
\]
}

\subsection{Proof Tree for \cn{assignN}}\label{sec:assignproof}

We demonstrate the proof-tree derivation for \Cref{def:gassign} (\cn{assignN}) below, for the instance $f = \lambda u\,v.\cn{amp}(v)$ in which the amplitude is taken from $A_1$; the general case replaces $\cn{amp}(A_1[k_0,...,k_{n\sminus 1}])$ by $f(A[k_0,...,k_{n\sminus 1}],A_1[k_0,...,k_{n\sminus 1}])$ throughout.

\begin{proof}\label{appx:gassignProof}\rm 
The proof requires a derivation of a Hoare logic proof tree, having a simplified proof sequence as follows. 
Mainly, we implement the \cn{assignN} function as the program with assertions below. Here, $\cn{up}^b$ inserts $A_1$'s bits into $A$'s vector at the indices $\cn{ind}^{\varphi}(\eta)$ (via \cn{merge}), and its proof is given in the Rocq mechanization.

{\small
\[
\begin{array}{ll}
\textcolor{spec}{\bil Q_{\cn{len}}(A,A_1,A_2,(m_0,...,m_{n\sminus 1})) \bir}\\[0.2em]
\qquad \cassign{j_0}{0};... \cassign{j_{n\sminus 1}}{0}; & \textbf{Assign}\\[0.2em]
\textcolor{spec}{\bil Q_{\cn{len}}(A,A_1,A_2,(m_0,...,m_{n\sminus 1})) \wedge j_0 = 0 \wedge ... \wedge j_{n\sminus 1} = 0  \bir}
\\[0.2em]
\Rightarrow
\textcolor{spec}{
\Big{ \{ } 
\begin{array}{l} Q_{\cn{len}}(A,A_1,A_2,(m_0,...,m_{n\sminus 1}))\wedge j_0 \in[0,m_0] \wedge ...\wedge j_{n\sminus 1} \in[0,m_{n\sminus 1}] \\
\wedge
\forall k_0 \in [0,j_0).\,...\forall k_{n\sminus 1} \in[0,j_{n\sminus 1}).\, A_2[k_0,...,k_{n\sminus 1}]= 
\cn{up}^a(\cn{up}^b(
\\[0.2em]
\quad
A[k_0,...,k_{n\sminus 1}],
\cn{vector}(A_1[k_0,...,k_{n\sminus 1}]),\cn{ind}^{\varphi}(\eta)),\cn{amp}(A_1[k_0,...,k_{n\sminus 1}]))
\end{array}
\Big{ \} }
}
&
\textbf{Pre-Strength}
\\[0.2em]
\qquad \WHILE(j_0 < m_0) 
&
\textbf{While}
\\[0.2em]
\qquad
\textcolor{spec}{
\Big{ \{ } 
\begin{array}{l} Q_{\cn{len}}(A,A_1,A_2,(m_0,...,m_{n\sminus 1}))\wedge j_0 \in[0,m_0] \wedge ...\wedge j_{n\sminus 1} \in[0,m_{n\sminus 1}] \\
\wedge
\forall k_0 \in [0,j_0).\,...\forall k_{n\sminus 1} \in[0,j_{n\sminus 1}).\, A_2[k_0,...,k_{n\sminus 1}]= 
\cn{up}^a(\cn{up}^b(
\\[0.2em]
\quad
A[k_0,...,k_{n\sminus 1}],
\cn{vector}(A_1[k_0,...,k_{n\sminus 1}]),\cn{ind}^{\varphi}(\eta)),\cn{amp}(A_1[k_0,...,k_{n\sminus 1}]))
\end{array}
\Big{ \} }
}
&
\textbf{Invariant}
\\[0.2em]
\qquad \{\\[0.2em]
\qquad\quad 
\{...
\\[0.2em]
\qquad\qquad
\WHILE(j_{n\sminus 1} < m_{n\sminus 1}) \\[0.2em]
\qquad\qquad \textcolor{spec}{
\Big{ \{ } 
\begin{array}{l}
 Q_{\cn{len}}(A,A_1,A_2,(m_0,...,m_{n\sminus 1}))\wedge
j_0 \in[0,m_0] \wedge ...\wedge j_{n\sminus 1} \in[0,m_{n\sminus 1}] \\
\wedge
\forall k_{n\sminus 1} \in[0,j_{n\sminus 1}).\, A_2[j_0,...,j_{n\sminus 2},k_{n\sminus 1}]= \cn{up}^a(\cn{up}^b(A[{j_0,...,j_{n\sminus 2},k_{n\sminus 1}}],
\\[0.2em]
\qquad
\cn{vector}(A_1[{j_0,...,j_{n\sminus 2},k_{n\sminus 1}}]),\cn{ind}^{\varphi}(\eta)),\cn{amp}(A_1[{j_0,...,j_{n\sminus 2},k_{n\sminus 1}}]))
\end{array}
\Big{ \} }
}
&
\textbf{Invariant}
\\[0.2em]
\qquad\qquad\{\\[0.2em]
\qquad\qquad\quad
\cassign{t}{\cn{up}^b(A[k_0,...,k_{n\sminus 1}],\cn{vector}(A_1[k_0,...,k_{n\sminus 1}]),\cn{ind}^{\varphi}(\eta))};
\\[0.2em]
\qquad\qquad\quad
\cassign{A_2[k_0,...,k_{n\sminus 1}]}{\cn{up}^a(t,\cn{amp}(A_1[k_0,...,k_{n\sminus 1}]))};
\\[0.2em]
\qquad\qquad\quad
\cassign{j_{n\sminus 1}}{j_{n\sminus 1}+1};
\\[0.2em]
\qquad\qquad \}\\[0.2em]
\qquad\qquad ...\\[0.2em]
\qquad\quad \}\\[0.2em]
\qquad\quad
\cassign{j_{0}}{j_{0}+1};\\[0.2em]
\qquad \}\\[0.2em]
\textcolor{spec}{
\Big{ \{ } 
\begin{array}{l}
 Q_{\cn{len}}(A,A_1,A_2,(m_0,...,m_{n\sminus 1}))\wedge
 \\
\neg (j_0 < m_0)\wedge ... \wedge \neg (j_{n\sminus 1} < m_{n\sminus 1}) \wedge j_0 \in[0,m_0]  \wedge ...\wedge j_{n\sminus 1} \in[0,m_{n\sminus 1}] 
\\
\wedge
\forall k_0 \in [0,j_0).\,...\forall k_{n\sminus 1} \in[0,j_{n\sminus 1}).\, A_2[k_0,...,k_{n\sminus 1}]= 
\cn{up}^a(\cn{up}^b(
\\[0.2em]
\quad
A[k_0,...,k_{n\sminus 1}],
\cn{vector}(A_1[k_0,...,k_{n\sminus 1}]),\cn{ind}^{\varphi}(\eta)),\cn{amp}(A_1[k_0,...,k_{n\sminus 1}]))
\end{array}
\Big{ \} }
}\\[2.5em]
\Rightarrow
\textcolor{spec}{\Big{ \{ } 
\begin{array}{l}
 Q_{\cn{len}}(A,A_1,A_2,(m_0,...,m_{n\sminus 1}))\wedge
\\[0.2em]
\forall k_0 \in [0,m_0).\,...\forall k_{n\sminus 1} \in[0,m_{n\sminus 1}).\, A_2[k_0,...,k_{n\sminus 1}]= 
\cn{up}^a(\cn{up}^b(
\\[0.2em]
\quad
A[k_0,...,k_{n\sminus 1}],
\cn{vector}(A_1[k_0,...,k_{n\sminus 1}]),\cn{ind}^{\varphi}(\eta)),\cn{amp}(A_1[k_0,...,k_{n\sminus 1}]))
\end{array}
\Big{ \} } }
&
\textbf{Post-Weakening}
\\[0.2em]
\end{array}
\]
}
\end{proof}

\subsection{Proof Tree for \cn{enumN}}\label{appx:enumN}

We continue the proof of \Cref{def:enumN} (\Cref{sec:hadtrans}), showing the validity of the triple obtained there via \Cref{cor:link}.

The proof tree is shown below.
The pre-condition strengthening rules utilize the $Q_{\cn{len}}$ axioms for $A_1$, $A_a$, $A_b$, $A_2$, $A_a\{E_1\}$, $A_b\{E_2\}$, and $A_2\{E_3\}$, as they all have the same bounds ($\overline{m},2^{\slen{\eta}}$).

{\small
\[\hspace*{-0.5em}
\begin{array}{l}
\cn{let }\\
\;\;\textcolor{spec}{E_1} := 
\textcolor{spec}{
\forall \overline{j}\in\cn{range}(\overline{m}).\forall k\in[0,2^{\slen{\eta}}). (\overline{j},k) \mapsto\cn{up}^{b}(A_1[\overline{j},k],\cn{vector}(A_1[\overline{j},k])[\cn{ind}^{\varphi}(\eta)],\cn{ind}^{\varphi}(\eta))}
\\[0.2em]
\;\;\textcolor{spec}{E_2} := 
\textcolor{spec}{
\forall \overline{j}\in\cn{range}(\overline{m}).\forall k\in[0,2^{\slen{\eta}}). (\overline{j},k) \mapsto\cn{up}^{b}(A_a[\overline{j},k],k,\cn{ind}^{\varphi}(\eta))}
\\[0.2em]
\;\;\textcolor{spec}{E_3} := 
\textcolor{spec}{\forall \overline{j}\in\cn{range}(\overline{m}).\forall k\in[0,2^{\slen{\eta}}). (\overline{j},k) \mapsto \cn{up}^a(\cn{up}^b(A_1[\overline{j},k],\cn{vector}(A_b[\overline{j},k]),\cn{ind}^{\varphi}(\eta)),\cn{amp}(A_b[\overline{j},k]))}
\end{array}
\]
}

{\small
\[\hspace{-0.8em}
\begin{array}{ll}
\textcolor{spec}{\bil 
Q_{\cn{len}}(A_1,\overline{m},2^{\slen{\eta}})
\bir }
\\[0.2em]
\Rightarrow
    \textcolor{spec}{
\bil 
\begin{array}{l}
\forall \overline{j} \in \cn{range}(\overline{m}).\forall k\in[0,2^{\slen{\eta}}).
\overline{j},k\mapsto \cn{up}^b(A_1[\overline{j},k],k,\cn{ind}^{\varphi}(\eta))
\\[0.2em]
\;\;
= \cn{up}^{b}(A_1[\overline{j},k],k, \cn{ind}^{\varphi}(\eta))
\wedge Q_{\cn{len}}(A_1,A_a\{E_1\},\overline{m},2^{\slen{\eta}})
\end{array}
\bir
}
&
\textbf{Pre-Strength}
\\[0.3em]
\equiv
    \textcolor{auto}{
\Big{ \{ }
\begin{array}{l}
\forall \overline{j} \in \cn{range}(\overline{m}).\forall k\in[0,2^{\slen{\eta}}).
A_2\{E_3[A_b\{E_2[A_a\{E_1\}/A_a]\}/A_b]\}[\overline{j},k]
\\[0.1em]
\;\;
= \cn{up}^{b}(A_1[\overline{j},k],k,
 \cn{ind}^{\varphi}(\eta))
\wedge Q_{\cn{len}}(A_1,A_a\{E_1\},(\overline{m},2^{\slen{\eta}})) 
\end{array}
\Big{ \} } }
\\[0.3em]
\quad \cassign{A_a}{\cn{locateN}(A_1,\overline{m},\cn{ind}^{\varphi}(\eta))};
&
\textbf{LocateN}
\\[0.2em]
    \textcolor{spec}{
\Big{ \{ }
\begin{array}{l}
\forall \overline{j} \in \cn{range}(\overline{m}).\forall k\in[0,2^{\slen{\eta}}).
A_2\{E_3[A_b\{E_2\}/A_b]\}[\overline{j},k]
= \cn{up}^{b}(A_1[\overline{j},k],k,
\\[0.2em]
\;\;
 \cn{ind}^{\varphi}(\eta))
\wedge Q_{\cn{len}}(A_1,A_a,(\overline{m},2^{\slen{\eta}})) 
\end{array}
\Big{ \} } }
\\[0.2em]
\Rightarrow
    \textcolor{spec}{
\Big{ \{ }
\begin{array}{l}
\forall \overline{j} \in \cn{range}(\overline{m}).\forall k\in[0,2^{\slen{\eta}}).
A_2\{E_3[A_b\{E_2\}/A_b]\}[\overline{j},k]
= \cn{up}^{b}(A_1[\overline{j},k],k,
\\[0.2em]
\;\;
 \cn{ind}^{\varphi}(\eta))
\wedge Q_{\cn{len}}(A_1,
A_2\{E_3[A_b\{E_2\}/A_b]\},A_a,A_b\{E_2\},(\overline{m},2^{\slen{\eta}})) 
\end{array}
\Big{ \} } }
&
\textbf{Pre-Strength}
\\[0.2em]
\quad
    \cassign{A_b}{\cn{mapN}(A_a,\overline{m},\lambda (x,y) . (x,\cn{enum}(2^{\slen{\eta}})))};
    &
    \textbf{MapN}
\\[0.2em]
    \textcolor{spec}{
\Big{ \{ }
\begin{array}{l}
\forall \overline{j} \in \cn{range}(\overline{m}).\forall k\in[0,2^{\slen{\eta}}).
A_2\{E_3\}[\overline{j},k]
= \cn{up}^{b}(A_1[\overline{j},k],k, \cn{ind}^{\varphi}(\eta))
\\[0.2em]
\;\;
\wedge Q_{\cn{len}}(A_1,
A_2\{E_3\},A_a,A_b,(\overline{m},2^{\slen{\eta}})) 
\end{array}
\Big{ \} } }
\\[0.2em]
\quad\cassign{A_2}{\cn{assignN}(A_1,A_b,\overline{m},\cn{ind}^{\varphi}(\eta),\lambda u\,v.\cn{amp}(u))};
&
\textbf{AssignN}
\\[0.2em]
\textcolor{spec}{\bil 
\begin{array}{l}
\forall \overline{j} \in \cn{range}(\overline{m}).\forall k\in[0,2^{\slen{\eta}}). A_2[\overline{j},k]=\cn{up}^{b}(A_1[\overline{j},k],k,\cn{ind}^{\varphi}(\eta)) 
\\[0.2em]
\wedge Q_{\cn{len}}(A_1,A_2,A_a,A_b,(\overline{m},2^{\slen{\eta}})) 
\end{array}
\bir}
\\[0.3em]
\Rightarrow
\textcolor{spec}{\bil 
\begin{array}{l}
\forall \overline{j} \in \cn{range}(\overline{m}).\forall k\in[0,2^{\slen{\eta}}). A_2[\overline{j},k]=\cn{up}^{b}(A_1[\overline{j},k],k,\cn{ind}^{\varphi}(\eta)) 
\\[0.1em]
\wedge Q_{\cn{len}}(A_1,A_2,(\overline{m},2^{\slen{\eta}})) 
\end{array}
\bir}
&
\textbf{Post-Weakening}
\end{array}
\]
}

There is a single $\cn{enum}$ function above. We show the \cn{enum} function proof below.

\begin{lemma}[A Single Enumeration Function]\label[lemma]{def:singleEnum}\rm
Given a length $n$, we define a single enumeration function $\cn{enum}(n)$ to generate an array $A$ as the following valid Hoare triple.

{\small
\[
\begin{array}{l}
\triple{0\le n}{\cassign{A}{\cn{enum}(n)}}{\forall k\in[0,n).\,A[k] = k \wedge \slen{A} = n}
\end{array}
\]
}
\end{lemma}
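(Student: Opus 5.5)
I will prove the triple for \(\cn{enum}(n)\) by fixing a one-loop implementation and deriving a standard Hoare proof tree, in the style of \Cref{def:locate} but with a single dimension and a function that depends on the index rather than on an input array. The implementation is
\[\cassign{A}{\cn{alloc}(n)};\ \cassign{i}{0};\ \WHILE(i<n)\{\cassign{A[i]}{i};\ \cassign{i}{i+1}\}.\]
Here \(\cn{alloc}(n)\) creates an array of length \(n\); I assume it satisfies \(\slen{A}=n\), just as the other lemmas assume a pre-initialised result array. The precondition \(0\le n\) is exactly what makes this allocation meaningful.

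The loop invariant is
\[I \triangleq 0\le i\le n \wedge \slen{A}=n \wedge \forall k\in[0,i).\,A[k]=k .\]
The derivation then has four steps.

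\textbf{Establishing \(I\).} After the allocation and \(\cassign{i}{0}\), the assertion \(0\le n\) gives \(0\le 0\le n\). The quantifier ranges over the empty interval \([0,0)\), so it holds vacuously, and \(I\) follows by the \textbf{Assign} rule and \textbf{Pre-Strength}.

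\textbf{Preservation of \(I\).} Working backwards from \(I\) through \(\cassign{i}{i+1}\) and then through the array-assignment rule from \Cref{sec:background}, I obtain the required precondition
\[0\le i+1\le n \wedge \slen{A\{i\leftarrow i\}}=n \wedge \forall k\in[0,i+1).\,A\{i\leftarrow i\}[k]=k .\]
This must follow from \(I\wedge i<n\). The bound \(i+1\le n\) comes from \(i<n\). The length is unchanged because an in-bounds update does not change length, using \(0\le i<n\). For the quantifier, I split \([0,i+1)\) into \([0,i)\) and \(\{i\}\). On \([0,i)\) the update does not touch index \(k\), so the read-over-write axiom (case \(k\neq i\)) reduces the claim to \(I\). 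At \(k=i\) the read-over-write axiom (case \(k=i\)) gives the value \(i\).

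\textbf{Exit.} At loop exit I have \(I\wedge\neg(i<n)\), so \(i=n\). The invariant then specialises to \(\forall k\in[0,n).\,A[k]=k\wedge\slen{A}=n\), and \textbf{Post-Weakening} gives the stated postcondition.

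The main obstacle is the preservation step, specifically the case split on \(k=i\) versus \(k\neq i\). This is where the array-theory read-over-write axioms, together with the bound \(i<n\) (which keeps the update in range), must be invoked explicitly. Everything else is mechanical, and I would discharge it in Rocq exactly as for \cn{mapN}.
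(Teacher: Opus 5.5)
Your proposal is correct and follows essentially the same route as the paper: allocate an array of length $n$ (the paper's $\cn{init}(n)$), then run a single loop with invariant $j\in[0,n]\wedge\forall k\in[0,j).\,A[k]=k\wedge\slen{A}=n$ and exit with $j=n$. Your version is slightly more careful. It includes the counter initialization $\cassign{i}{0}$, which the paper's derivation elides when it strengthens $\{0\le n\wedge\slen{A}=n\}$ directly to the invariant, and it makes explicit the read-over-write case split in the preservation step.
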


\begin{proof}\rm
The proof of the valid Hoare triple is given as follows.

{\small
\[
\begin{array}{l}
\textcolor{spec}{\{0 \le n\}}\\[0.2em]
\qquad \cassign{A}{\cn{init}(n)};\\[0.2em]
\textcolor{spec}{\{0 \le n\wedge \slen{A} = n\}}\\[0.2em]
\Rightarrow
\textcolor{spec}{\{j\in[0,n]\wedge \forall k\in[0,j).\, A[k] = k\wedge \slen{A} = n\}}\\[0.2em]
\qquad \WHILE(j < n) \\[0.2em]
\qquad \textcolor{spec}{\{j\in[0,n]\wedge \forall k\in[0,j).\, A[k] = k\wedge \slen{A} = n \}}\\[0.2em]
\qquad \{\\[0.2em]
\qquad\quad \cassign{A[j]}{j};\\[0.2em]
\qquad\quad \cassign{j}{j+1};\\[0.2em]
\qquad \}\\[0.2em]
\textcolor{spec}{\{\neg (j < n)\wedge j\in[0,n]\wedge \forall k\in[0,j).\, A[k] = k\wedge \slen{A} = n \}}\\[0.2em]
\Rightarrow
\textcolor{spec}{\{\forall k\in[0,n).\, A[k] = k\wedge \slen{A} = n\}}\\[0.2em]
\end{array}
\]
}
\end{proof}

We show the proof for a single \cn{ampMult} function below.

\begin{lemma}[A Single Amplitude Multiplication Function]\label[lemma]{def:singleamp}\rm
Given two bitstring arrays $A_1$ and $A_2$, a collection of indices $B$, and a complex number $\alpha$, we define a single amplitude multiplication function to compute a complex number $\alpha'$ below. $\cdot$ is a dot product.

{\small
\[
\begin{array}{l}
\textcolor{spec}{\{\slen{B} \le \slen{A_1} \wedge \slen{B} \le \slen{A_2} \wedge \forall k \in[0,\slen{B}) .\, B[k]\in[0,\slen{A_1}) \wedge B[k] \in [0,\slen{A_2})\}}
\\[0.2em]
\qquad{\cassign{\alpha'}{\cn{ampMult}(A_1,A_2,\alpha,B)}}
\\[0.2em]
\textcolor{spec}{\{\alpha' = \alpha * \eexp{(\cn{group}(|A_1|,A_1,B)\cdot \cn{group}(|A_2|,A_2,B))}\}}
\end{array}
\]
}
\end{lemma}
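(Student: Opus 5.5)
The plan is to realize $\cn{ampMult}(A_1,A_2,\alpha,B)$ as a single loop over the index collection $B$ that accumulates the dot product, followed by one phase multiplication. The proof then follows the style of \Cref{def:singleEnum}: a \textbf{While} derivation with one invariant, then post-weakening. Concretely, the program is
\[
\cassign{s}{0};\ \cassign{j}{0};\ \WHILE(j<\slen{B})\{\cassign{s}{s+A_1[B[j]]\cdot A_2[B[j]]};\ \cassign{j}{j+1}\};\ \cassign{\alpha'}{\alpha*\eexp{s}}.
\]
Here I read $\cn{group}(\slen{A},A,B)$ as the bitstring obtained by restricting $A$ to the positions listed in $B$, in the order of $B$. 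Under that reading, the dot product of the two grouped strings is $\sum_{k\in[0,\slen{B})} A_1[B[k]]\cdot A_2[B[k]]$. If $\cn{group}$ is instead defined as interpreting the selected bits as a number, then "dot product" still means bitwise products summed. I would first state this unfolding as a small auxiliary equation, proved by induction on $\slen{B}$, so the main derivation stays purely Hoare-logical.

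The invariant is
\[
I \triangleq j\in[0,\slen{B}] \wedge s=\sum_{k\in[0,j)}A_1[B[k]]\cdot A_2[B[k]] \wedge \Phi,
\]
where $\Phi$ is the precondition: $\slen{B}\le\slen{A_1}$, $\slen{B}\le\slen{A_2}$, and every $B[k]$ lies in range. $\Phi$ is not modified by the loop, because the loop assigns only $s$ and $j$. It is therefore carried along by the frame argument of \Cref{sec:frame}, i.e.\ $FV(\Phi)\cap\{s,j\}=\emptyset$.

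The derivation has four steps. First, the initial assignments establish $I$ with $j=0$ and an empty sum. Second, preservation follows from the assignment rule: under $j<\slen{B}$, the bounds in $\Phi$ make $A_1[B[j]]$ and $A_2[B[j]]$ well defined, and the sum over $[0,j+1)$ splits off its last term. Third, at exit, $\neg(j<\slen{B})$ together with $j\le\slen{B}$ gives $j=\slen{B}$, so $s$ equals the full dot product. Fourth, the final assignment, pushed back by the assignment rule, yields $\alpha'=\alpha*\eexp{s}$, and rewriting with the auxiliary $\cn{group}$ equation gives the stated postcondition.

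I expect the main obstacle to be the auxiliary identity relating $\cn{group}(\slen{A_1},A_1,B)\cdot\cn{group}(\slen{A_2},A_2,B)$ to the indexed sum. Its difficulty depends on how $\cn{group}$ is defined, and in particular on whether ordering and duplicate entries in $B$ matter. I would handle it by defining $\cn{group}$ recursively on $B$ and proving the identity by induction on $\slen{B}$, using the range hypotheses in the precondition to justify each index access. The rest is routine and mirrors \Cref{def:singleEnum}.
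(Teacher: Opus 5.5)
Your proof is correct, but it implements $\cn{ampMult}$ differently from the paper.

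\textbf{The paper's route.} The paper uses no loop. It implements $\cn{ampMult}$ as the straight-line program $\cassign{A'_1}{\cn{group}(\slen{A_1},A_1,B)};\ \cassign{A'_2}{\cn{group}(\slen{A_2},A_2,B)};\ \cassign{\alpha'}{\alpha\cdot\eexp{(A'_1\cdot A'_2)}}$. Here $\cn{group}$ is a primitive whose postcondition is $\forall k\in[0,\slen{B}).\,A'_i[k]=A_i[B[k]]$, and the dot product is a primitive expression. The derivation is three assignment steps, with the precondition carried along in every assertion. A final post-weakening unfolds the dot product to $\sum_{k=0}^{\slen{B}-1}A'_1[k]\cdot A'_2[k]$.

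\textbf{Your route.} Your accumulating loop never materializes the grouped strings. The implementation therefore needs only array reads, arithmetic and a \texttt{while} loop, in the style of \Cref{def:singleEnum}; $\cn{group}$ and the dot product appear only in the specification. The price is a loop invariant and an explicit bridging identity, which the paper absorbs into the specification of $\cn{group}$ and its final weakening.

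\textbf{Common ground.} Both proofs rest on the same fact: $\cn{group}(\slen{A},A,B)[k]=A[B[k]]$ for $k\in[0,\slen{B})$, with the dot product read as the indexed sum of products. Under this pointwise characterization, your concern about the order of entries in $B$ or duplicates in $B$ disappears. The identity holds term by term and needs no induction beyond the definition of a finite sum.

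\textbf{A minor point.} The appeal to the frame argument of \Cref{sec:frame} is unnecessary. $\Phi$ is already a conjunct of your invariant, and the assignment axiom leaves it unchanged because the loop assigns only $s$ and $j$.
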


\begin{proof}\rm
We implement $\cn{ampMult}(A_1,A_2,\alpha,B)$ to be the program $\cassign{A'_1}{\cn{group}(|A_1|,A_2,B)};\cassign{A'_2}{\cn{group}(|A_2|,A_2,B)};\cassign{\alpha'}{\alpha \cdot \eexp{(A'_1\cdot A'_2)}}$.

{\small
\[
\begin{array}{l}
\textcolor{spec}{\{\slen{B} \le \slen{A_1} \wedge \slen{B} \le \slen{A_2} \wedge \forall k \in[0,\slen{B}) .\, B[k]\in[0,\slen{A_1}) \wedge B[k] \in [0,\slen{A_2})\}}\\[0.2em]
\qquad \cassign{A'_1}{\cn{group}(|A_1|,A_1,B)};\\[0.2em]
\textcolor{spec}{\{\slen{B} \le \slen{A_1} \wedge \slen{B} \le \slen{A_2} \wedge \forall k \in[0,\slen{B}) .\, B[k]\in[0,\slen{A_1}) \wedge B[k] \in [0,\slen{A_2}) \wedge \forall k \in [0,\slen{B}) .\, A'_1[k] = A_1[B[k]]\}}\\[0.2em]
\qquad \cassign{A'_2}{\cn{group}(|A_2|,A_2,B)};\\[0.2em]
\textcolor{spec}{\{\slen{B} \le \slen{A_1} \wedge \slen{B} \le \slen{A_2} \wedge \forall k \in[0,\slen{B}) .\, B[k]\in[0,\slen{A_1}) \wedge B[k] \in [0,\slen{A_2}) }
\\[0.2em]
\qquad\quad\textcolor{spec}{\wedge \forall k \in [0,\slen{B}) .\, A'_1[k] = A_1[B[k]] \wedge  A'_2[k] = A_2[B[k]]\}}\\[0.2em]
\qquad \cassign{\alpha'}{\alpha \cdot \eexp{(A'_1\cdot A'_2)}} \\[0.2em]
\textcolor{spec}{\{\slen{B} \le \slen{A_1} \wedge \slen{B} \le \slen{A_2} \wedge \forall k \in[0,\slen{B}) .\, B[k]\in[0,\slen{A_1}) \wedge B[k] \in [0,\slen{A_2}) }
\\[0.2em]
\qquad\quad\textcolor{spec}{\wedge \forall k \in [0,\slen{B}) .\, A'_1[k] = A_1[B[k]] \wedge  A'_2[k] = A_2[B[k]]
\wedge \alpha' = \alpha \cdot \eexp{(\sum_{k=0}^{|B|-1}\cdot A'_1[k] \cdot A'_2[k])}
\}}\\[0.2em]
\Rightarrow
\textcolor{spec}{\{\alpha' = \alpha * \eexp{(\cn{group}(|A_1|,A_1,B)\cdot \cn{group}(|A_2|,A_2,B))}\}}\\[0.2em]
\end{array}
\]
}
\end{proof}

\subsection{Proof Tree for Quantum Conditionals}\label{appx:qif}

We continue the proof of \Cref{proof:qif} (\Cref{sec:condtrans}), using the array-length axioms and the \cn{concat}/\cn{filterN} axiom stated there.

{\small
\[\hspace*{-0.5em}
\begin{array}{l}
\cn{let }\\
\;\;\textcolor{spec}{E_1} := 
\textcolor{spec}{
\forall \overline{j}\in\cn{range}(\overline{u}). \overline{j} \mapsto\cn{up}^{b}(A_1[\overline{j}],\cn{vector}(A_1[\overline{j}]),\cn{ind}^{\varphi}(\eta))}
\\[0.2em]
\;\;\textcolor{spec}{E_2} := 
\textcolor{spec}{\forall \overline{j}\in \cn{range}(\overline{u'}).\overline{j}\mapsto \cn{up}^a(\cn{up}^b(A_4[\overline{j}],\cn{vector}(A_3[\overline{j}]),\cn{ind}^{\varphi}(\eta')),\cn{amp}(A_3[\overline{j}]))}
\end{array}
\]
}

{\small
\[\hspace*{-0.5em}
\begin{array}{ll}
\textcolor{spec}{\bil P \bir}
\\[0.1em]
\Rightarrow
\textcolor{spec}{
\bil
\begin{array}{l}
P[\cn{concat}(\cn{filterN}(A,\neg b,\cn{ind}^{\varphi}(\eta)),
\cn{filterN}(A,b,\cn{ind}^{\varphi}(\eta)))/A]
\end{array}
\bir
}
&
\textbf{Pre-Strength}
\\[0.2em]
\quad   \cassign{A_1}{\cn{filterN}(A,b,\cn{ind}^{\varphi}(\eta))};\cassign{\overline{u}}{\slen{A_1}} ;
&
\textbf{FilterN, Len}
\\[0.2em]
\textcolor{spec}{
\bil
\begin{array}{l}
P[\cn{concat}(\cn{filterN}(A,\neg b,\cn{ind}^{\varphi}(\eta)),
A_1)/A]
\end{array}
\bir
}
\\[0.2em]
\quad   \cassign{A_2}{\cn{filterN}(A,\neg b,\cn{ind}^{\varphi}(\eta))}; \cassign{\overline{t}}{\slen{A_2}} ;
&
\textbf{FilterN, Len}
\\[0.2em]
\textcolor{spec}{
\bil
\begin{array}{l}
P[\cn{concat}(A_2,A_1)/A]
\end{array}
\bir
}
\\[0.2em]
\Rightarrow
\textcolor{spec}{
\Big{\{}
\begin{array}{l}
P[\cn{concat}(A_2,\overline{t},(\overline{t},\overline{1})),
\cn{upDim}(A_1,\overline{u},\slen{A_3\{E_1\}})\{\forall \overline{j}\in \cn{range}({\slen{A_3\{E_1\}}}).
\\[0.1em]\quad
\overline{j}\mapsto \cn{up}^a(
\cn{up}^b(\cn{upDim}(A_1,\overline{u},\slen{A_3\{E_1\}})[\overline{j}],
\\[0.1em]
\quad\quad\cn{vector}(A_3\{E_1\}[\overline{j}]),\cn{ind}^{\varphi}(\eta')),\cn{amp}(A_3\{E_1\}[\overline{j}])) \})/A]
\end{array}
\Big{\}}
}
&
\textbf{Pre-Strength}
\\[0.2em]
\quad   \cassign{A_3}{\cn{locateN}(A_1,\overline{u},\cn{ind}^{\varphi}(\eta'))};
&
\textbf{LocateN}
\\[0.2em]
\textcolor{spec}{\bil
\begin{array}{l}
P[\cn{concat}(\cn{upDim}(A_2,\overline{t},(\overline{t},\overline{1})),
\cn{upDim}(A_1,\overline{u},\slen{A_3})\{\forall \overline{j}\in \cn{range}(\overline{\slen{A_3}}). \overline{j}\mapsto \cn{up}^a(\cn{up}^b(
\\[0.1em]
\quad
\cn{upDim}(A_1,\overline{u},\slen{A_3})[\overline{j}],\cn{vector}(A_3[\overline{j}]),\cn{ind}^{\varphi}(\eta')),\cn{amp}(A_3[\overline{j}])) \})/A]
\end{array}
\bir}
\\[0.2em]
\quad   e;
&
\textbf{Upper}
\\[0.2em]
\textcolor{spec}{\bil
\begin{array}{l}
Q[\cn{concat}(\cn{upDim}(A_2,\overline{t},(\overline{t},\overline{1})),
\cn{upDim}(A_1,\overline{u},\slen{A_3})\{\forall \overline{j}\in \cn{range}({\slen{A_3}}). \overline{j}\mapsto \cn{up}^a(\cn{up}^b(
\\[0.1em]
\quad
\cn{upDim}(A_1,\overline{u},\slen{A_3})[\overline{j}],\cn{vector}(A_3[\overline{j}]),\cn{ind}^{\varphi}(\eta')),\cn{amp}(A_3[\overline{j}])) \})/A]
\end{array}
\bir}
\\[0.2em]
\Rightarrow
&
\textbf{Pre-Strength}
\\[0.2em]
\textcolor{spec}{\bil
\begin{array}{l}
Q[\cn{concat}(\cn{upDim}(A_2,\overline{t},(\overline{t},\overline{1})),A_6\{E_2[\cn{upDim}(A_1,\overline{u},\overline{u'})/A_4][\slen{A_3}/\overline{u'}]\})/A]
\wedge 
\\[0.1em]
Q_{\cn{len}}(A_6\{E_2[\cn{upDim}(A_1,\overline{u},\overline{u'})/A_4][\slen{A_3}/\overline{u'}]\},\slen{A_3}) \wedge Q_{\cn{len}}(\cn{upDim}(A_2,\overline{t},(\overline{t},\overline{1})),(\overline{t},\overline{1}))
\end{array}
\bir}
\\[0.2em]
\quad \cassign{\overline{u'}}{\slen{A_3}}; \cassign{A_4}{\cn{upDim}(A_1,\overline{u},\overline{u'})};
&
\textbf{Len, UpDim}
\\[0.2em]
\textcolor{spec}{\bil
\begin{array}{l}
Q[\cn{concat}(\cn{upDim}(A_2,\overline{t},(\overline{t},\overline{1})),A_6\{E_2\})/A]
\wedge Q_{\cn{len}}(A_6\{E_2\},\overline{u'})
\\[0.1em]
\wedge Q_{\cn{len}}(\cn{upDim}(A_2,\overline{t},(\overline{t},\overline{1})),(\overline{t},\overline{1}))
\end{array}
\bir}
\\[0.2em]
\quad
    \cassign{A_5}{\cn{upDim}(A_2,\overline{t},(\overline{t},\overline{1}))};
    &
    \textbf{UpDim}
    \\[0.2em]
\textcolor{spec}{\bil
Q[\cn{concat}(A_5,A_6\{E_2\})/A]
\wedge Q_{\cn{len}}(A_6\{E_2\},\overline{u'}) \wedge Q_{\cn{len}}(A_5,(\overline{t},\overline{1}))
\bir}
\\[0.2em]
\quad\cassign{A_6}{\cn{assignN}(A_4,A_3,\overline{u'},\cn{ind}^{\varphi}(\eta'),\lambda v\,l.\cn{amp}(l))};
&
\textbf{AssignN}
\\[0.2em]
\textcolor{spec}{\bil 
Q[\cn{concat}(A_5,A_6)/A] \wedge Q_{\cn{len}}(\cn{concat}(A_5,A_6),\overline{u'}+(\overline{t},\overline{1}))
\bir}
\\[0.2em]
\quad\cassign{A_7}{\cn{concat}(A_5,A_6)};
&
\textbf{Concat}
\\[0.2em]
\textcolor{spec}{\bil 
Q[A_7/A] \wedge Q_{\cn{len}}(A_7,\overline{u'}+(\overline{t},\overline{1})) 
\bir}
\\[0.2em]
\quad\cassign{A}{A_7}
&
\textbf{Assign}
\\[0.2em]
\textcolor{spec}{\bil 
Q
\wedge Q_{\cn{len}}(A,\overline{u'}+(\overline{t},\overline{1})) 
\bir}
\end{array}
\]
}

\subsection{Proof Tree for Quantum Measurement}\label{appx:mea}

We continue the proof of \Cref{def:mea} (\Cref{sec:meatrans}); as there, the $Q_{\cn{len}}$ judgments are omitted.

{\small
\[\hspace*{-0.5em}
\begin{array}{l}
\cn{let }
\textcolor{spec}{E_1} := 
\textcolor{spec}{\cn{filterN}(A,\lambda y.y = x,\cn{ind}^{\varphi}(\eta))}
\;\;\cn{ and }
\;\;\textcolor{spec}{E_2} := 
\textcolor{spec}{\cn{up}^b(A_1[\overline{j}],\cn{vector}(A_1[\overline{j}]),\cn{ind}^{\varphi}(\eta'))}
\end{array}
\]
}

{\small
\[\hspace{-0.5em}
\begin{array}{ll}
    \textcolor{spec}{
\Big{ \{ }
\begin{array}{l} 
P[E_1\{\forall \overline{j}\in\cn{range}(\slen{E_1}).\overline{j}\mapsto \cn{up}^a(\cn{up}^b(E_1[\overline{j}],\cn{vector}(E_1[\overline{j}]),\cn{ind}^{\varphi}(\eta'))
,\frac{\cn{amp}(E_1)}{r})\}/A]
\\[0.2em]\qquad[\sqrt{\sum_{\overline{j}\in\cn{range}(\slen{E_1})}\slen{\cn{amp}(E_1)}^2}/r]
[\cn{vector}(A[\overline{i}])[\cn{ind}^{\varphi}(\eta)]/x]
\end{array}
\Big{ \} } }
&
\textbf{Indexing}
\\[0.2em]
\quad \cassign{x}{\cn{vector}(A[\overline{i}])[\cn{ind}^{\varphi}(\eta)]};
&
\textbf{Indexing}
\\[0.2em]
    \textcolor{spec}{
\Big{ \{ }
\begin{array}{l} 
P[E_1\{\forall \overline{j}\in\cn{range}(\slen{E_1}).\overline{j}\mapsto \cn{up}^a(\cn{up}^b(E_1[\overline{j}],\cn{vector}(E_1[\overline{j}]),\cn{ind}^{\varphi}(\eta'))
\\[0.2em]\qquad
,\frac{\cn{amp}(E_1)}{r})\}/A][\sqrt{\sum_{\overline{j}\in\cn{range}(\slen{E_1})}\slen{\cn{amp}(E_1)}^2}/r]
\end{array}
\Big{ \} } }
\\[0.2em]
\quad \cassign{A_1}{\cn{filterN}(A,\lambda y.y = x,\cn{ind}^{\varphi}(\eta))};
&
\textbf{FilterN}
\\[0.2em]
    \textcolor{auto}{
\bil 
P[A_1\{\forall \overline{j}\in\cn{range}(\slen{A_1}).\overline{j}\mapsto \cn{up}^a(E_2,\frac{\cn{amp}(A_1)}{r})\}/A]
[\sqrt{\sum_{\overline{j}\in\cn{range}(\slen{A_1})}\slen{\cn{amp}(A_1)}^2}/r]
\bir }
\\[0.4em]
\Rightarrow
    \textcolor{spec}{
\Big{ \{ }
(
P[A_3\{\forall \overline{j}\in\cn{range}(\slen{A_1}).\overline{j}\mapsto \cn{up}^a(E_2,\frac{\cn{amp}(E_2)}{r})\}/A]
)[\sqrt{\sum_{\overline{j}\in\cn{range}(\slen{A_1})}\slen{\cn{amp}(E_2)}^2}/r]
\Big{ \} } }
&
    \textbf{Pre-Strength}
\\[0.2em]
\quad
\cassign{\overline{t}}{\slen{A_1}};
    &
    \textbf{Len}
\\[0.2em]
    \textcolor{spec}{
\Big{ \{ }
(
P[A_3\{\forall \overline{j}\in\cn{range}(\overline{t}).\overline{j}\mapsto \cn{up}^a(E_2,\frac{\cn{amp}(E_2)}{r})\}/A]
)[\sqrt{\sum_{\overline{j}\in\cn{range}(\overline{t})}\slen{\cn{amp}(E_2)}^2}/r]
\Big{ \} } }
\\[0.2em]
\quad
\cassign{A_2}{\cn{locateN}(A_1,\overline{t},\cn{ind}^{\varphi}(\eta'))};
    &
    \textbf{LocateN}
    \\[0.2em]
    \textcolor{spec}{
\Big{ \{ }
(
P[A_3\{\forall \overline{j}\in\cn{range}(\overline{t}).\overline{j}\mapsto \cn{up}^a(A_2[\overline{j}],\frac{\cn{amp}(A_2[\overline{j}])}{r})\}/A]
)[\sqrt{\sum_{\overline{j}\in\cn{range}(\overline{t})}\slen{\cn{amp}(A_2[\overline{j}])}^2}/r]
\Big{ \} } }
\\[0.2em]
\quad\cassign{r}{\cn{sqrtSum}(A_2,\overline{t})};
&
\textbf{SqrtSum}
\\[0.2em]
\textcolor{spec}{\bil 
P[A_3\{\forall \overline{j}\in\cn{range}(\overline{t}).\overline{j}\mapsto \cn{up}^a(A_2[\overline{j}],\frac{\cn{amp}(A_2[\overline{j}])}{r})\}/A]
\bir}
\\[0.3em]
\quad
\cassign{A_3}{\cn{mapN}(A_2,\overline{t}, \lambda (y,u) . (\frac{y}{r},u))};
&
\textbf{MapN}
\\[0.2em]
\textcolor{spec}{\bil 
P[A_3/A]
\bir}
\\[0.2em]
\quad\cassign{A}{A_3}
&
\textbf{Assign}
\\[0.2em]
\textcolor{spec}{\bil 
P
\bir}
\end{array}
\]
}
 \section{Transformation for QHL}\label{appx:qhl}

We show below the transformation for quantum Hoare logic (QHL) augmented with classical variables.
A chain of QHL work \cite{10.1145/3456877,YING2026105417,10.1145/3808273,ying2024verificationrecursivelydefinedquantum} combines classical variables with quantum data in a QHL proof rule. In this paper, we call this chain CQHL, and we describe the proof rule as follows.

{
\[
\qhltriple{A,\ket{\varphi}}{C}{B,\ket{\psi}}
\]
}

The rule suggests that the execution of the program $C$ under a state described by the pre-condition $\textcolor{purple}{\{A,\ket{\varphi}\}}$ results in a post-state described by the post-condition $\textcolor{purple}{\{B,\ket{\psi}\}}$, where $A$ and $B$ are pre- and post-conditions on the classical state, while $\ket{\varphi}$ and $\ket{\psi}$ are quantum data predicates written in the Dirac notation of a quantum datum.

CQHL distinguishes classical variables ($x$ or $y$) from quantum variables ($q$ or $\overline{q}$).
Quantum variables can be single-qubit variables ($q$) or quantum array variables ($\overline{q}$).
A quantum array variable can be written in the form $q[x,y)$ where $x$ and $y$ are two indices in the array $q$, i.e., we have $x\le y$ and $q[x,y)=q[x]\sqcupplus...\sqcupplus q[y-1]$. Quantum array variables are indexed like classical arrays.

CQHL labels kets with quantum variable subscripts, e.g., $\textcolor{purple}{\ket{d}_q}$ refers to a basis-ket state in the qubit $q$ position and $\textcolor{purple}{\ket{\psi}_{\overline{q}}}$ refers to a quantum datum $\ket{\psi}$ in the quantum array $\overline{q}$. 

{
\begin{mathpar}
\inferrule[]{}{\qhltriple{A,U^{\dag}{(\ket{\varphi}_{\overline{q}})\ket{\psi}}}{\qass{\overline{q}}{U}}{A,{\ket{\varphi}_{\overline{q}}\ket{\psi}}}}
\end{mathpar}
}

We show an example proof triple for lines 1-3, based on the proof rule for quantum unitary operation above.

{\small
\begin{mathpar}
\inferrule[]{\qhltriple{\cn{true}, \cn{H}^{\dag}(\frac{1}{\sqrt{2}}(\ket{0}_{q_1}+\ket{1}_{q_1}))\ket{0}_{q_2}\ket{0}_{q_3}}{\qass{q_1\;}{\;\cn{H}}}{\cn{true}, \frac{1}{\sqrt{2}}(\ket{0}_{q_1}+\ket{1}_{q_1})\otimes\ket{0}_{q_2}\ket{0}_{q_3}}}
{\qhltriple{\cn{true}, \ket{0}_{q_1}\ket{0}_{q_2}\ket{0}_{q_3}}{\qass{q_1\;}{\;\cn{H}}}{\cn{true}, \frac{1}{\sqrt{2}}(\ket{0}_{q_1}+\ket{1}_{q_1})\otimes\ket{0}_{q_2}\ket{0}_{q_3}}}
\end{mathpar}
}

The triple applies a Hadamard operation.
Here, we utilize the CQHL consequence rule to rewrite the bottom pre-condition to the upper one, where the inverse of $\cn{H}$ gate ($\cn{H}^{\dag}$) is $\cn{H}$ itself, and applying the \cn{H} gate to ${\frac{1}{\sqrt{2}}(\ket{0}_{q_1}+\ket{1}_{q_1})}$ yields $\ket{0}$.

\begin{figure*}[h]
{\hspace*{-1em}
  \small
  \[\begin{array}{llcl} 
     \text{Classical Arithmetic Expression} & e & ::= & v \mid x \mid e + e \mid ...
     \\[0.3em]
     \text{Boolean Expression} & B & ::= & \cn{true} \mid \cn{false} \mid e = e \mid \neg B \mid ...
  \\[0.3em]
  \text{Unitary Operations} & U & ::= & \cn{H} \mid \cn{X} \mid \slambda{\overline{q}}{\ket{f(\overline{q})}} \mid ...
    \\[0.3em]
     \text{Program Expression} & C & ::= & \sskip \mid \qass{\overline{q}}{U} \mid \sseq{C}{C}\mid \qifb{q}{C}{C} \mid \\[0.3em]
      & & \mid & \cif{b}{C}{C} \mid \cassign{x}{e} \mid f(\overline{e})
    \end{array}
  \]
}
  \caption{CQHL Syntax.}
  \label{fig:cqhl-syntax}
\end{figure*}

The CQHL syntax is in \Cref{fig:cqhl-syntax}: skip ($\sskip$), quantum unitary application $\qass{\overline{q}}{U}$, sequencing, classical conditional ($\cif{b}{C}{C}$), quantum conditional ($\qifb{q}{C}{C}$), classical assignment ($\cassign{x}{e}$), and function call ($f(\overline{e})$).
In the language, $e$ and $B$ refer to classical arithmetic and Boolean expressions, respectively.
Boolean expressions ($B$) consist of Boolean values (\cn{true} and \cn{false}), equality ($e=e$), negation ($\neg B$), and other operations.

A unitary operation is either a single-qubit gate ($\cn{H}$ or $\cn{X}$) or a quantum oracle $\slambda{\overline{q}}{\ket{f(\overline{q})}}$, which maps a quantum datum $\sum_j \alpha_j \ket{j}$ on $\overline{q}$ to $\sum_j \alpha_j \ket{f(j)}$.

We define a valid program $C$ to have no free classical variables, $FV(C) = \emptyset$, i.e., classical variables are all bound by a let-binding ($\sexp{x}{e}{C}$).
In CQHL, a quantum variable $q$ refers to a single qubit and a quantum array $\overline{q}$ to a group of qubits.
The range $[e_1,e_2)$ of the array $\overline{q}$ is written $\overline{q}[e_1,e_2)$, where $e_1$ and $e_2$ are classical expressions.

To transform CQHL to classical Hoare logic, we first translate CQHL to \qafny and then apply the translation of \Cref{sec:trans} to the result.

We first translate a CQHL predicate $\textcolor{spec}{A,\ket{\varphi}}$ to a \qafny state predicate.
Here, $A$ is a classical predicate, which translates directly.
In CQHL, the quantum predicate $\ket{\varphi}$ is written in Dirac notation as $\sum_j z_j \ket{c_j}_{\overline{q}}$, where the qubit array variable $\overline{q}$ gives the qubit locations of the state. We translate the predicate as follows.

{
\[
\textcolor{spec}{A \wedge \overline{q} : \tch{1} \mapsto \sum_j z_j \ket{c_j}}
\]
}

To translate a CQHL program to \qafny, note that classical assignments, classical conditionals, function calls, and sequencing are classical operations and map directly to their \qafny counterparts.
Only the two quantum operations, $\qass{\overline{q}}{U}$ and $\qifb{q}{C}{C}$, need care.

The translation of $\qass{\overline{q}}{U}$ depends on $U$. If $U$ is an oracle operation, we translate it to the oracle operation of \qafny; if $U$ is the Hadamard gate $\cn{H}$, we translate it to the Hadamard operation of \qafny.
Since $\cn{H}$ together with oracle operations is universal, any other gate can be decomposed into a sequence of these two operations.

A CQHL quantum conditional $\qifb{q}{C_1}{C_2}$ translates to a sequence of two \qafny quantum conditionals, as shown below.

{
\[
\qif{q[0]}{C_2} ; \qif{\neg q[0]}{C_1}
\]
}

This completes the translation of CQHL to \qafny; the translation of \Cref{sec:trans} then takes the resulting \qafny triple to classical Hoare logic.
 \end{document}